\documentclass[letterpaper,11pt]{article}

\usepackage{fontawesome}

\usepackage{xcolor}
\usepackage[numbers]{natbib}
\usepackage{url}
\usepackage{xspace}

\newcommand{\infseqn}[1]{(#1)_{n \in \NN}}
\newcommand{\infseqnz}[1]{(#1)_{n \in \NN_0}}
\def\ddefloop#1{
  \ifx
    \ddefloop#1
  \else\ddef{#1}
    \expandafter
    \ddefloop
  \fi}
\def\ddef#1{\expandafter\def\csname #1cal\endcsname{\ensuremath{\mathcal{#1}}}}

\ddefloop
ABCDEFGHIJKLMNOPQRSTUVWXYZ
\ddefloop

\def\ddefloop#1{
  \ifx
    \ddefloop#1
  \else\ddef{#1}
    \expandafter
    \ddefloop
  \fi}
\def\ddef#1{\expandafter\def\csname #1#1\endcsname{\ensuremath{\mathbb{#1}}}}

\ddefloop
ABCDEFGHIJKLMNOPQRSTUVWXYZ
\ddefloop

\newcommand{\tth}{\text{th}}
\newcommand{\dd}{\mathrm{d}}

\newcommand{\1}{\mathds{1}}

\renewcommand{\P}{\mathsf P}
\newcommand{\Q}{\mathsf Q}
\newcommand{\nQ}{{n, \Q}}
\newcommand{\tQ}{{t, \Q}}
\newcommand{\jQ}{{j, \Q}}
\newcommand{\tauQ}{{\tau, \Q}}
\newcommand{\Pin}{{\P \in \Pcal}}
\newcommand{\Qin}{{\Q \in \Qcal}}

\newcommand{\Pe}{\Pcal\text{-}e}

\newcommand{\brackalpha}{{(\alpha)}}

\newcommand{\brackdelta}{{(\delta)}}

\newcommand{\CO}{{\mathrm{CO96}}}
\newcommand{\nto}{{n \to \infty}}
\newcommand{\alphato}{{\alpha \to 0^+}}

\usepackage{diagbox}
\usepackage{makecell}
\usepackage{booktabs}
\usepackage{tikz,tikz-3dplot}
\usepackage{amsmath}
\usepackage{mathtools}
\usepackage{bbm}
\usepackage{amsfonts}
\usepackage{amssymb}
\usepackage{amsthm}
\usepackage{thmtools}
\usepackage{url,ifthen}
\usepackage[shortlabels]{enumitem}
\usepackage{srcltx}
\usepackage{dsfont}
\usepackage{multirow}
\usepackage{boxedminipage}
\usepackage[margin=1.1in]{geometry}
\usepackage{nicefrac}
\usepackage{xspace}
\usepackage{graphicx}
\usepackage{color}
\usepackage{colortbl}
\usepackage{setspace}
\usepackage{pgfplots}
\pgfplotsset{compat=1.12}
\usepackage{algorithm,algpseudocode}
\usepackage[algo2e]{algorithm2e}
\RestyleAlgo{ruled}
\usepackage{cancel}
\usepackage{framed}
\usepackage{tcolorbox}

\usepackage{xcolor}
\definecolor{DarkGreen}{rgb}{0.1,0.5,0.1}
\definecolor{DarkRed}{rgb}{0.5,0.1,0.1}
\definecolor{DarkBlue}{rgb}{0.1,0.1,0.5}
\definecolor{LightBlue}{RGB}{69, 123, 157}
\definecolor{Gray}{rgb}{0.2,0.2,0.2}
\definecolor{AquaGreen}{RGB}{20, 116, 111}
\definecolor{Greenish}{RGB}{132, 169, 140}

\definecolor{c1}{RGB}{38, 70, 83}
\definecolor{c2}{RGB}{42, 157, 143}
\definecolor{c3}{RGB}{233, 196, 106}
\definecolor{c5}{RGB}{231, 111, 81}
\definecolor{c4}{RGB}{244, 162, 97}

\usepackage{listings}
\lstdefinestyle{mystyle}{
    commentstyle=\color{DarkBlue},
    keywordstyle=\color{DarkRed},
    numberstyle=\tiny\color{Gray},
    stringstyle=\color{Greenish},
    basicstyle=\footnotesize,
    breakatwhitespace=false,         
    breaklines=true,                 
    captionpos=b,                    
    keepspaces=true,                 
    numbers=left,                    
    numbersep=5pt,                  
    showspaces=false,                
    showstringspaces=false,
    showtabs=false,                  
    tabsize=2
}
\usepackage[small]{caption}
\usepackage{subcaption}
\usepackage[pdftex]{hyperref}
\usepackage[capitalise,noabbrev]{cleveref}
\crefname{assumption}{Assumption}{Assumptions}
\usepackage{autonum} %
\hypersetup{
    unicode=false,          %
    pdftoolbar=true,        %
    pdfmenubar=true,        %
    pdffitwindow=false,      %
    pdfnewwindow=true,      %
    colorlinks=true,       %
    linkcolor=AquaGreen,          %
    citecolor=Greenish,        %
    filecolor=DarkRed,      %
    urlcolor=DarkBlue,          %
    pdftitle={},
    pdfauthor={},
}

\def\draft{1}

\def\submit{0}

\ifnum\draft=1 %
    \def\ShowAuthNotes{1}
\else
    \def\ShowAuthNotes{0}
\fi

\ifnum\submit=1
\newcommand{\forsubmit}[1]{#1}
\newcommand{\forreals}[1]{}
\else
\newcommand{\forreals}[1]{#1}
\newcommand{\forsubmit}[1]{}
\fi

\ifnum\ShowAuthNotes=1
\newcommand{\authnote}[2]{{ \footnotesize \bf{\color{DarkRed}[#1's Note:
{\color{DarkBlue}#2}]}}}

\newcommand{\rjsnote}[1]{{\color{purple}[Ricardo: #1]}}

\newcommand{\iwsnote}[1]{{\color{blue}[Ian: #1]}}
\newcommand{\todo}[1]{{\color{red}[To Do: #1]}}

\else
\newcommand{\authnote}[2]{}
\newcommand{\rjsnote}[1]{}
\newcommand{\iwsnote}[1]{}
\newcommand{\mjnote}[1]{}
\newcommand{\todo}[1]{}
\fi

\newtheorem{theorem}{Theorem}[section]
\newtheorem{remark}[theorem]{Remark}
\newtheorem{lemma}[theorem]{Lemma}
\newtheorem{corollary}[theorem]{Corollary}
\newtheorem{proposition}[theorem]{Proposition}
\newtheorem{fact}[theorem]{Fact}

\newtheorem{assumption}{Assumption}

\newtheorem{definition}[theorem]{Definition}%

\newtheorem*{definition*}{Definition}
\newtheorem*{proposition*}{Proposition}

\newtheorem{example*}{Example}
\newtheorem{example}[theorem]{Example}

\newtheoremstyle{example_contd}
{\topsep} {\topsep}%
{}%
{}%
{\bfseries}%
{.}%
{1em}%
{\thmname{#1} \thmnumber{ #2}\thmnote{#3} (continued)}%

\theoremstyle{example_contd}

\newcommand{\chapterref}[1]{\hyperref[ch:#1]{Chapter~\ref{ch:#1}}}

\newcommand{\claimref}[1]{\hyperref[claim:#1]{Claim~\ref{claim:#1}}}

\newcommand{\corollaryref}[1]{\hyperref[cor:#1]{Corollary~\ref{cor:#1}}}

\newcommand{\definitionref}[1]{\hyperref[def:#1]{Definition~\ref{def:#1}}}

\newcommand{\equationref}[1]{\hyperref[eq:#1]{Equation~\ref{eq:#1}}}

\newcommand{\factref}[1]{\hyperref[fact:#1]{Fact~\ref{fact:#1}}}

\newcommand{\figureref}[1]{\hyperref[fig:#1]{Figure~\ref{fig:#1}}}

\newcommand{\tableref}[1]{\hyperref[tab:#1]{Table~\ref{tab:#1}}}

\newcommand{\itemref}[1]{\hyperref[item:#1]{Item~(\ref{item:#1})}}

\newcommand{\lemmaref}[1]{\hyperref[lem:#1]{Lemma~\ref{lem:#1}}}

\newcommand{\propref}[1]{\hyperref[prop:#1]{Proposition~\ref{prop:#1}}}

\newcommand{\propositionref}[1]{\hyperref[prop:#1]{Proposition~\ref{prop:#1}}}

\newcommand{\remarkref}[1]{\hyperref[rem:#1]{Remark~\ref{rem:#1}}}

\newcommand{\sectionref}[1]{\hyperref[sec:#1]{Section~\ref{sec:#1}}}

\newcommand{\theoremref}[1]{\hyperref[thm:#1]{Theorem~\ref{thm:#1}}}

\newcommand{\assumptionref}[1]{\hyperref[ass:#1]{Assumption~\ref{ass:#1}}}

\newcommand{\exampleref}[1]{\hyperref[exmp:#1]{Example~\ref{exmp:#1}}}

\newcommand{\algoref}[1]{\hyperref[algo:#1]{Algorithm~\ref{algo:#1}}}

\usepackage[utf8]{inputenc}
\usepackage[T1]{fontenc}
\usepackage{microtype}
\usepackage[ttdefault=true]{AnonymousPro}

\newcommand{\Esymb}{\mathbb{E}}

\newcommand{\E}{\Esymb}

\newcommand{\Ex}[1]{\E\left[#1\right]}

\newcommand{\widebar}[1]{\overline{#1}}

\newcommand{\flatfrac}[2]{#1/#2}

\newcommand{\mper}{\,.}
\newcommand{\mcom}{\,,}

\newcommand{\cA}{{\cal A}}

\newcommand{\cF}{{\cal F}}

\newcommand{\cH}{{\cal H}}

\newcommand{\cO}{{\cal O}}
\newcommand{\cP}{{\cal P}}
\newcommand{\cQ}{{\cal Q}}
\newcommand{\cR}{{\cal R}}

\newcommand{\cW}{{\cal W}}

\newcommand{\paren}[1]{(#1 )}
\newcommand{\Paren}[1]{\left(#1 \right )}

\newcommand{\Brac}[1]{\left[#1 \right]}

\newcommand{\Set}[1]{\left\{#1\right\}}

\newcommand{\Abs}[1]{\left\lvert#1\right\rvert}

\newcommand{\R}{\mathbb{R}}

\newcommand{\N}{\mathbb N}

\usepackage{bm}

\newcommand{\trans}{\top}

\newcommand{\Ind}[1]{\mathds{1}\Set{#1}}

\newcommand{\independent}{\protect\mathpalette{\protect\independenT}{\perp}}
\def\independenT#1#2{\mathrel{\rlap{$#1#2$}\mkern2mu{#1#2}}}

\newcommand{\ignore}[1]{}

\DeclareMathOperator*{\argmax}{arg\,max}

\renewcommand{\epsilon}{\varepsilon}

\newcommand{\eps}{\epsilon}

\newcommand{\remove}[1]{}

\newcommand{\ate}{\psi}
\newcommand{\hate}{\widehat{\ate}}
\newcommand{\testate}{\delta}

\newcommand{\svert}{~\vert~}
\newcommand{\smvert}{~\middle\vert~}
\newcommand{\iid}{{\mathrm{i.i.d}}}
\newcommand{\iidtext}{i.i.d.}

\newcommand{\logp}[1]{\log\Paren{#1}}
\newcommand{\logs}[1]{\log\Set{#1}}

\newcommand{\bracka}{{(a)}}

\newcommand{\brackzero}{{(0)}}
\newcommand{\brackd}{{(d)}}

\newcommand{\UCB}{\mathrm{UCB}}
\newcommand{\LCB}{\mathrm{LCB}}

\newcommand{\exps}[1]{\exp\Set{#1}}
\newcommand{\ceil}[1]{\left\lceil#1\right\rceil}

\newcommand{\brackUCBastar}{{(\text{UCB-$\optarm$})}}
\newcommand{\brackLCBa}{{(\text{LCB-$a$})}}
\newcommand{\brackDeltaa}{{(\Delta_a)}}

\newcommand{\bE}{\mathbf{E}}
\newcommand{\bet}{\lambda}
\newcommand{\bbet}{\boldsymbol{\lambda}}

\newcommand{\optbbet}{\bbet_\Q}
\newcommand{\optarm}{a_{\Q}}

\newcommand{\SPRUCEtext}{\texttt{SPRUCE}}
\newcommand{\SPRUCE}{\hyperref[algorithm:spruce]{\texttt{SPRUCE}}}

\newcommand{\option}{\textsc{option}}

\newcommand{\MAB}{\mathrm{Alloc}}
\newcommand{\Port}{\mathrm{Port}}

\newcommand{\UP}{\mathrm{UP}}

\newcommand{\Bern}{\mathrm{Bern}}

\newcommand{\tW}{\widetilde{W}}
\newcommand{\tbbet}{\tilde{\bbet}}
\newcommand{\tA}{\widetilde{A}}

\newcommand{\barW}{\widebar{W}}

\newcommand{\Yobs}{Y^{\mathrm{obs}}}
\newcommand{\Prct}{\P_{\mathrm{RCT}}}
\newcommand{\Qrct}{\Q_{\mathrm{RCT}}}

\newcommand{\simplex}{\triangle_{d}}

\renewcommand{\emptyset}{\varnothing}

\algnewcommand{\Inputs}[1]{%
  \State \textbf{Inputs:}
  \Statex \hspace*{\algorithmicindent}\parbox[t]{.8\linewidth}{\raggedright #1}
}

\usepackage{authblk}

\title{Multi-Armed Sequential Hypothesis Testing by Betting}
\author{
  Ricardo J.~Sandoval$^{*\dagger}$,
  Ian Waudby-Smith$^{*\dagger}$, and Michael I.~Jordan$^{\dagger\ddagger}$\\
  $^\dagger$University of California, Berkeley\\
  $^\ddagger$\'Ecole Normale Sup\'erieure \& Inria, Paris
}
 
\date{\today}

\begin{document}

\maketitle
\def\thefootnote{*}\footnotetext{Equal contribution}
\def\thefootnote{\arabic{footnote}}

\begin{abstract}
  We consider a variant of sequential testing by betting where, at each time
step, the statistician is presented with multiple data sources (arms) and obtains data by choosing one of the arms. We consider the composite global null hypothesis $\Pcal$ that
\emph{all} arms are null in a certain sense (e.g.~all dosages of a treatment are
ineffective) and we are interested in rejecting $\Pcal$ in favor of a composite
alternative $\Qcal$ where \emph{at least one} arm is non-null (e.g.~there exists
an effective treatment dosage). We posit an optimality desideratum that we
describe informally as follows: even if several arms are non-null, we seek
$e$-processes and sequential tests  whose performance are as strong as the ones
that have oracle knowledge about which arm generates the most evidence against
$\Pcal$. Formally, we generalize notions of log-optimality and expected
rejection time optimality to more than one arm, obtaining matching lower and
upper bounds for both. A key technical device in this optimality analysis is a modified upper-confidence-bound-like algorithm for unobservable but sufficiently ``estimable'' rewards. In the design of this algorithm, we derive nonasymptotic concentration inequalities for optimal wealth growth rates in the sense of Kelly [1956]. These may be of independent interest.

\end{abstract}

\section{Introduction}\label{section:introduction}

Consider a statistical hypothesis testing setting where it is possible to
sequentially sample data from one of many arms, and it is of interest to quickly
gather evidence that at least one of those arms is non-null in some way. As an
example, suppose that a pharmaceutical company is interested in testing
whether a new treatment is effective, but there are many variations of that
treatment that could be administered---such as taking a drug at a certain dose,
at a certain time of day, in conjunction with or without an exercise routine,
and so on. In such cases, the goal may be to determine if \emph{any} combination
of interventions alongside the new treatment is significantly more effective
than that same combination but \emph{without} the treatment. Readers familiar with the
multiple testing literature will recognize the previous description as an
instantiation of ``global null'' hypothesis testing and we elaborate on a formal
description thereof in \cref{section:prelim-multi-armed}.
The setup we are considering here poses additional statistical challenges
due to the presence of a sequential data collection protocol where \emph{partial
information} arises. That is, when the statistician decides to collect data
from a particular arm (such as administering a drug at a particular dose at a
certain time of day), the counterfactual outcomes for that unit under any other
arm (e.g., with a different dose at another time of day) are never observed. Additionally, despite the partial information that arises, we will aim to develop test statistics
that enjoy the same asymptotic behavior as one that is constructed using \emph{a priori} knowledge of which arm
generates the most evidence against the global null. As such, this paper finds itself at
the junction of sequential hypothesis testing, multi-armed
bandits, and causal inference. We review the necessary background on these areas
as needed. The paper proceeds as follows:

\begin{enumerate}
\item In \cref{section:prelim-testing by betting}, we review (classical,
    single-arm) sequential hypothesis testing and testing by betting, as well as
    some optimality criteria used to evaluate the resulting test statistics.
    In \cref{section:prelim-multi-armed}, we formally outline the multi-arm data
    collection protocol that will be used throughout the
    paper, and describe the family of test
    statistics and sequential tests that will be studied in this paper. This
    family of test statistics reduces to at least three special cases that are
    commonly studied in the testing by betting literature when the arm set is
    taken to be a singleton.
\item \cref{section:multi-armed} contains a definition that generalizes the
    notion of optimality reviewed in \cref{section:prelim-testing by betting} to
    the multi-armed case. We also present the main algorithm of this work termed
    ``Sublinear Portfolio Regret Upper Confidence Estimation'' (\SPRUCEtext{}) in
    \cref{algorithm:spruce} and state its oracle-like log-optimality properties
    in \cref{theorem:log-optimality of spruce}.
\item In \cref{section:rejection times}, we provide a multi-armed generalization of a qualitatively different (but mathematically related) criterion of
    optimality that considers
    the expected value of the first time at which a sequential test rejects the
    null hypothesis. We provide a lower bound on the expected rejection time
    and demonstrate that \SPRUCE{} yields an expected rejection time matching
    this lower bound with exact constants in the high-confidence regime (i.e., as
    the type-I error $\alpha \to 0^+$). 
\item In \cref{section:ingredients}, we highlight that \SPRUCE{} has
    several nontrivial differences from typical upper confidence bound-type
    algorithms. Indeed, the setting considered in this paper forces a departure
    from the typical assumptions made in the multi-armed bandit literature.
    As a result, we develop some bespoke concentration inequalities that enable
    us to prove that \SPRUCE{} has the same qualitative behavior as classical
    upper
    confidence bound algorithms, namely having logarithmic allocation regret. The focus of
    \cref{section:ingredients} is to outline some of these constructions as results that may be of independent interest.
\item We instantiate a motivating example in \cref{section:treatment effects} 
    and show how one can use \SPRUCE~to test for the existence of a positive
    average treatment effect. As a corollary to the results from the preceding
    sections we state the log-optimality of \SPRUCE~in this setting and illustrate this result empirically through simulations.
\end{enumerate}

\paragraph{Notation.} Throughout, we fix a filtered measurable space $(\Omega, \Fcal)$ so that for a probability measure $\P$, the triplet $(\Omega, \Fcal, \P)$ is a filtered probability space. For a stochastic process $\infseqn{W_n}$ on $(\Omega,\Fcal)$, we will often write $W$ as shorthand. For a probability measure $\P$, we write the expectation of a random variable $X_1$ as
\begin{equation}
    \EE_\P [X_1] = \int x \dd \P(X \leq x)\mcom
\end{equation}
and we abuse notation slightly and write $\PP_\P(B) \coloneqq \EE_\P[\1_B]$ for the probability of an event $B \in \Fcal$ so that it reads as ``the probability of $B$ under $\P$''.

\section{Preliminaries}

In this section, we review some key aspects of testing by betting (\cref{section:prelim-testing by betting}), log-optimality of those tests (\cref{section:prelim-logopt}), the multi-armed data collection protocol that we will employ throughout the paper (\cref{section:prelim-multi-armed}), and global null hypothesis testing (\cref{section:prelim-global null}). Finally, in \cref{section:prelim-test supermartingales and e-processes} we present the class of test statistics that are our focus.

\subsection{Sequential hypothesis testing by betting}\label{section:prelim-testing by betting}

Testing by betting is a research program whose roots lie in the early work of Abraham Wald 
\citep{wald1945sequential,wald1947sequential} as well as Herbert Robbins and colleagues 
\citep{darling1967confidence,robbins1968iterated,robbins1970statistical,robbins1974expected}, 
and which has received considerable attention in the past few decades. For overviews, see the review paper of \citet{ramdas2023game}, the books of
\citet{shafer2005probability,shafer2019game} and \citet{ramdas2025hypothesis}, and other related papers \citep{shafer2011test,vovk2021values,shafer2021testing,waudby2024estimating,grunwald2019safe}.
In what follows we briefly review testing by betting with a particular focus on the aspects
that are directly relevant to the present paper.

Let $\infseqn{Y_n}$ be a sequence of \iidtext{} random variables on the filtered probability space $(\Omega, \Fcal)$ where $\Fcal
\equiv \infseqnz{\Fcal_n}$ is the filtration generated by $\infseqn{Y_n}$, with
$\Fcal_0 = \{\Omega,\emptyset\}$ being the trivial sigma-algebra. Suppose 
that it is of interest to test some null hypothesis---represented by a collection 
of distributions $\Pcal$---against an alternative hypothesis $\Qcal$ such that 
$\Pcal \cap \Qcal = \varnothing$. In testing by betting, one aims to 
fix a desired type-I error rate $\alpha \in (0, 1)$ and construct a
binary-valued $\Fcal$-adapted map $\infseqn{\phi_n^\brackalpha}$ referred to as
the ``level-$\alpha$ sequential test,'' where $\phi_n^\brackalpha = 1$ should be
interpreted as ``reject'' and $\phi_n^\brackalpha = 0$ as ``do not reject.''
The key guarantee that a level-$\alpha$ \emph{sequential} test must satisfy is
type-I error control at stopping times, formally meaning that
\begin{equation}\label{eq:type-I-err-control}
    \sup_\Pin \PP_{\P}\left ( \exists n \in \NN : \phi_n^\brackalpha = 1 \right ) \leq \alpha 
    \quad\text{or equivalently}\quad \sup_\Pin \PP_{\P} \left ( \phi_\tau^\brackalpha = 1 \right ) \leq \alpha,
\end{equation}
where $\tau$ is an arbitrary $\Fcal$-stopping time, meaning that $\{ \tau = n \} \in \Fcal_n$ for any $n \in \NN$.
One way to obtain such a sequential test is to devise so-called \emph{test
supermartingales} which are defined formally as follows.

\begin{definition}[Test supermartingales]\label{definition:test supermartingale}
  Let $\infseqn{\barW_n}$ be an
  $\Fcal$-adapted process, meaning that $\barW_n$ is $\Fcal_{n}$-measurable for each
  $n \in \NN$. For a given $\Pin$, we say that $\infseqn{\barW_n}$ is a \emph{test
  $\P$-supermartingale} if
  \begin{enumerate}
  \item $\infseqn{\barW_n}$ is a $\P$-supermartingale meaning that $\EE_\P[\barW_n \mid
      \Fcal_{n-1}] \leq \barW_{n-1}$ $\P$-almost surely,
  \item for every $n \in \NN$, $\barW_n \geq 0$ $\P$-almost surely, and
    \item $\EE_\P[\barW_1] \leq 1$.
  \end{enumerate}
  We say that $\infseqn{\barW_n}$ is a test $\Pcal$-supermartingale if it is a 
  test $\P$-supermartingale for every $\Pin$.
\end{definition}
Sequential tests can similarly be obtained through $e$-processes which should be
viewed as generalizations of test supermartingales for the purposes of
sequential testing since they satisfy a key inequality due to
\citet{ville1939etude} which will be discussed shortly. 

\begin{definition}[$e$-processes]
  We say that a process $\infseqn{W_n}$ is a $\Pe$-\emph{process} if it is
  nonnegative, adapted to the filtration $\Fcal$, and is a $\P$-almost sure
  lower bound for a test $\P$-supermartingale for each $\Pin$. That is, for
  every $n \in \NN$ and $\Pin$, there exists a test $\P$-supermartingale
  $\infseqn{\barW_n^{(\P)}}$ such that 
  \begin{equation}
    \PP_{\P} \left ( W_n \leq \barW_n^{(\P)} \right ) = 1. 
  \end{equation}
\end{definition}

\begin{remark}\label{remark:eproc composite}
  The superscript in $\barW_n^{(\P)}$ is used to emphasize that the
  upper-bounding test $\P$-supermartingale could in principle be different for each
  $\P\in \Pcal$, and $W_n$ would still form a well-defined
  $\Pe$-process. See \citet{wasserman2020universal} and
  \citet{ramdas2022testing} for explicit examples of such $\Pe$-processes for
  composite nulls. However, in the present paper, the $\Pe$-processes we
  construct will be upper-bounded by the same test $\cP$-supermartingale for
  each $\Pin$, for which reason we suppress the use of the superscript going forward.
\end{remark}
Clearly all test $\Pcal$-supermartingales are $\Pe$-processes.
The fact that \eqref{eq:type-I-err-control} can be achieved once provided access
to a $\Pe$-process follows immediately from Ville's inequality for
nonnegative supermartingales \citep{ville1939etude}. Specifically, we have that for
any $\alpha \in (0, 1)$ and process $\infseqn{\phi_n^\brackalpha}$ given by
$\phi_n^\brackalpha \coloneqq \1 \{ W_n \geq 1/\alpha \}$,
\begin{equation}
  \sup_\Pin \PP_{\P} ( \exists n \in \NN : \phi_n^\brackalpha = 1 ) \leq \sup_\Pin \PP_{\P}
  ( \exists n \in \NN : \barW_n^{(\P)} \geq 1/\alpha ) \leq \alpha
  \EE_\P[\barW_1^{(\P)}] \leq \alpha.
\end{equation}
In fact, \citet{ramdas2020admissible} demonstrate that there is a formal sense
in which thresholding a test $\P$-supermartingale at $1/\alpha$ is the only
admissible way to construct a level-$\alpha$ test. We do not dwell on the
details of this admissibility result but we will regard it as a reasonable
justification to focus exclusively on such tests for the remainder of the
paper---indeed, the majority of sequential tests are developed in exactly this
way. An incomplete list of examples can be found in 
\citep{robbins1970statistical,robbins1974expected,howard2021time,waudby2024estimating,orabona2023tight,waudby2025universal}.

\subsection{Log-optimality of e-processes}\label{section:prelim-logopt}

For a given composite null $\Pcal$, many $\Pe$-processes may exist, and it is
often of interest to claim that some are more ``powerful'' than others. Since
larger values of a $\Pe$-process signify more evidence against $\Pcal$, the
testing by betting literature designates processes that diverge to infinity with
large growth rates as powerful---this notion can be traced back to the seminal
work of \citet{kelly1956new}, \citet{breiman1961optimal}, and
\citet{long1990numeraire} with explicit connections to hypothesis testing in
several recent works
\citep{grunwald2019safe,waudby2024estimating,larsson2025numeraire,waudby2025universal,orabona2023tight,wang2025backtesting}.
Let us informally motivate the desire to maximize the growth rate. 
Without loss of generality, test supermartingales can always be written as
products of nonnegative multiplicative increments that are conditional
$\Pe$-values \citep[Proposition 3]{waudby2024estimating}, which we denote as
$\infseqn{E_n}$. Thus, under $\Qin$ and for some constant $\ell_{\Q} \in \R$,
\begin{equation}
  W_n \coloneqq \prod_{i=1}^n E_i = \exp \left \{ \sum_{i=1}^n \log(E_i) \right \} =
  \exp \left \{ n\ell_\Q + o(n) \right \} \quad\text{$\Q$-almost surely} \mcom
\end{equation}
where the final almost-sure relation follows from an application of some
appropriately chosen strong law of large numbers under correspondingly appropriate conditions on
$\infseqn{E_n}$ with respect to $\Q$.
As such, it is typically of interest to find those $\Pe$-processes with large
almost-sure limiting values of $n^{-1} \log(W_n)$ as $\nto$. The previous
informal discussion motivates the following definition of almost-sure asymptotic log-optimality \citep{waudby2025universal}.
\begin{definition}[Single-armed log-optimality]
\label{definition:log-optimality}
  Fix a null $\Pcal$ and an alternative $\Qcal$. For $\Qin$, a $\Pe$-process
  $\infseqn{W_n}$ is said to be $\Q$-\emph{log-optimal} in a class of $\Pe$-processes
  $\Wcal$ if for any $W' \in \Wcal$,
  \begin{equation}
    \liminf_{n \to \infty} \Paren{\frac{1}{n} \log(W_n) - \frac{1}{n} \log(W_n')} \geq 0 \quad\text{$\Q$-almost surely.}
  \end{equation}
  We say that $\infseqn{W_n}$ is $\Qcal$-\emph{log-optimal} if the same process is $\Q$-log-optimal for every $\Qin$.
\end{definition}

\cref{definition:log-optimality} states that a $\cP$-$e$-process $\infseqn{W_n}$ 
is log-optimal within a class $\cW$ if no other process within $\cW$ has an asymptotic almost sure growth rate that is strictly larger than $\infseqn{W_n}$. In
\cref{section:multi-armed} we generalize \cref{definition:log-optimality} to the multi-armed settings
with partial information (see \cref{definition:multi armed oracle logopt}).

In the next subsections we introduce the hypothesis testing setting and data
collection protocol that we consider throughout the
rest of this paper.  We also introduce the necessary notation and concepts that
will allow us to define a multi-armed version of the log-optimality
definition from \cref{definition:log-optimality}.

\subsection{The multi-armed data collection protocol}
\label{section:prelim-multi-armed}
Throughout the remainder of the paper, we will rely on the data collection protocol outlined in \cref{algorithm:data collection}. Informally, \cref{algorithm:data collection} states that the statistician must devise a rule for choosing an action $A_n$ at time $n \in \NN$ based on all of the data observed thus far, $(A_i, Y_i(A_i))_{i=1}^{n-1}$, and after committing to that action they observe $Y_n(A_n)$ but not $Y_n(a)$ for any $a \neq A_n$. 

\begin{algorithm}[!htbp]
    \caption{Multi-armed data collection}
    \label{algorithm:data collection}
    Nature selects a distribution $\P \in \cP \cup \cQ$.

    \For{each time step $n = 1, \dots$}{
      \begin{enumerate}
        \item Nature samples a $K$-vector of outcomes:
            $\Paren{Y_n(1), \dots, Y_n(K)} \sim \P$.
        \item The statistician chooses $A_n$ based on the history $(A_i, Y_i(A_i))_{i=1}^{n-1}$.
        \item The statistician observes $Y_n(A_n)$.
      \end{enumerate}
    }
\end{algorithm}

Despite the fact that the statistician does not observe $Y_n(a)$ for any $a \neq A_n$ at time $n$, we will frequently make use of a filtration that is based on oracle access to the full history of outcomes $(Y_i(1),\dots, Y_i(K))_{i=1}^{n-1}$, where we assume that $K \in \N$. We define this object formally now.
\begin{definition}[History-oracle filtration]
    Consider the sequence of tuples $(Y_n(1), \dots, Y_n(K))_{n \in \NN}$ generated according to Step 1 of \cref{algorithm:data collection}. With $\Hcal_0 = \{ \emptyset, \Omega \}$ being the trivial sigma-algebra, we refer to the nondecreasing sequence of sigma-algebras $\Hcal \equiv (\Hcal_n)_{n \in \NN \cup \{ 0 \} }$ given by
    \begin{equation}
        \Hcal_n \coloneqq \sigma \left ( (Y_i(1), \dots, Y_i(K))_{i=1}^n \right )
    \end{equation}
    as the \emph{history-oracle filtration}.
\end{definition}
We will often refer to objects that are $\cH_{n-1}$-measurable for each $n \in \N$ as being $\cH$-\emph{predictable}.
We remark that the history-oracle filtration $\Hcal$ is a purely mathematical object that will be central to many of our technical results; for example, test supermartingales, $e$-processes, and stopping times will all be defined with respect to $\Hcal$. However, it is important to keep in mind that the statistician observing data according to the protocol in \cref{algorithm:data collection} will only ever have access to a \emph{strict subset} of $\Hcal_{n-1}$ at time $n$ as they will only see those outcomes $(Y_i(A_i))_{i=1}^{n-1}$ along the path of actions taken $(A_i)_{i=1}^{n-1}$. It is in this sense that the statistician operates in a \emph{partial information} setting. As we will see in \cref{section:multi-armed}, it is possible to devise $e$-processes based on partial information that have the same asymptotic behavior as those with access to the entire history-oracle filtration.

\subsection{Global null hypothesis testing}
\label{section:prelim-global null}

Following \cref{algorithm:data collection}, suppose that at
each time step $n \in \NN$ the vector $(Y_n(1), \dots, Y_n(K))$ is drawn from some distribution $\P$ in a collection $\widebar{\cP}$ but the statistician only observes $Y_n(A_n)$ for some chosen $A_n \in \cA \coloneqq \Set{1, \dots, K}$.
Within this multi-armed data collection setup, we focus exclusively
on deriving test statistics for a \emph{global null} hypothesis. To elaborate, let $\widebar \cP$ be
a collection of joint distributions on $Y_1(1),\dots, Y_1(K)$. 
Let $\Pcal_1, \dots, \Pcal_K \subseteq \widebar \Pcal$ be sub-collections of
distributions so that for each $a \in \Acal$, $\Pcal_a$ encodes some property of
$Y_1(a)$---e.g., $\EE_\P[Y_1(a)] = \mu_a$; $\mu_a \in \RR$---but is uninformative
about the marginal distributions of $Y_1(a')$ for $a' \neq a$.
The \emph{global} null hypothesis consists of the intersection of
$\cP_{1}, \dots, \cP_{K}$. Formally, consider the global null $\Pcal$
and its alternative $\Qcal$ as
\begin{equation}\label{eq:prelim-global null}
    \cP \coloneqq \bigcap_{a \in \cA} \cP_{a} \quad\text{and}\quad
    \cQ \coloneqq \bigcup_{a \in \cA} \cP_{a}^{c} \mcom
\end{equation}
respectively. In words, if $\Pcal_1,\dots, \Pcal_K$ contains distributions that encode properties of $Y_1(1),\dots, Y_1(K)$, respectively, then $\Pcal$ consists of those distributions where \emph{all} of those properties hold, while $\Qcal$ consists of those distributions where \emph{at least one of them} does not hold. Relating this discussion to the one in \cref{section:prelim-logopt}, this paper considers the task of deriving $\Pe$-processes under the global null $\Pcal$ that will enjoy optimality guarantees (to be defined in \cref{section:multi-armed}) under $\Qcal$.

One may wonder whether the multi-armed setting introduces
complexity to the validity of global null hypothesis tests. As illustrated by the following proposition, this is not the case.

\begin{proposition}[Type-I error control under multi-armed data collection]\label{proposition:validity}
    Let $\Pcal$ be a global null hypothesis in the sense of \eqref{eq:prelim-global null}.
    Suppose that $\infseqn{(Y_n(1), \dots, Y_n(K))} \sim \P$ are 
    \iidtext~draws from the joint distribution $\P \in \cP$. Furthermore, suppose that $\infseqn{f_n}$ 
    is a sequence of nonnegative $\Hcal$-predictable maps from $\RR$ to $[0, \infty)$, meaning that $f_n$ is $\Hcal_{n-1}$-measurable. 
    Suppose that for every $n \in \NN$, every $\Pin$, and every $a \in \Acal$,
    \begin{equation}
        \EE_\P [f_n(Y_n(a)) \mid \Hcal_{n-1}] \leq 1\quad \P\text{-almost
        surely}\mper
    \end{equation}
    Then, for any $\Hcal$-predictable $\infseqn{A_n}$, the process $\infseqn{M_n}$ given by
    \begin{equation}
    \label{eq:validity}
        M_n \coloneqq \prod_{i=1}^n f_i(Y_i(A_i))
    \end{equation}
    is a test $\Pcal$-supermartingale. Consequently, for any $\alpha \in (0, 1)$, 
    $\phi_n^\brackalpha \coloneqq \1 \left\{ M_n \geq 1/\alpha \right\}$ forms a 
    level-$\alpha$ sequential hypothesis test for $\Pcal$.
\end{proposition}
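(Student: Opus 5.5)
We will verify the three conditions of \cref{definition:test supermartingale} directly for $M_n$ with respect to the history-oracle filtration $\Hcal$, and then deduce the testing claim from Ville's inequality as in \cref{section:prelim-testing by betting}. Adaptedness is immediate: $A_i$ is $\Hcal_{i-1}$-measurable, $f_i$ is $\Hcal_{i-1}$-measurable, and $Y_i(1),\dots,Y_i(K)$ are $\Hcal_i$-measurable. Hence $f_i(Y_i(A_i)) = \sum_{a \in \Acal} \1\{A_i = a\} f_i(Y_i(a))$ is $\Hcal_i$-measurable, and so is $M_n$. Nonnegativity holds because each $f_i$ maps into $[0,\infty)$.

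The key step is the supermartingale inequality. Fix $\Pin$ and $n$. Since $M_{n-1}$ and $A_n$ are $\Hcal_{n-1}$-measurable and everything is nonnegative, conditional expectations of products are well defined (possibly $+\infty$ a priori, which the bound rules out). We use the decomposition over arms and pull out the predictable indicators:
\begin{equation}
  \EE_\P[M_n \mid \Hcal_{n-1}] = M_{n-1} \sum_{a \in \Acal} \1\{A_n = a\}\, \EE_\P[f_n(Y_n(a)) \mid \Hcal_{n-1}] \leq M_{n-1} \sum_{a\in\Acal} \1\{A_n = a\} = M_{n-1}.
\end{equation}
This uses the hypothesis for every arm $a$, which is where the global null structure enters: $\P \in \Pcal = \bigcap_a \Pcal_a$, so the hypothesis holds simultaneously for all arms under the same $\P$, regardless of which arm the predictable rule selects. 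The same computation with $n=1$, where $\Hcal_0$ is trivial and $M_0 := 1$, gives $\EE_\P[M_1] \le 1$.

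Finally, since $M$ is a test $\P$-supermartingale for each $\Pin$, Ville's inequality gives $\PP_\P(\exists n: M_n \ge 1/\alpha) \le \alpha$ uniformly over $\Pcal$. This is exactly \eqref{eq:type-I-err-control} for $\phi_n^\brackalpha$, which is $\Hcal$-adapted. The only subtle point is justifying the interchange of the sum over arms with conditional expectation and pulling out $\Hcal_{n-1}$-measurable nonnegative factors without integrability assumptions. I would handle this with the generalized conditional expectation for nonnegative variables, or by truncating $M_{n-1}\wedge c$ and applying monotone convergence. The i.i.d. assumption is not actually needed beyond the stated conditional bound.
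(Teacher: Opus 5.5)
Your proof is correct and follows the same route as the paper: factor out $M_{n-1}$, bound the conditional expectation of the new factor by one, take $n=1$ to get $\EE_\P[M_1]\le 1$, and conclude via Ville's inequality. Your decomposition $f_n(Y_n(A_n)) = \sum_{a\in\Acal}\1\{A_n=a\}f_n(Y_n(a))$ supplies the justification the paper leaves implicit when it writes ``by assumption'' for the random arm $A_n$, and your adaptedness and nonnegative-conditional-expectation remarks tighten that argument rather than change it.
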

The conclusion one should draw from \cref{proposition:validity} is that
adaptive, multi-armed, data collection does not add substantial complication to the
single-arm case insofar as \emph{time-uniform type-I error} control under the
global null is concerned. Indeed the simplicity of the proof of
\cref{proposition:validity} (found in \cref{proof:validity}) reflects this
conclusion. However, complexity naturally arises when we pose the question:
{``What is a \emph{powerful} rule for choosing the sequence of arms
$\infseqn{A_n}$?''}, especially since we will operate in a restricted partial
information setting where $(K-1)$ of the outcomes $(Y_n(1),\dots, Y_n(K))$
are unobservable. This paper provides a comprehensive answer to that
question, both by articulating notions of ``power'' and proposing explicit
algorithms attaining those notions under partial information.

Let us briefly discuss two concurrent works that consider a thematically similar problem of testing global nulls under multi-armed data collection.
\begin{remark}[On the related work of \citet{hsu2025active}]
   In \citet{hsu2025active}, the authors focus on a global null generalization
   of two-sample testing when finitely many data sources are available for
   comparison.
   They demonstrate that their nonasymptotic sequential test will reject with
   probability one under the alternative and they derive an upper bound on the
   expected time to rejection. When viewed in contrast to the present work,
   their bound is loose in general as it relies on a commonly employed tuning
   parameter selection strategy (a ``betting strategy'') that is suboptimal for
   certain alternatives even in the single-armed case (see
   \citep{waudby2025universal} for discussions). For reasons that are made clear
   in \cref{section:ingredients}, the use of an ``optimal'' strategy introduces
   some mathematical complexity that the present work focuses on tackling. We
   also note that by virtue of studying two-sample testing,
   \citet{hsu2025active} must simultaneously estimate a so-called ``witness
   function''---an important object when using integral probability metrics as
   the authors do---while balancing exploration and exploitation. Because of
   this additional estimation task, the analysis of the present work would
   require nontrivial refinements to handle the testing problem they consider,
   and this is a possible direction for future work.
\end{remark}
\begin{remark}[On the related work of \citet{imbens2026demonstration}]
   The work of \citet{imbens2026demonstration} has similar motivations to
   \citet{hsu2025active} and the present work in the sense that the authors are
   interested in exploring arms to combine evidence across them in an effort to
   discredit a global null hypothesis. Their main results provide asymptotic (in
   the spirit of the central limit theorem) tests of the global null at a finite
   sample size through the use of strong approximations and asymptotic
   sequential inference tools
   \citep{robbins1970boundary,waudby2024time,waudby2023distribution}. They also
   make sub-Gaussian tail assumptions common in upper confidence bound-type
   analyses. The present work studies a different statistical inference
   desideratum of \emph{nonasymptotic} sequential testing with a focus on its
   optimality, and considers distributional assumptions that disallow those
   sub-Gaussian tail assumptions.
\end{remark}

We additionally highlight the work of \citet{yang2017framework}, who study a multiple testing problem under bandit feedback with the goal of
controlling the false discovery rate. \citet{cho2024peeking}
derive simultaneous confidence regions for means of random vectors under bandit feedback. Their work, nonetheless, focuses on testing hypotheses for each 
arm and not on testing global null hypothesis. A recent work from \citet{bharti2026global}
focuses on testing a global null hypothesis but considers a full information setting.

\subsection{A nonparametric class of test supermartingales and e-processes}\label{section:prelim-test supermartingales and e-processes}

Throughout the rest of the paper we focus on test $\cP$-supermartingales that 
can be constructed by taking convex combinations of $\Pe$-values, which are themselves
functions of the observed random variable in each round. As we elaborate on in \cref{example:two-sided mean testing} and later in \cref{section:multi-armed-testing-examples}, this abstract setup includes several nonparametric testing problems of interest in the literature. Formally, fix $d \in \N$ and let $\infseqn{A_n}$ be an $\Hcal$-predictable sequence of actions taking values in $\{1,\dots, K\}$. For each time step $n \in \N$ we define the 
vector of $\Hcal_{n-1}$-conditional $\Pe$-values under the global null hypothesis $\cP$ as the $(d+1)$-vector
\begin{equation}
 \bE_n(A_n) 
\coloneqq \left (E_n^{(0)}(Y_n(A_n)), \dots, E_n^{(d)}(Y_n(A_n))\right )
\end{equation}
 for some $\Hcal_{n-1}$-measurable maps $E_n^\brackzero(\cdot), \dots, E_n^\brackd(\cdot)$ from $\RR$ to $[0, \infty)$. Let $\infseqn{\bbet_n}$ be $\Hcal$-predictable $\simplex$-valued random variables which we view as the ``portfolios'' at each time---a term we will justify further in \cref{section:portfolio-and-allocation-regret}. Define the test statistic $\widebar W \equiv \infseqn{\barW_n}$ as follows:
\begin{equation}\label{eq:prelim-test supermartingale}
    \barW_n \coloneqq \prod_{i=1}^n \bbet_i^\trans \bE_i(A_i) \mper
\end{equation}
Here we have instantiated the test $\cP$-supermartingale from \eqref{eq:validity}
using the map
\begin{equation}
  f_n(y) = \bbet_n^\trans (E_n^\brackzero(y(0)) \cdots E_n^\brackd(y(d))^\trans; \quad y \in \RR^{d+1},
\end{equation}
and hence we conclude that $\widebar W_n$ forms a test $\Pcal$-supermartingale. We will make the following set of additional assumptions on the $\Pe$-values that make up the multiplicative increments of $\barW$.

\begin{assumption}\label{assumption:class of eprocesses}
    For each $a \in \Acal$, assume that $\infseqn{\bE_n(a)}$ 
    are \iidtext{} $(d+1)$-vectors of $\Pe$-values, meaning that $E_n^{(j)}(Y_n(a)) \geq 0$
    for each $j \in [0:d] \coloneqq \Set{0, \dots, d}$ with $\P$-probability one for each $\Pin$ and
    \begin{equation}
        \sup_{\Pin} \E_{\P}\Brac{E^{(j)}_n(Y_n(a))} \leq 1 \quad\text{for each }
        j \in [0:d] \mper
    \end{equation}
    Furthermore, assume that there exists some $b > 1$ so that for each $n \in \N$, 
    $a \in \cA$, and under all $\Pin$ and $\Qin$,
    \begin{equation}
        \sup_{\bbet \in \simplex} \Set{\bbet^\trans \bE_n(a)} \leq b
        \quad\text{almost surely}\mcom
    \end{equation}
    and that there exists some $\tilde{\bbet} \in \simplex$ so that
    under all $\Qin$,
    \begin{equation}\label{eq:assump-wealth-increment-equal-to-one}
        \tilde{\bbet}^\trans \bE_n(a) = 1
        \quad\text{$\Q$-almost surely}\mper
    \end{equation}
\end{assumption}
Whenever we impose \cref{assumption:class of eprocesses} without including the
arm index, one should think of this assumption as holding for the single-arm
setting (i.e., $K = 1$). We emphasize that the \iidtext{} assumption is made
with respect to how the vectors are sampled at each time step $n \in \N$, but
that the $\Pe$-values within each vector $\bE_n(a)$ for each $a \in \cA$ and $n
\in \N$ can be highly dependent of each other. Moreover, we remark that
\eqref{eq:assump-wealth-increment-equal-to-one} is a particularly weak
assumption since if it does not hold, one can simply consider an additional
$e$-value $E_n^{(d+1)} = 1$ whereby it suffices to take $\tilde \bbet = (0, 0,
\dots, 0, 1) \in \triangle_{d+1}$. However, there do exist interesting vectors
of $e$-values for which no single element deterministically takes the value one
but nevertheless satisfy \eqref{eq:assump-wealth-increment-equal-to-one} (see
\cref{example:two-sided mean testing} below).

The optimality definitions and results to come are stated with respect to the following
comparator class of $\Pe$-processes.
\begin{definition}[The oracle-history comparator class of $\Pe$-processes]
\label{def:comparator-class}
    Fix a null hypothesis $\Pcal$, an integer $d \in \NN$, and a constant $b > 1$. Let $\Wcal$ be a collection of stochastic processes satisfying the following conditions for any $W \in \Wcal$.
    \begin{enumerate}
        \item For every $\Pin$, $W$ is upper bounded by a test $\P$-supermartingale $\barW$ of the form \eqref{eq:prelim-test supermartingale}.
        \item The upper-bounding test supermartingale $\barW$ satisfies \cref{assumption:class of eprocesses} with the constant $b$.
    \end{enumerate}
    We refer to $\Wcal$ as the \emph{oracle-history comparator class}.
\end{definition}
We note that there are many
processes in $\Wcal$ that could not have been generated according to
\cref{algorithm:data collection} because $\Hcal$-predictability implies that
$\infseqn{\bbet_n}$ and $\infseqn{A_n}$ could have been constructed in a
\emph{full information} setting.

As alluded to previously, the conditions of \cref{assumption:class of eprocesses} 
hold in many practical settings that have been of interest in the literature. 
We exhibit this in the following example that considers a two-sided bounded mean
testing problem, as well as in \cref{section:treatment effects} where we present
an example of positive treatment effect testing. We defer additional examples to
\cref{section:multi-armed-testing-examples}, and refer the reader to 
\citet{orabona2023tight} for
an application to the two-sided mean testing setup described in the example
below.

\begin{example}[Two-sided bounded mean testing]\label{example:two-sided mean testing}
    Consider the problem of testing whether the mean of a bounded
    random variable is equal to some $\mu_0 \in [0,1]$. This problem has been
    previously studied in the single-arm setting in several works; see
    \citep{hendriks2021test,waudby2024estimating,orabona2023tight,ryu2024confidence,shekhar2023nonparametric,chugg2023auditing,podkopaev2023sequentialKernelized,podkopaev2023sequentialTwoSample,ryu2025improved}.
    We now show how it can be instantiated in the multi-armed testing by betting
    setting.
    Suppose that $\infseqn{Y_n(1), \dots, Y_n(K)}$ are sampled $\iid$ and are
    supported on $[0,1]^K$. The statistician is interested in testing the
    following two-sided global null $\cP^{=}$ versus alternative $\cQ^{\neq}$:
    \begin{equation}
        \cP^= = \Set{\P \svert \forall a \in \cA,~ \E_{\P}\Brac{Y(a)} = \mu_0}
        \quad\mathrm{versus}\quad \cQ^{\neq} = \Set{\P \svert \exists a \in
        \cA,~ \E_{\P}\Brac{Y(a)} \neq \mu_0} \mcom
    \end{equation}
    for some $\mu_0 \in [0,1]$. Using a $[0,1]$-valued predictable sequence
    $\infseqn{\bet_n}$ of portfolios and a predictable sequence $\infseqn{A_n}$ of
    arm pulls, the statistician can construct the following test
    $\cP^{=}$-supermartingale:
    \begin{equation}\label{eq:prelim two sided}
        W_n^{=} \coloneqq \prod_{i=1}^n \Brac{(1 - \bet_i) \frac{1 - Y_i(A_i)}
        {1 - \mu_0} + \bet_i \frac{Y_i(A_i)}{\mu_0}} \mper
    \end{equation}
    In this two-sided bounded mean testing setting, the bound on the
    multiplicative increments is $b = \max\Set{1/(1-\mu_0), 1/\mu_0}$ 
    and we also have that $\tilde{\bet}$ from \cref{assumption:class of eprocesses}
    takes the form $\tilde{\bet} = \mu_0$.
\end{example}

Later on in the paper we focus on $e$-processes that are not test
supermartingales, and we do this purely for computational resasons (see the
discussion following \cref{lem:spruce-test-statistic-eprocess}). Those
$e$-processes take the form of pathwise lower bounds on the process in
\eqref{eq:prelim-test supermartingale} using certain regret bounds due to
\citep{cover1996universal} that are elaborated on in
\cref{section:portfolio-and-allocation-regret}.  Nonetheless, for now, we focus on \emph{defining}
multi-armed log-optimality and defer explicit constructions of $e$-processes and
test supermartingales to \cref{algorithm:spruce}.

\section{Multi-Armed Log-Optimality}\label{section:multi-armed}

In this section, we first introduce a definition of multi-armed log-optimality and
then present sufficient conditions for achieving this definition. The section
culmiantes with explicit $e$-processes that satisfy those sufficient
conditions.

\begin{definition}[Multi-armed log-optimality]\label{definition:multi armed oracle logopt}
  Fix the global null $\Pcal$ and and its alternative $\Qcal$. Let $\infseqn{W_n}$ be a $\Pe$-process constructed via the protocol in \cref{algorithm:data collection}. We say that $\infseqn{W_n}$ is \emph{multi-armed $\Qcal$-log-optimal}
  if for all $\Qin$ and any other process $\tW \in \Wcal$  
  \begin{equation}
    \liminf_{n \to \infty} \left ( \frac{1}{n} \log (W_n) - \frac{1}{n} \log(\tW_n)  \right ) \geq 0 \quad \text{$\Q$-almost surely.}
  \end{equation}
\end{definition}
Let us briefly parse the above. First, notice that when $K=1$, \cref{definition:multi armed oracle logopt} recovers \cref{definition:log-optimality} for the class $\Wcal$. In the case of $K \geq 2$, \cref{definition:multi armed oracle logopt} states that a multi-armed log-optimal process $W$ does not have a limiting growth rate that is smaller than \emph{any other} process $\tW$ of the form \eqref{eq:prelim-test supermartingale}, even those that can construct $\infseqn{\bbet_n}$ and $\infseqn{A_n}$ with oracle access to the \emph{entire history} $(A_i, (Y_i(1), \dots, Y_i(K)))_{i=1}^{n-1}$ at time $n$. In particular, $W$ is not asymptotically outperformed by the process that has oracle knowledge of both the optimal arm $a_\Q$ \emph{and} the optimal portfolio $\bbet_\Q(a_\Q)$ under that arm, both of which are given by
\begin{equation}
\label{eq:log-optimal-strategy}
    \optarm \coloneqq \argmax_{a \in \cA}
    \E_{\Q}\Brac{\logp{\optbbet(a)^\trans \bE(a)}} \quad \text{where}\quad \bbet_\Q(a) = \argmax_{\bbet \in \simplex} \EE_\Q \left [ \log (\bbet^\trans \bE(a)) \right ].
\end{equation}
Consequently, a multi-armed log-optimal process $W$---should it exist---would have the property that
\begin{equation}
    \liminf_{n \to \infty}  \frac{1}{n}\log(W_n) \geq \max_{(a, \bbet) \in \Acal \times \simplex} \EE_\Q \left [\log \left ( \bbet^\trans \bE(a) \right ) \right ]
\end{equation}
with $\Q$-probability one, a simple consequence of the strong law of large numbers when considering the comparator process given by $W_\nQ^\star \coloneqq \prod_{i=1}^n \bbet_\Q(a_\Q)^\trans \bE(a_\Q)$.
Crucially, a multi-armed $\Qcal$-log-optimal process would attain such asymptotic growth rates without any prior knowledge of $\Q, a_\Q$, nor $\bbet_\Q(a)$ for any $a \in \Acal$. From this vantage point, while the process given by $W_\nQ^\star$ may be $\{ \Q \}$-multi-armed log-optimal in a distribution-pointwise sense, it is not when $\Qcal$ consists of suitably different distributions.

\begin{remark}[On a multi-armed interpretation of Breiman's favorable games \citep{kelly1956new,breiman1961optimal}]\label{remark:favorable games}
    The goal of maximizing the expected log-wealth (i.e., expected log-increment) 
    dates back to the work of \citet{kelly1956new} and is often referred to as the ``Kelly criterion.'' 
    \citet{breiman1961optimal} provided a study of so-called ``favorable games''---stochastic repeated games for which there exists a gambling strategy
    that allows the wealth of a gambler playing that game to diverge to infinity---and showed that the unique fixed gambling
    strategy satisfying the Kelly criterion is optimal in a few different senses. The data collection protocol in \cref{algorithm:data collection} can be viewed as a repeated stochastic game where at each time step $n \in \NN$, the gambler must not only place a bet $\bbet \in \simplex$ but most also choose \emph{which sub-game $a \in \Acal$ to play} prior to placing that bet. \cref{definition:multi armed oracle logopt} can be viewed as the growth rate of that gambler matching the one that knows both which sub-game is \emph{most favorable} and how to bet optimally on that most favorable sub-game.
\end{remark}

As mentioned above, taking $\Acal = \{a\}$ to be a singleton,
\cref{definition:multi armed oracle logopt} recovers
\cref{definition:log-optimality} \citep[Definition 1]{waudby2025universal}. In that case, \citet[Theorem 2.1]{waudby2025universal} show that a sufficient condition for achieving log-optimality is obtained when the $\Pe$-process $\infseqn{W_n}$ exhibits so-called \emph{sublinear portfolio regret}. We introduce an arm-wise version of portfolio regret in the following subsection.
\subsection{Portfolio and allocation regret}
\label{section:portfolio-and-allocation-regret}

Recall the (single-armed) notion of log-optimality in \cref{definition:log-optimality}. Even in the single-armed case where $K=1$, it is not immediately clear how one should choose the $\simplex$-valued vectors $\infseqn{\bbet_n}$ to achieve log-optimality. \citet[Theorem 2.1]{waudby2025universal} show that if $\log(W)$ enjoys a particular deterministic regret bound, then it will be log-optimal in the stochastic sense of \cref{definition:log-optimality}, and that there are several algorithms achieving such a bound. In the language of the present paper, the authors show that in the case of $K=1$, if $W$ is a $\Pe$-process for which the \emph{portfolio regret}
\begin{equation}
    \Rcal^\Port_n \coloneqq \max_{\bbet \in \simplex} \sum_{i=1}^n \log \left ( \bbet^\trans \bE_i \right ) - \log (W_n)
\end{equation}
is pathwise sublinear (i.e., for every $\omega \in \Omega$), then $W$ is $\Qcal$-log-optimal in the sense of \cref{definition:log-optimality}. Note that $\Rcal^\Port_n$ is precisely the form of regret considered by \citet{cover1991universal} and \citet{cover1996universal}. Moreover, those authors provide and analyze an algorithm attaining \emph{logarithmic} portfolio regret which we now describe. Suppose that $\bbet_n^\UP$ is chosen according to the rule
\begin{equation}
    \bbet_n^\UP \coloneqq \frac{\int_{\bbet \in \simplex} \bbet \barW_{n-1}(\bbet) \dd F(\bbet)}{\int_{\bbet \in \simplex} \barW_{n-1}(\bbet)\dd F(\bbet)},
\end{equation}
where $\barW_n(\bbet) = \prod_{i=1}^n \bbet^\trans \bE_i$ and $F(\bbet)$ is a $d$-dimensional Dirichlet($1/2,\dots, 1/2$) distribution. Then for the process $\barW_n^\UP \coloneqq \prod_{i=1}^n (\bbet^\UP_i)^\trans \bE_i$, the portfolio regret is bounded by $d\log(n + 1) / 2 + \log(2)$ in a pathwise sense \citep{cover1996universal}. Using \citet[Theorem 2.1]{waudby2025universal}, $\barW$ is single-arm $\Qcal$-log-optimal, as is any $\Pe$-process that lower-bounds it; see also \citet{orabona2023tight} and \citet[Corollary 2.2]{waudby2025universal} for examples of such lower-bounding $e$-processes.

For the purposes of this work, we rely on a notion of \emph{arm-wise} portfolio
regret. This is a straightforward extension, but it is worth defining rigorously
because, as discussed in \cref{proposition:oracle log-optimality abstract}, it
is key to formulating one of the two sufficient conditions for attaining
multi-armed log-optimality. In order to make the ``arm-wise'' aspect of arm-wise
portfolio well-defined, we introduce an assumption on the form the
$e$-processes take.
\begin{assumption}[$E$-processes as arm-wise products]\label{assumption:arm-wise products}
   Let $\infseqn{W_n}$ be a $\Pe$-process constructed from data collected according to \cref{algorithm:data collection}. Assume that $W_n$ can be written as
   \begin{equation}
       W_n = \prod_{a = 1}^K W_n(a)
   \end{equation}
   where for each $a \in \Acal$, $W_n(a)$ depends only on $\1 \{ A_1 = a \} Y_1(a), \dots, \1 \{ A_n = a \} Y_n(a)$. 
\end{assumption}

Every $e$-process and test supermartingale considered throughout the paper
satisfies \cref{assumption:arm-wise products}. With \cref{assumption:arm-wise
products} in mind, we define the arm-wise portfolio regret.

\begin{definition}[Arm-wise portfolio regret]\label{definition:arm-wise portfolio regret}
    Let $\infseqn{A_n, Y_n(A_n)}$ be collected according to the protocol in \cref{algorithm:data collection}. Let $\infseqn{W_n}$ be a $\Pe$-process satisfying \cref{assumption:arm-wise products}.
    For any $n \in \NN$ and $a \in \Acal$, we define the \emph{arm-wise portfolio regret} of $W_n$ on arm $a$ to be
    \begin{equation}
        \Rcal_n^\Port(a) \coloneqq \max_{\bbet \in \simplex} \sum_{i=1}^n \1 \{ A_i = a \} \log \left ( \bbet^\trans \bE_i(a) \right ) - \log \left ( W_n(a) \right ).
    \end{equation}
\end{definition}
This definition should be interpreted as the portfolio regret obtained when considering only the data that has been generated from a specific arm $a \in \Acal$. Clearly, any algorithm that enjoys a portfolio regret of $\Rcal_n^\Port$ in the single-arm case will enjoy an arm-wise portfolio regret of $\Rcal^\Port_{N_a(n)}(a)$ for arm $a \in \Acal$ where $N_a(n) = \sum_{i=1}^n \1 \{ A_i = a \}$. For example, if the universal portfolio algorithm $\infseqn{\bbet_n^\UP}$ is run for each arm $a \in \Acal$, i.e.,
\begin{equation}\label{eq:arm-wise UP}
    \bbet_n^\UP(a) \coloneqq \frac{\int_{\bbet \in \simplex} \bbet \barW_{n-1}(\bbet; a) \dd F(\bbet)}{\int_{\bbet \in \simplex} \barW_{n-1}(\bbet; a)\dd F(\bbet)},
\end{equation}
where $\barW_n(\bbet; a) = \prod_{i=1}^n \left (\bbet^\trans \bE_i \right )^{\1 \{ A_i = a \}}$ and $F(\cdot)$ is the Dirichlet measure as before,
then the resulting test supermartingale will have an arm-wise portfolio regret of $d\log(N_a(n) + 1) / 2 + \log (2)$. 
We rely on this fact later on. In light of the previous discussion, it is clear
that attaining arm-wise portfolio regret does not present any unique technical
challenges. The main technical challenges arises when attaining sublinear
\emph{allocation regret}, a concept that we now introduce.

\begin{definition}[Portfolio-oracle allocation regret]\label{definition:allocation regret}
Fix a distribution $\Qin$ and for any $a \in \Acal$, define $\bbet_\Q(a) \coloneqq \argmax_{\bbet \in \simplex} \EE_\Q [\log (\bbet^\trans \bE(a)]$ as the log-optimal portfolio for arm $a$ under $\Q$. Define $a_\Q \coloneqq \argmax_{a \in \Acal} \EE_\Q [\log(\bbet_\Q(a)^\trans \bE(a)]$ as the arm with the largest log-optimal portfolio. For any $n \in \NN$, we define the \emph{portfolio-oracle allocation regret} $\Rcal_\nQ^\MAB$ as 
    \begin{equation}
        \Rcal_{n,\Q}^\MAB \coloneqq n \EE_\Q \left [\log(\bbet_\Q(a_\Q)^\trans \bE(a_\Q)) \right ] - \sum_{i=1}^n \EE_\Q  \left [ \log \left (\bbet_\Q(A_i)^\trans \bE(A_i) \right ) \right ].
    \end{equation}
\end{definition}
In words, $\Rcal_\nQ^\MAB$ is the difference between the sum of expected logarithmic increments under the optimal arm $a_\Q$ and the optimal portfolio $\optbbet(a_\Q)$ versus that of the sequence of arms $A_1, \dots, A_n$, under their respective optimal portfolios $\optbbet(A_1),\dots, \optbbet(A_n)$.
Notice that if viewing the ``reward'' at time $n$ as $\log(\optbbet(A_n)^\trans
\bE(A_n))$, then $\Rcal_\nQ^\MAB$ is precisely the notion of regret appearing in
the stochastic multi-armed bandit literature \citep{lai1985asymptotically},
\citep[\S 7]{lattimore2020bandit}. We use the qualifier ``portfolio-oracle''
because the second term in the definition of $\Rcal_\nQ^\MAB$ considers the
expectation of a random variable without knowledge of $a_\Q$ but with oracle
knowledge of $\bbet_\Q(a)$ for each $a \in \Acal$. For succinctness, however, we
often drop the qualifier and refer to $\cR_{\nQ}^\MAB$ as the ``allocation
regret.''

In the following proposition, we connect sublinear portfolio-oracle allocation regret to our overall goal of attaining multi-armed log-optimality, showing that it provides one of two sufficient conditions that yield such optimality.

\begin{proposition}[Multi-armed log-optimality from sublinear portfolio and allocation regrets]\label{proposition:oracle log-optimality abstract}
  Let $\infseqn{W_n}$ be a $\Pe$-process satisfying \cref{assumption:class of eprocesses,assumption:arm-wise products}.
    If $\infseqn{W_n}$ has sublinear arm-wise portfolio regret as well as sublinear allocation regret, then $\infseqn{W_n}$ is multi-armed log-optimal. Moreover,
    \begin{equation}
      \lim_\nto \frac{1}{n} \log(W_n) = \max_{(a, \bbet) \in \Acal\times \simplex} \EE_\Q \left [ \log \left ( \bbet^\trans \bE_1(a) \right ) \right ]
    \end{equation}
    $\Q$-almost surely.
\end{proposition}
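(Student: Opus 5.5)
We will prove the two halves separately: an almost-sure upper bound on the growth rate of every comparator $\tW \in \Wcal$, and a matching almost-sure lower bound on the growth rate of $W$. Together they give both the displayed limit and multi-armed log-optimality. Write $\ell^\star \coloneqq \max_{(a,\bbet)} \EE_\Q[\log(\bbet^\trans \bE_1(a))]$ and $\ell_a \coloneqq \EE_\Q[\log(\bbet_\Q(a)^\trans \bE_1(a))]$. By \cref{assumption:class of eprocesses}, $\log(\bbet^\trans\bE)\le \log b$. Also $\ell_a \ge \EE_\Q[\log(\tilde\bbet^\trans \bE(a))] = 0$, so every $\ell_a$ is finite.

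\textbf{Lower bound for $W$.} Using \cref{assumption:arm-wise products} and the definition of arm-wise portfolio regret, we have
\begin{equation}
\log W_n = \sum_{a\in\Acal} \log W_n(a) = \sum_{a} \Big[\max_{\bbet}\sum_{i\le n}\1\{A_i=a\}\log(\bbet^\trans\bE_i(a)) - \Rcal^\Port_n(a)\Big].
\end{equation}
Lower bound each max by plugging in $\bbet_\Q(a)$. The portfolio regrets are $o(n)$, so it remains to show
\begin{equation}
\frac1n\sum_{i\le n}\log(\bbet_\Q(A_i)^\trans\bE_i(A_i)) \to \ell^\star \quad \Q\text{-a.s.}
\end{equation}
Decompose the sum as $\sum_i \ell_{A_i} + \sum_i D_i$ with $D_i \coloneqq \log(\bbet_\Q(A_i)^\trans \bE_i(A_i)) - \ell_{A_i}$. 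Since $A_i$ is $\Hcal_{i-1}$-measurable and $\bE_i(a)$ is independent of $\Hcal_{i-1}$, $(D_i)$ is a martingale difference sequence.

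\textbf{Main obstacle.} The $D_i$ are bounded above by $\log b$ but not below, since $\log(\bbet_\Q(a)^\trans\bE)$ may be $-\infty$-tailed. I would avoid a bounded-increment martingale SLLN. Instead, split by arm: $\sum_i D_i = \sum_a \sum_{i:A_i=a}(X_i(a)-\ell_a)$ with $X_i(a) \coloneqq \log(\bbet_\Q(a)^\trans\bE_i(a))$. For fixed $a$, the $X_i(a)$ are i.i.d.\ and integrable, because $\ell_a>-\infty$ and $X\le\log b$. For arms with $N_a(n)\to\infty$, the optional-skipping theorem (Doob) applied to the predictable selection $\{A_i=a\}$ makes the selected subsequence i.i.d.\ with the law of $X_1(a)$. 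The classical SLLN then gives $\sum_{i:A_i=a,\,i\le n}(X_i(a)-\ell_a) = o(N_a(n)) = o(n)$. For arms with $N_a(\infty)<\infty$, the sum is a.s.\ finite, hence $o(n)$.

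\textbf{Allocation term.} Here $\sum_i \ell_{A_i} = n\ell_{a_\Q} - \Rcal^\MAB_{n,\Q}$. Reading $\EE_\Q[\log(\bbet_\Q(A_i)^\trans\bE(A_i))]$ in the definition as $\ell_{A_i}$, sublinear allocation regret makes this $n\ell^\star - o(n)$. This yields $\liminf \frac1n\log W_n \ge \ell^\star$.

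\textbf{Upper bound for comparators.} Take any $\tW\in\Wcal$ with dominating supermartingale $\barW_n=\prod_i \bbet_i^\trans\bE_i(A_i)$ and $\Hcal$-predictable $\bbet_i, A_i$; it suffices to bound $\limsup\frac1n\log\barW_n\le\ell^\star$. Split by arm again, $\log \barW_n = \sum_a \sum_{i:A_i=a}\log(\bbet_i^\trans \bE_i(a))$. Each inner sum is at most $\max_{\bbet}\sum_{i:A_i=a}\log(\bbet^\trans\bE_i(a))$, since a sequence of portfolios never beats the best constant one in hindsight; this is the nonnegativity of single-arm portfolio regret relative to Cover's benchmark. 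Actually this inequality is false in general, so instead I would use the single-arm result \citet[Theorem 2.1]{waudby2025universal} per arm, applied along the optionally skipped i.i.d.\ subsequence. Its proof gives $\limsup \frac{1}{N}\sum\log(\bbet_i^\trans \bE_i)\le \ell_a$ for any predictable $\bbet_i$, via a martingale argument with the concavity and KKT condition $\EE_\Q[\bbet^\trans\bE/\bbet_\Q^\trans\bE]\le1$: the process $\prod \bbet_i^\trans\bE_i/\bbet_\Q(a)^\trans\bE_i$ is a nonnegative supermartingale and so grows subexponentially. Summing over arms gives $\limsup\frac1n\log\barW_n\le \limsup \frac1n\sum_a N_a(n)\ell_a + o(1) \le \ell^\star$. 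Combined with the lower bound, this proves both claims.
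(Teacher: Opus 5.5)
Your argument is correct, and it takes a different route from the paper's. The paper splits $\EE_\Q[\ell_{1,\Q}(\optarm)]-\frac1n\log W_n$ into three pieces: an allocation-regret term, the centered term $\frac1n\sum_i(\EE_\Q[\ell_{i,\Q}(A_i)]-\ell_{i,\Q}(A_i))$, and a portfolio-regret term. It controls the centered term with the nonasymptotic exponential bound of \cref{lemma:exponential concentration for bandit sequence - centered with independent copies} plus Borel--Cantelli. That bound rests on the MGF estimate of \cref{lemma:sub-exponential-log-wealth-increments}, and hence on the numeraire property and the constant $b$.

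You handle the same term with optional skipping and the classical SLLN. This needs only integrability, which follows from $0\le\ell_a\le\log b$. Your route is more elementary but purely asymptotic; the paper's quantitative version is reused later for \cref{lemma:exponential concentration for bandit sequence - optimal} and the rejection-time bounds. Your numeraire-supermartingale upper bound also makes two things explicit that the paper's proof of this proposition leaves implicit: the direction $\limsup_n\frac1n\log W_n\le\ell^\star$, and the comparison with arbitrary $\tW\in\Wcal$. The paper bounds the portfolio-regret term only from above, and the reverse direction appears only in \cref{corollary:arm-wise-portfolio-concentration}. Your reading of the allocation regret with $\ell_{A_i}$ in place of the expectation is the one the paper's own proof uses.

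To tidy up, four small changes.
\begin{itemize}
\item Drop the retracted sentence about a sequence of portfolios never beating the best constant one.
\item Drop the detour through the single-arm theorem. \cref{lemma:allocation-wise numeraire portfolio} already shows that $\prod_{i\le n}\bbet_i^\trans\bE_i(A_i)/\optbbet(A_i)^\trans\bE_i(A_i)$ is a nonnegative $\Q$-supermartingale, hence a.s.\ bounded. So $\log\barW_n\le\sum_{i\le n}\log(\optbbet(A_i)^\trans\bE_i(A_i))+O(1)$, and your SLLN step together with $\ell_{A_i}\le\ell^\star$ finishes the bound.
\item Pad the skipped subsequence with independent copies before invoking the SLLN, since $\{N_a(n)\to\infty\}$ is a random event rather than a sure one.
\item State that you need $\tW_n\le\barW_n$ $\Q$-a.s.; the definition of $\Wcal$ only requires this under $\Pcal$. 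For the limit claim you also need $W$ itself to be dominated in this way. The paper assumes both tacitly.
\end{itemize}
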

A sketch of the proof of the second part of the proposition can be made short and illustrative so we provide one here. The full proof can be found in \cref{proof:oracle log-optimality abstract}.
\begin{proof}[Proof sketch of \cref{proposition:oracle log-optimality abstract}]
    Fix $\Qin$. Define the difference between the optimal portfolio under the optimal arm and the empirical growth rate: $\Rcal_n \coloneqq \EE_\Q \left [ \log(\optbbet(a_\Q)^\trans \bE_1(a_\Q) \right ] - n^{-1}\log(W_n)$. Notice that $\Rcal_n$ can be decomposed as
    \begin{align}
    \cR_n =\ &\EE_\Q \left [ \log \left ( \optbbet(\optarm)^\trans
    \bE(\optarm) \right ) \right ] - \frac{1}{n}\sum_{i=1}^n \EE_\Q \left [ \log \left (
    \optbbet(A_i)^\trans \bE_1(A_i) \right ) \right
            ]\\ 
                &+ \frac{1}{n}\sum_{i=1}^n \EE_\Q \left [ \log
            \left ( \optbbet(A_i)^\trans \bE_i(A_i) \right ) \right ] -
            \frac{1}{n} \sum_{i=1}^n \sum_{a=1}^K \1 \{A_i = a \} \log \left (
            \optbbet(a)^\trans \bE_i(a) \right )\\
                &+ \frac{1}{n}\sum_{i=1}^n \sum_{a=1}^K \1
                  \{A_i = a \} \log \left (
                  \optbbet(a)^\trans \bE_i(a) \right )
                  - \frac{1}{n} \sum_{a=1}^K \log(W_n(a))
                  .
    \end{align}
    Now, the first and third lines vanish by sublinearity of the allocation and arm-wise portfolio regrets, respectively. The second line vanishes by a concentration inequality plus an application of the Borel-Cantelli lemma. This last justification can be thought of as applying $K$ separate arm-wise strong laws of large numbers and is made rigorous in \cref{lemma:exponential concentration for bandit sequence - centered with independent copies}.
\end{proof}

In light of \cref{proposition:oracle log-optimality abstract}, it is of interest
to devise algorithms with sublinear portfolio-oracle allocation regret. However,
it is precisely because of the ``portfolio-oracle'' qualification that
off-the-shelf regret minimization algorithms from the bandit literature cannot
be applied. Indeed, they would assume access to (and place certain assumptions
on) the random variables $\left (\log(\bbet_\Q(A_i)^\trans \bE(A_i))\right)_{i=1}^n$ 
which cannot be observed under the assumptions that we make. To
address this key issue, in the next subsection we present an algorithm that
employs a nontrivial modification of upper-confidence-bound-type algorithms
\citep{lai1985asymptotically} \citep[\S 7]{lattimore2020bandit}. This
algorithm is designed to accommodate the unknown---but in a sense,
``learnable''---optimal portfolios. Rather than dwell further on these
technical considerations let us focus on presenting the aforementioned
algorithm and its optimality guarantees. Those technical considerations are
the sole subject of \cref{section:ingredients}.

\subsection{Achieving multi-armed log-optimality via \SPRUCEtext{}}
Let us introduce some notation that will aid in the exposition of \cref{algorithm:spruce}.
For any $n \in \NN$ and any $a \in \Acal$, let $N_a(n) \coloneqq \sum_{i=1}^{n}
\Ind{A_i = a}$ denote the (random) number of times arm $a$ has been pulled up until and
including time $n$. Furthermore, define the user-chosen parameters $\gamma > 2$
and $\zeta > 0$ that play the role of exploration incentives and disincentives,
respectively, and let $b > 1$ be the almost sure upper bound on the convex
combination of $\Pe$-values from \cref{assumption:class of eprocesses}. For any
$n \in \NN$, define
$R_{n}^{\CO} \coloneqq d\log(n +
1) / 2 + \log (2)$. Using these quantities, we define the upper confidence bound $\UCB_n(a)$ as follows:
\begin{align}
    \UCB_a(n) \coloneqq &\max_{\bbet \in \simplex} \frac{1}{N_a(n-1)}
                          \sum_{i=1}^n \Ind{A_i = a}\log(\bbet^\trans \bE_i(A_i)) \\
    &+ \sqrt{\frac{8 b \gamma \log(\zeta n + 1)}{N_a(n-1)}} 
    + \frac{4 \gamma \log(\zeta n + 1)}{N_a(n-1)} + \frac{R_{N_a(n-1)}^{\CO}}{N_a(n-1)}.
\end{align}
Based on this definition, we present
the algorithm \SPRUCEtext{} (Sublinear Portfolio Regret Upper Confidence Estimation) in \cref{algorithm:spruce}.

\begin{algorithm}[!htbp]
    \caption{: Sublinear Portfolio Regret Upper Confidence Estimation (\SPRUCEtext)}
    \label{algorithm:spruce}

    Select $\option \in \{ \UP, \CO \}$
    
    Pull each arm $a \in \cA$ once.

    \For{$n = K + 1, \dots$}{
      \begin{enumerate}
        \item Select $A_n = \argmax_{a \in \cA} \UCB_a(n)$, breaking ties arbitrarily.
        \item Observe $Y_n(A_n)$ from the $K$-vector of outcomes: $\Paren{Y_n(1), \dots, Y_n(K)} \sim \P$.
          \item Construct the $\Pcal$-$e$-values $\bE_n(Y_n(A_n))$.
          \item If $\option == \CO$, construct
            \begin{center}
              $\displaystyle W_n^\CO \coloneqq \prod_{a = 1}^K
              \exp \left \{ \max_{\bbet \in \simplex}\sum_{i=1}^n \1 \{A_i =
                a\} \log \left ( \bbet^\trans \bE_i(a) \right ) - R_{N_a(n)}^{\CO}
              \right \}.$
            \end{center}

            If $\option == \UP$, construct
            \begin{center}
             $\displaystyle \barW_n^\UP \coloneqq \prod_{i=1}^n \bbet_i^\UP (A_i)^\trans \bE(A_i)$,\quad where $\bbet_i^\UP$ is as in \eqref{eq:arm-wise UP}.
            \end{center}
          \end{enumerate}
        }
      \end{algorithm}

The process $W^\CO$ defined in \cref{algorithm:spruce} is visually distinct from the test supermartingales described earlier in the paper, including Cover's universal portfolio algorithm $\barW^\UP$, but it, like the universal portfolio algorithm, forms an $\Pe$-process under \cref{algorithm:spruce}, a fact that we state formally in the following proposition.

\begin{proposition}
\label{lem:spruce-test-statistic-eprocess}
The process $\barW^\UP$ is a test $\Pcal$-supermartingale and $W^\CO$ is a $\Pe$-process.
\end{proposition}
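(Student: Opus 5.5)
The plan has two parts, one for each process.

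\emph{$\barW^\UP$ is a test $\Pcal$-supermartingale.} We will invoke \cref{proposition:validity} with the maps
\[
f_n(y) \coloneqq \bbet_n^\UP(A_n)^\trans \bigl(E_n^\brackzero(y),\dots,E_n^\brackd(y)\bigr)^\trans .
\]
Three facts are needed.
\begin{enumerate}
\item The actions $A_n$ chosen by \SPRUCEtext{} are $\Hcal$-predictable. Each $\UCB_a(n)$ is a function of $(A_i, Y_i(A_i))_{i<n}$, and this history is $\Hcal_{n-1}$-measurable. The tie-breaking rule must also be measurable, which we take as given or achieve with a fixed deterministic rule.
\item Each $\bbet_n^\UP(a)$ from \eqref{eq:arm-wise UP} is an $\Hcal_{n-1}$-measurable, $\simplex$-valued ratio of integrals, and it is well defined because the denominator is positive. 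Consequently $f_n$ is $\Hcal_{n-1}$-measurable and nonnegative.
\item Under \cref{assumption:class of eprocesses}, for every $\Pin$ and $a\in\Acal$,
\[
\EE_\P\bigl[f_n(Y_n(a))\mid\Hcal_{n-1}\bigr] = \sum_j \lambda_{n,j}\,\EE_\P\bigl[E_n^{(j)}(Y_n(a))\mid \Hcal_{n-1}\bigr] \le 1 .
\]
Here the weights $\lambda_{n,j}$ are predictable and sum to one. The conditional expectations are at most one because $Y_n$ is independent of $\Hcal_{n-1}$ and the $e$-value maps are predictable.
\end{enumerate}
\cref{proposition:validity} then gives the claim directly.

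\emph{$W^\CO$ is a $\Pe$-process.} The plan is to show pathwise that $W_n^\CO \le \barW_n^\UP$. This inequality holds for every $\omega$, and once it is established, $W^\CO$ is a nonnegative adapted process dominated by a test $\Pcal$-supermartingale, which is exactly the definition of a $\Pe$-process. Nonnegativity is immediate because $W^\CO$ is an exponential.

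To prove the inequality, note that $\barW_n^\UP$ factors arm-wise as
\[
\barW_n^\UP = \prod_{a=1}^K \prod_{i\le n:\,A_i=a} \bbet_i^\UP(a)^\trans \bE_i(a) .
\]
Indeed, $\bbet_i^\UP(a)$ depends only on the rounds in which arm $a$ was pulled, so the $a$-th factor is exactly Cover's universal portfolio run on the subsequence of $e$-value vectors from arm $a$, of length $N_a(n)$. Cover's deterministic regret bound \citep{cover1996universal} holds for any sequence of nonnegative vectors, so for each arm,
\[
\log \prod_{i\le n:\,A_i=a} \bbet_i^\UP(a)^\trans \bE_i(a) \;\ge\; \max_{\bbet\in\simplex}\sum_{i=1}^n \1\{A_i=a\}\log(\bbet^\trans\bE_i(a)) - R^\CO_{N_a(n)} .
\]
Exponentiating and taking the product over $a$ yields $W_n^\CO\le\barW_n^\UP$.

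\emph{Main obstacle.} The step needing the most care is applying Cover's bound to the adaptively selected subsequence. The bound is worst-case over arbitrary sequences, so adaptivity of which rounds belong to arm $a$ is harmless. What must be checked is that the rounds in which arm $a$ is not pulled leave $\barW_{n-1}(\bbet;a)$ unchanged, so that the arm-wise algorithm coincides with standard UP on the reindexed subsequence. Two edge cases also need attention. If $N_a(n)=0$, the arm-wise factor equals $1$, while $\exp\{-R^\CO_0\}=1/2\le 1$, so the inequality still holds. If some increments vanish, then $\log(\cdot)=-\infty$ and the inequality holds trivially.
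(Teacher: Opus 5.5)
Your proposal is correct and follows the paper's proof. The test-supermartingale property of $\barW^\UP$ comes from \cref{proposition:validity}, and $W^\CO$ is a $\Pe$-process because Cover's pathwise regret bound gives $W_n^\CO \le \barW_n^\UP$; your arm-wise factorization of $\barW_n^\UP$ and your treatment of the $N_a(n)=0$ edge case make explicit what the paper leaves implicit.
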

\begin{proof}
  The fact that $\barW^\UP$ is a test $\Pcal$-supermartingale follows immediately from \cref{proposition:validity} (see the discussion following \eqref{eq:prelim-test supermartingale}). The fact that $W^\CO$ is a $\Pe$-process follows from the regret bound of \citet{cover1996universal}, allowing us to conclude that for any $n \in \NN$,
  \begin{equation}
    W_n^\CO \leq \barW_n^\UP\quad \text{pathwise.}
  \end{equation}
  Since $\Pe$-processes are those nonnegative processes that are $\P$-almost surely upper-bounded by a test $\P$-supermartingale for each $\Pin$, this completes the proof.
\end{proof}
The use of an in-hindsight maximum less a regret bound to form an $e$-process was introduced by \citet{orabona2023tight} to derive sharp confidence sequences for means of bounded random variables. The same idea was employed in \citet{waudby2025universal} for the purposes of deriving log-optimality guarantees of a more abstract class of $e$-processes.

The reason that a practitioner may wish to use $W^\CO$ in place of $\barW^\UP$
is because the integral in the definition of $\bbet_n^\UP(\cdot)$ can be
computationally expensive and numerically unstable, while the maximum in the
exponential of $W_n^\CO$ can be computed efficiently via off-the-shelf
root-finding algorithms. In the results that follow, we state guarantees for
both $\barW^\UP$ and $W^\CO$ interchangeably.
We now show that \SPRUCE{} yields multi-armed log-optimal $e$-processes.

\begin{theorem}[Multi-armed log-optimality of \SPRUCEtext]
\label{theorem:log-optimality of spruce}
    Let $\infseqn{W_n}$ be one of the $\Pe$-processes constructed via \emph{\SPRUCE}
    (\cref{algorithm:spruce}). Then under \cref{assumption:class of eprocesses}, $\infseqn{W_n}$ is multi-armed $\Qcal$-log-optimal in
    the sense of \cref{definition:multi armed oracle logopt}. Furthermore, the following two properties hold.
    \begin{enumerate}[(i)]
        \item Fix $\Q \in \cQ$ and let $W_\nQ^{\star} \coloneqq \prod_{i=1}^n \optbbet(\optarm)^\trans \bE_i(\optarm)$. $W_n$ and $W_\nQ^\star$ are $\Q$-almost surely asymptotically equivalent, meaning that
        \begin{equation}
            \lim_{n \to \infty} \left (\frac{1}{n} \log(W_n) - \frac{1}{n}\log (W_\nQ^\star)\right ) = 0\quad\text{$\Q$-almost surely.}
        \end{equation}
        \item For every $\Q \in \cQ$, $W_n$ has an asymptotic growth rate given by
            \begin{equation}
                \lim_{n \to \infty} \frac{1}{n} \logp{W_n} = 
                \max_{(a, \bbet) \in \Acal \times
                \simplex} \EE_\Q \Brac{\log \Paren{\bbet^\trans \bE_1(a)}} 
                \quad\text{$\Q$-almost surely.}
            \end{equation}
            Finally, this result is unimprovable in the sense that for any other $\tW$  as in \cref{definition:multi armed oracle logopt} (with oracle access to the entire history $\Hcal_{n-1}$ at time $n \in \NN$),
            \begin{equation}
                \limsup_{n \to \infty} \frac{1}{n} \logp{\tW_n} \leq 
                \max_{(a, \bbet) \in \Acal \times
                \simplex} \EE_\Q \Brac{\log \Paren{\bbet^\trans \bE_1(a)}} 
                \quad\text{$\Q$-almost surely.}
            \end{equation}
    \end{enumerate}
\end{theorem}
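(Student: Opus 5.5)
The proof reduces to \cref{proposition:oracle log-optimality abstract}. That proposition already delivers both the multi-armed log-optimality and the growth-rate identity in (ii) once two conditions are verified for the \SPRUCEtext{} $e$-processes: sublinear arm-wise portfolio regret and sublinear allocation regret.

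\emph{Portfolio regret.} For $\option = \UP$, running Cover's universal portfolio separately on each arm gives $\Rcal^\Port_n(a) \le R^\CO_{N_a(n)} = d\log(N_a(n)+1)/2+\log 2$ pathwise, which is $O(\log n)$. For $\option=\CO$, the arm-wise regret equals $R^\CO_{N_a(n)}$ exactly by construction. In both cases the regret is sublinear, and arm-wise product structure (\cref{assumption:arm-wise products}) holds by definition.

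\emph{Allocation regret.} This is the main step and the expected obstacle. The idea is to mimic the standard UCB analysis with the unobservable reward $\log(\optbbet(a)^\trans\bE(a))$. The oracle quantity is replaced by its estimable proxy, the in-hindsight maximum $\max_{\bbet}N_a^{-1}\sum \1\{A_i=a\}\log(\bbet^\trans\bE_i(a))$.

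The key ingredient is a two-sided concentration inequality for this empirical Kelly growth rate around $\ell_\Q(a)\coloneqq \EE_\Q[\log(\optbbet(a)^\trans\bE(a))]$, derived in two parts.
\begin{itemize}
\item \emph{Lower deviation.} The maximum dominates the value at $\optbbet(a)$. The increments $\log(\optbbet(a)^\trans \bE)$ are bounded above by $\log b$ and have a controlled variance-like term; the latter is handled using $\EE[\bE/\optbbet^\trans\bE]\le 1$ from the KKT conditions, together with $\tbbet$ which ensures $\ell_\Q(a)\ge 0$. A Bernstein or Freedman-type inequality then gives the $\sqrt{8b\gamma\log(\zeta n+1)/N}+4\gamma\log(\zeta n+1)/N$ width.
\item \emph{Upper deviation.} The maximum exceeds the value at $\optbbet(a)$ by at most a term handled via a supermartingale argument. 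Since $\prod \bbet^\trans\bE_i/\optbbet^\trans\bE_i$ is a nonnegative supermartingale for each fixed $\bbet$ by optimality, mixing over the Dirichlet prior and using the Cover regret bound shows the excess is at most $R^\CO_N/N$ plus a $\log(1/\delta)/N$ deviation. This explains the $R^\CO$ term in $\UCB$.
\end{itemize}
With $\delta\asymp(\zeta n+1)^{-\gamma}$ and $\gamma>2$, a union bound over the possible values of $N_a(n-1)\le n$ makes the failure probabilities summable.

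The usual UCB decomposition then applies. A suboptimal arm $a$ is pulled at time $n$ only if one of three events occurs: $\UCB_{\optarm}(n)<\ell_\Q(\optarm)$, or the upper deviation for arm $a$ fails, or $N_a(n-1)\lesssim \log n/\Delta_a^2$ (width exceeding the gap $\Delta_a = \ell_\Q(\optarm)-\ell_\Q(a)$). This yields $\EE_\Q[N_a(n)]=O(\log n)$, hence $\Rcal^\MAB_\nQ=\sum_a\Delta_a\EE[N_a(n)]=O(\log n)$.

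The subtle point is that \cref{definition:allocation regret} takes an expectation over $A_i$. If a pathwise version is needed in the proof of \cref{proposition:oracle log-optimality abstract}, Borel--Cantelli on the summable failure events gives $N_a(n)=O(\log n)$ almost surely as well.

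\emph{Remaining claims.} Applying \cref{proposition:oracle log-optimality abstract} gives log-optimality and $n^{-1}\log W_n\to\max_{a,\bbet}\EE_\Q\log(\bbet^\trans\bE_1(a))$ almost surely. Claim (i) follows since the strong law gives $n^{-1}\log W^\star_\nQ\to\ell_\Q(\optarm)$, the same limit.

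For the unimprovability bound, let $\tW\le\barW$ be of the form \eqref{eq:prelim-test supermartingale}. Then
\[
\log\barW_n\le\sum_i\log(\optbbet(\optarm)^\trans\bE_i(\optarm))+\sum_i\log\frac{\bbet_i^\trans\bE_i(A_i)}{\optbbet(\optarm)^\trans\bE_i(\optarm)}.
\]
Define $L_n = \prod_i \bbet_i^\trans\bE_i(A_i)/\optbbet(\optarm)^\trans\bE_i(\optarm)$. For any $\Hcal$-predictable $(A_i,\bbet_i)$, each factor has conditional $\Q$-expectation at most $1$; this holds by the KKT optimality of $\optbbet(a)$ for arm $a$ together with $\ell_\Q(a)\le\ell_\Q(\optarm)$ under independence within rounds. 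The step needing care is arm $a \neq \optarm$: because $\bE(a)$ and $\bE(\optarm)$ come from the same draw, the cross-arm ratio is not obviously mean-bounded. If that direct argument fails, I would instead bound $\sum_i\log(\bbet_i^\trans\bE_i(A_i))$ by $\sum_i\EE_\Q[\log(\optbbet(A_i)^\trans\bE(A_i))]$ plus a martingale term with bounded-above increments, whose sublinearity follows from a Freedman-type SLLN. This gives the $\limsup$ bound with $\ell_\Q(\optarm)$.

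If the mean-bounded route does succeed, $L_n$ is a nonnegative supermartingale, so $\limsup n^{-1}\log L_n\le 0$ almost surely by Ville's inequality with $\delta_n=n^{-2}$ and Borel--Cantelli. The strong law applied to the first sum then gives the bound.
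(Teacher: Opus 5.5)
Your treatment of (i), (ii) and the allocation-regret step follows the paper's route: regret-inflated UCB widths, the numeraire property, and the negative-moment bound $\EE_\Q[1/\optbbet^\trans\bE_1]\le 1$ obtained from $\tbbet$. Your Borel--Cantelli remark giving $N_a(n)=O(\log n)$ almost surely is a worthwhile refinement. The gap is the comparator bound $\limsup_n n^{-1}\log\tW_n\le\max_{(a,\bbet)}\EE_\Q[\log(\bbet^\trans\bE_1(a))]$, which is the real content of multi-armed log-optimality. You cannot offload it onto \cref{proposition:oracle log-optimality abstract}: the paper's proof of that proposition only establishes the growth-rate limit, and the comparator bound is proved inside this theorem.

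Both of your routes to the comparator bound break. First, the cross-arm factor $\bbet_i^\trans\bE_i(a)/\optbbet(\optarm)^\trans\bE_i(\optarm)$ can have conditional mean above one. Take one-sided mean testing with $\mu_0=0.1$, $Y(a)\sim\Bern(1/2)$ and $Y(\optarm)\equiv 0.2$. Arm $\optarm$ is optimal (growth $\log 2>\log(5/3)$), yet a comparator always playing $(a,\optbbet(a))$ has factors with mean $\EE_\Q[\optbbet(a)^\trans\bE_1(a)]/2=25/18>1$. Second, the fallback fails as well. The increments $D_i=\log(\bbet_i^\trans\bE_i(A_i))-\EE_\Q[\log(\bbet_i^\trans\bE_i(A_i))\mid\Hcal_{i-1}]$ are bounded above only by $\log b$ minus that conditional mean. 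For an arbitrary predictable portfolio, that mean can be arbitrarily negative, even $-\infty$. With Bernoulli $Y$ and $\lambda_i\uparrow 1$, one gets $D_i=c_i(\1\{Y_i=1\}-\PP_\Q(Y_1=1))$ with $c_i=\log\frac{1-\lambda_i+\lambda_i/\mu_0}{1-\lambda_i}$ growing as fast as you like. For $c_i=i$, $\limsup_n n^{-1}\sum_{i\le n}D_i=+\infty$ almost surely. So no martingale SLLN can absorb the slack in $\EE_\Q[\log(\bbet_i^\trans\bE_i(A_i))\mid\Hcal_{i-1}]\le\EE_\Q[\ell_{1,\Q}(A_i)]$.

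The missing idea, which is the paper's, is to compare first with the \emph{same-arm} oracle process $\tW_\nQ:=\prod_{i=1}^n\optbbet(A_i)^\trans\bE_i(A_i)$. The arbitrary portfolios then enter only through the within-arm ratio $\barW_n/\tW_\nQ$. This ratio is a nonnegative $\Q$-supermartingale by the KKT inequality for the arm actually pulled (\cref{lemma:allocation-wise numeraire portfolio}). Markov's inequality at level $e^{\epsilon n}$ plus Borel--Cantelli then gives $\limsup_n n^{-1}\log(\barW_n/\tW_\nQ)\le 0$, with no moment conditions on $\log(\bbet_i^\trans\bE_i)$. The only fluctuation left in log-space is $n^{-1}\sum_{i\le n}(\ell_{i,\Q}(A_i)-\EE_\Q[\ell_{1,\Q}(A_i)])$. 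It involves oracle portfolios only, whose log-increments obey the arm-uniform MGF bound of \cref{lemma:sub-exponential-log-wealth-increments}. It therefore vanishes almost surely by \cref{lemma:exponential concentration for bandit sequence - centered with independent copies} and Borel--Cantelli. Finally, $\EE_\Q[\ell_{1,\Q}(A_i)]\le\EE_\Q[\ell_{1,\Q}(\optarm)]$ pointwise, so arms are compared only through expectations, never inside a ratio.
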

The proof of \cref{theorem:log-optimality of spruce} can be found in
\cref{proof:log-optimality of spruce}. It relies on deriving a nonasymptotic concentration inequality for $|\log(W_{\nQ}^\star / W_n)|$ and applying the Borel-Cantelli lemma.
\begin{figure}
    \centering
    \begin{subcaptionblock}{0.485\textwidth}
        \centering 
        \includegraphics[width=\linewidth]{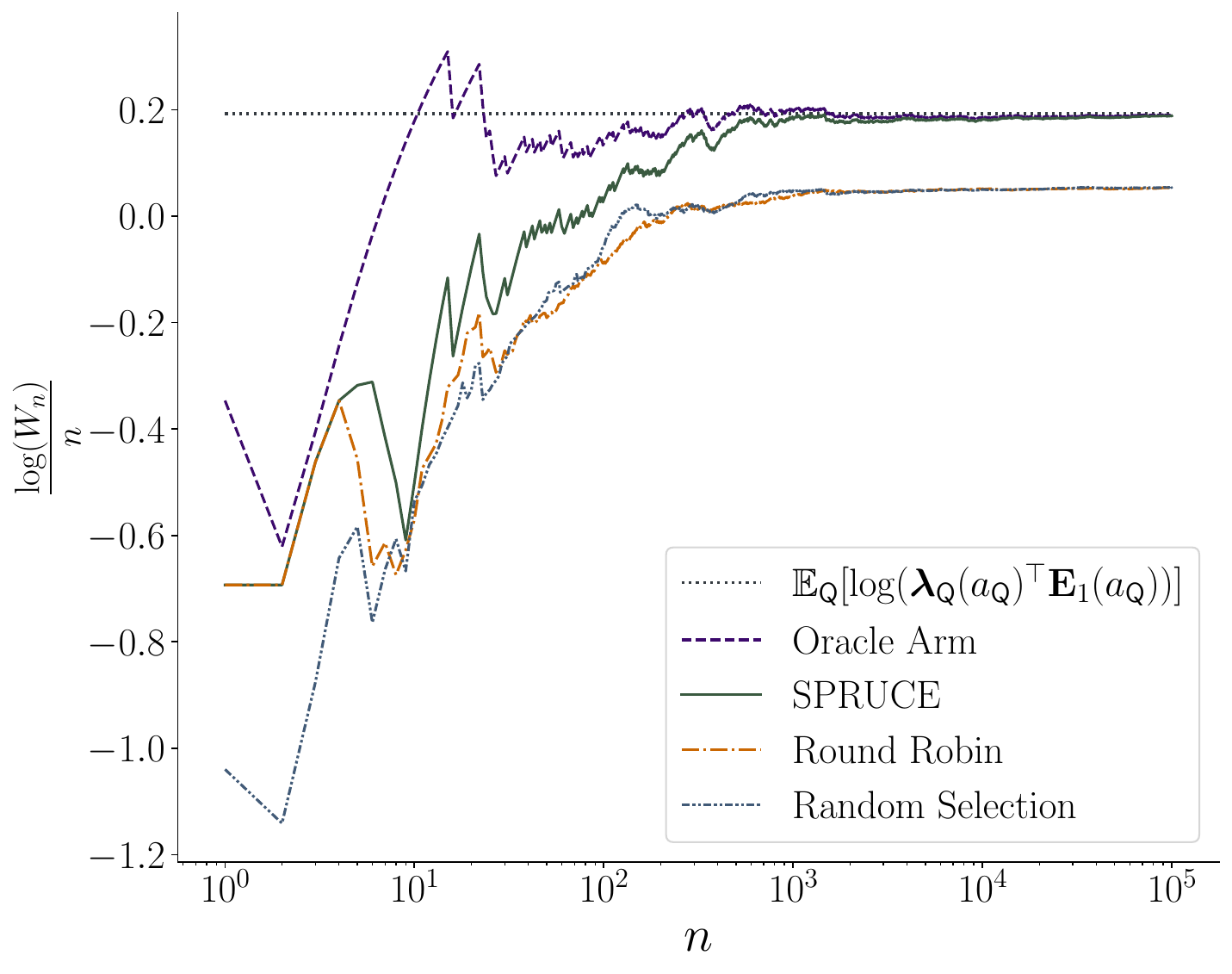}
    \end{subcaptionblock}
    \hfill
    \begin{subcaptionblock}{0.485\textwidth}
        \centering 
        \includegraphics[width=\linewidth]{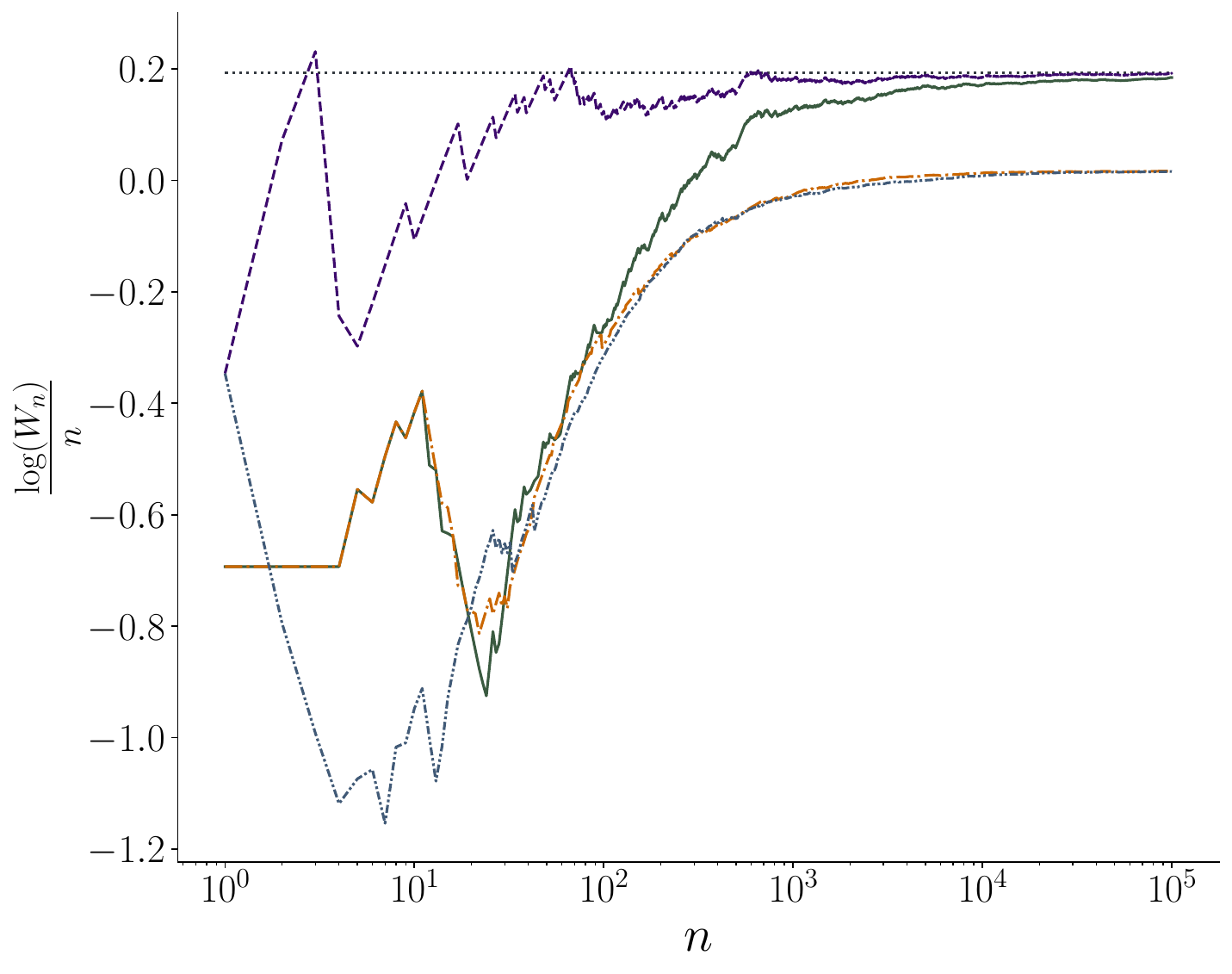}
    \end{subcaptionblock}
    \caption{Empirical growth rates for the one-sided bounded mean testing problem from \cref{example:one-sided mean testing} under ``easy'' (left) and ``hard'' (right) data generating processes. We consider three algorithms in addition to \SPRUCE. \emph{Oracle Arm} has oracle access to and solely pulls the optimal arm. \emph{Round Robin} pulls the arms one-by-one until
    all of them have been selected, and starts the process over again. 
    \emph{Random Selection} samples uniformly at random the arm to be played in 
    round $n$. 
    All four algorithms employ the regret-based test statistic
    from \cref{algorithm:spruce} and only differ in the way the arms are
    selected. Lastly, we note that the empirical growth rates of Round Robin and
    Random Selection are close to but nevertheless strictly greater than zero.}
    \label{fig:bounded-mean-testing}
\end{figure}
The exact inequality employed is a technical device, but its derivation involves another inequality that describes how $\frac{1}{n} \log(W_n)$ concentrates around the optimal growth rate. We present the latter inequality here and subsequently discuss its interpretation.
\begin{lemma}
\label{lemma:exponential concentration for bandit sequence - optimal}
  Let $\bE_1, \dots, \bE_n$ be $(d+1)$-vectors of $\Pe$-values satisfying 
  \cref{assumption:class of eprocesses}. Let $\infseqn{A_n}$ and $\infseqn{\bbet_n(A_n)}$ be chosen according to \cref{algorithm:data collection} (not necessarily according to \SPRUCE{}). Let $\infseqn{W_n}$ be a $\Pe$-process satisfying \cref{assumption:arm-wise products}. Furthermore, let $R_{\nQ}^{\MAB}$ and
  $R_n^{\mathrm{Port}}(a)$ defined for each $a \in \cA$ be monotonic sequences bounding the allocation and arm-wise portfolio regrets, respectively: $\Rcal_\nQ^\MAB \leq R_\nQ^\MAB$ and $\Rcal^{\Port}_n(a) \leq R_n^\Port(a)$ for each $a \in \cA$ and $n \in \NN$. Then, for any $\eps > 0$ and any $m \in \NN$,
  \begin{align}
    &\PP_\Q \Paren{\sup_{n \geq m } \left \lvert \frac{1}{n} \log(W_n) - \max_{(a, \bbet) \in \Acal \times \simplex}\EE_\Q \Brac{\log
    \Paren{\bbet^\trans \bE(a)}} \right
      \rvert \geq \eps}\\
        &\quad \leq
         \sum_{n=m}^\infty \1 \left \{ R_{\nQ}^\MAB \geq
        \epsilon n / 3 \right \} 
        + \sum_{n=m}^\infty \1  \left \{ \sum_{a=1}^K R^\Port_n(a) \geq \eps n / 3\right  \} \\
        &\quad+ 2K \sum_{n=m}^\infty n \left ( \exp \left \{ - \frac{\eps^2n }{72bK^2} \right \} + \exp \left \{ -\frac{\eps n }{12K} \right \} \right )
        + \sum_{n=m}^\infty \exp \left \{ -n\eps / 3 \right \} \mper
  \end{align}
\end{lemma}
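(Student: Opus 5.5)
We follow the three-line decomposition from the proof sketch of \cref{proposition:oracle log-optimality abstract}, now with explicit tail bounds. Write $\ell^\star \coloneqq \max_{(a,\bbet)}\EE_\Q[\log(\bbet^\trans\bE(a))] = \EE_\Q[\log(\optbbet(\optarm)^\trans\bE(\optarm))]$. Then $\ell^\star - n^{-1}\log(W_n) = T_1(n)+T_2(n)+T_3(n)$, where:
\begin{itemize}
\item $T_1(n) = n^{-1}\Rcal^\MAB_\nQ$ is the normalized allocation regret;
\item $T_2(n)$ is the centered sum $n^{-1}\sum_{i}\sum_a \1\{A_i=a\}\big(\EE_\Q[\log(\optbbet(a)^\trans\bE(a))]-\log(\optbbet(a)^\trans\bE_i(a))\big)$;
\item $T_3(n) = n^{-1}\sum_a\big(\sum_i \1\{A_i=a\}\log(\optbbet(a)^\trans\bE_i(a)) - \log W_n(a)\big)$.
\end{itemize}

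On the event $\{\sup_{n\ge m}|\cdot|\ge\eps\}$, some $n\ge m$ has $|T_1|+|T_2|+|T_3|\ge\eps$, so at least one term exceeds $\eps/3$ in absolute value. A union bound over $n\ge m$ and over the three terms then produces the four sums in the statement.

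\textbf{The deterministic terms.} We have $T_1(n)\ge0$ by optimality of $\optarm$, and $T_1(n)\le R^\MAB_\nQ/n$. So $\{|T_1(n)|\ge\eps/3\}\subseteq\{R^\MAB_\nQ\ge\eps n/3\}$, which is deterministic and contributes the indicator $\1\{\cdot\}$ term. For the upper side of $T_3$, maximizing over $\bbet$ gives $T_3(n)\le n^{-1}\sum_a\Rcal^\Port_n(a)\le n^{-1}\sum_a R^\Port_n(a)$, since $N_a(n)\le n$ and the bounds are monotone; this gives the second indicator sum.

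The lower side of $T_3$, namely $\log W_n(a)$ exceeding $\sum_i\1\{A_i=a\}\log(\optbbet(a)^\trans\bE_i(a))$ by a lot, is not controlled by regret. I would control it by a likelihood-ratio/Ville argument. The key fact is that $\prod_i \big((\bbet_i^\trans\bE_i(A_i))/(\optbbet(A_i)^\trans\bE_i(A_i))\big)$ is a nonnegative $\Q$-supermartingale with initial value $1$, because $\optbbet(a)$ is Kelly-optimal. The first-order optimality condition of the concave program gives $\EE_\Q[\bbet^\trans\bE(a)/\optbbet(a)^\trans\bE(a)]\le1$ for all $\bbet\in\simplex$. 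Markov's inequality at each fixed $n$ then gives $\PP_\Q(\log(W_n/\prod\cdots)\ge n\eps/3)\le e^{-n\eps/3}$, which is exactly the last sum. This needs $W_n$ to be pathwise dominated by such a product; that holds for $\barW^\UP$, and by \cref{lem:spruce-test-statistic-eprocess} also for $W^\CO$.

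\textbf{The martingale term $T_2$.} Split $T_2$ over arms, and require each arm's piece to exceed $\eps/(3K)$. For a fixed arm, the obstacle is that the summands are indexed by a random, adaptively chosen subsequence. I would handle this with the standard independent-copies/stack trick, presumably \cref{lemma:exponential concentration for bandit sequence - centered with independent copies}. Enumerate the $j$-th pull of arm $a$, and note that the $j$-th reward has the law of an i.i.d. sequence $Z_j(a)$. The deviation event is then contained in the union over $k=N_a(n)\in\{1,\dots,n\}$ of $\{|\sum_{j\le k}(Z_j-\EE Z_j)|\ge n\eps/(3K)\}$; this union contributes the factor $n$, the factor $2$ (two sides), and the factor $K$ (arms).

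For each $k$, apply a Bernstein inequality to $Z=\log(\optbbet(a)^\trans\bE(a))$. This variable is bounded above by $\log b$ but may be unbounded below, so I would use a one-sided Bernstein inequality for variables bounded above, with variance proxy controlled via $\EE[(\log X)^2]\lesssim b$. That variance bound follows because $\tilde\bbet$ gives $\EE_\Q[1/\optbbet^\trans\bE]\le1$ by the optimality condition, so the lower tail of $\log X$ is exponentially light. Using $k\le n$ then produces the sub-Gaussian term $\exp\{-\eps^2n/(72bK^2)\}$ and the sub-exponential term $\exp\{-\eps n/(12K)\}$.

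I expect this step to be the main obstacle: obtaining a variance/moment bound of order $b$ for the possibly unbounded-below log increment, uniformly over $\Q$, together with correct constants. My first attempt is the Chernoff bound with $\EE_\Q[e^{-\theta Z}]\le\EE_\Q[(\tilde\bbet^\trans\bE/\optbbet^\trans\bE)^\theta]\le1$ for $\theta\in[0,1]$, which handles the lower tail cleanly.
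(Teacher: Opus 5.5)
Your architecture is exactly the paper's. It uses the same three-term decomposition with an $\eps/3$ split and a union bound over $n\ge m$, and the same deterministic treatment of the allocation term and of the regret side of $T_3$. The other side of $T_3$ is handled, as in the paper, by Markov's inequality on the numeraire supermartingale; your remark that this needs $W_n$ to be pathwise dominated by a predictable-portfolio product is what the paper uses implicitly via $W\in\Wcal$ in \cref{corollary:arm-wise-portfolio-concentration}. The stacked-rewards reduction with a union over $N_a(n)\in\{1,\dots,n\}$, arms and sides, giving $2Kn$, also matches.

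\textbf{The gap.} It is the step you flagged, and it is the only source of the constants $72bK^2$ and $12K$: the per-$s$ tail bound for $\sum_{j\le s}(Z_j-\mu_a)$, with $Z=\ell_{1,\Q}(a)$ and $\mu_a=\EE_\Q[Z]$. The concrete mechanism you give for the unbounded side fails, because $\EE_\Q[e^{-\theta Z}]\le 1$ is uncentered. Chernoff then gives
\[
\PP_\Q\Big(\sum_{j\le s}(\mu_a-Z_j)\ge t\Big)\le e^{-\theta t}\big(e^{\theta\mu_a}\,\EE_\Q[e^{-\theta Z}]\big)^s\le e^{-\theta(t-s\mu_a)}.
\]
This is vacuous once $s\mu_a\ge t$. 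With $t=n\eps/(3K)$ and $s$ near $n$, that happens for every $\eps\le 3K\mu_a$, and typically $\mu_a>0$, so no $\exp\{-c\eps^2 n\}$ term can come out of it.

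A one-sided Bernstein inequality for variables bounded above does not rescue this side: it only controls $\sum_j(Z_j-\mu_a)$ from above. That half can in fact be completed with the stated constants, using $\mathrm{Var}_\Q(Z)\le 2+(\log b)^2\le 2b$ and the fact that the event is empty once $t>s\log b$. On the side where $Z$ is unbounded, however, a variance bound is not enough; you need $\log\EE_\Q[e^{\theta(\mu_a-Z)}]=O(\theta^2)$ near $\theta=0$.

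\textbf{How the paper fills it.} The missing idea is to bound the \emph{centered} MGF, using $0\le\mu_a\le\log b$ to pay for the centering; this is \cref{lemma:sub-exponential-log-wealth-increments}.
For $\theta\in[0,1]$, $\EE_\Q[e^{\theta(Z-\mu_a)}]=e^{-\theta\mu_a}\EE_\Q[(\optbbet(a)^\trans\bE(a))^\theta]\le b^\theta$, since $\mu_a\ge 0$ by comparison with $\tilde\bbet$.
For $\theta\in[-1,0)$, Jensen and your inequality $\EE_\Q[1/(\optbbet(a)^\trans\bE(a))]\le 1$ give $\EE_\Q[e^{\theta(Z-\mu_a)}]\le e^{|\theta|\mu_a}\le b^{|\theta|}$, since $\mu_a\le\log b$.

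The remaining chain is as follows.
A bounded MGF on $[-1,1]$ yields $\EE_\Q|Z-\mu_a|^p\le bp!$ (\cref{lemma:bound-on-pth-moment-of-log-wealth-increments}).
A Taylor expansion whose linear term vanishes by centering then gives $\EE_\Q[e^{\theta(Z-\mu_a)}]\le 1+b\theta^2/(1-|\theta|)\le e^{2b\theta^2}$ for $|\theta|\le\tfrac12$ (\cref{lemma:range of subgaussianity}).
Finally, Chernoff with $\theta=\eps/(12bK)\wedge\tfrac12$, using $e^{2bs\theta^2}\le e^{2bn\theta^2}$ for $s\le n$, gives $\exp\{-\eps^2n/(72bK^2)\}+\exp\{-\eps n/(12K)\}$ per arm, per $s$ and per side. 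This is \cref{lemma:exponential concentration for bandit sequence - centered with independent copies} at level $\eps/3$.

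One caution if you check constants. The paper applies its moment lemma under the hypothesis $\EE_\Q[e^{|Z-\mu_a|}]\le b$. The one-sided bounds give directly only $\EE_\Q[e^{|Z-\mu_a|}]\le be^{-\mu_a}+e^{\mu_a}\le b+1$. This changes the $b$ in the first exponential, not the argument.
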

The proof of \cref{lemma:exponential concentration for bandit sequence - optimal} can be found in \cref{proof:exponential concentration for bandit sequence - optimal}. Notice that the summands in the third and fourth terms are summable so the series from $n = m$ to $\infty$ vanishes as $m \to \infty$. Inspecting the summands of the first two terms, notice that if $R_{\nQ}^\MAB$ and $R_n^\Port$ are sublinear, then only finitely many summands are non-zero. Therefore, for $m$ sufficiently large, the first two terms would be zero. Taken together, we have that if the allocation and arm-wise portfolio regrets are sublinear, then the right-hand side of the inequality in \cref{lemma:exponential concentration for bandit sequence - optimal} vanishes as $m \to \infty$, and hence
\begin{equation}
    \PP_\Q \left ( \lim_{n \to \infty} \frac{1}{n} \log(W_n) = \max_{(a, \bbet) \in \Acal \times \simplex}\EE_\Q \Brac{\log
    \Paren{\bbet^\trans \bE(a)}} \right ) = 1.
\end{equation}
Indeed we can see such behavior in \cref{fig:bounded-mean-testing} which depicts how \SPRUCE~converges to the expected log-increment under the optimal arm and its optimal portfolio, and eventually matches the empirical growth rate of an algorithm that has oracle access to the optimal arm and employs the regret-based test statistic from \cref{algorithm:spruce}.

In the following section, we turn to a related problem, deriving lower and upper bounds on the expected number of samples required to reject the global null. While rejection time and growth rate are different goals, the bounds we obtain for the former turn out to depend on exactly the same quantity (the optimal expected log-increment) that appeared in \cref{theorem:log-optimality of spruce}.

\section{Analyzing the Expected Time to Rejection}
\label{section:rejection times}

Let us now consider another notion of ``power'' for sequential hypothesis tests: having a small expected number of
samples to reject some null hypothesis. Concretely, for a given $\Pe$-process $\infseqn{W_n}$ and a desired type-I error level $\alpha \in (0, 1)$, consider the stopping time $\tau_{\alpha} \coloneqq \inf
\Set{n \in \N : W_n \geq 1/\alpha}$ given by the first sample size for which that process exceeds the threshold $1/\alpha$. In this section, our aim is to study the expectation of this stopping time $\EE_\Q[\tau_\alpha]$ for distributions $\Qin$, deriving lower and upper bounds with respect to the space of arm-and-portfolio combinations.

In the single-armed case, the study of lower and upper bounds on such stopping
times has been of interest since the advent of sequential analysis with
\citet{wald1947sequential}, in the context of Breiman's favorable games
\citep{breiman1961optimal}, and especially in the best-arm identification
literature; see the works of
\citet{kaufmann2016complexity,garivier2016optimal,kaufmann2021mixture} and
\citet{agrawal2020optimal,agrawal2021regret,agrawal2021optimal}. For examples in
sequential hypothesis testing see the works of
\citet{chugg2023auditing,shekhar2023nonparametric}, and the
information-theoretic lower and upper bounds on stopping times for general
classes of testing problems found in \citet{agrawal2025stopping}. Lower and
upper bounds over the class of nonparametric portfolios of the form
\eqref{eq:prelim-test supermartingale} can be found in
\citet{waudby2025universal} for the case of $K=1$. The results to follow are
multi-armed analogues of those earlier results, in the setting where vanishing
allocation regret is of central importance.

For a fixed $\Qin$, we begin by deriving lower bounds on $\EE_\Q[\tau_\alpha]$
and we later show that they match an upper bound in the small-$\alpha$
regime for the stopping time $\tau_{\alpha,\Q}^\star \coloneqq \inf\{n \in \NN :
W_\nQ^\star \geq 1/\alpha \}$, where $W_\nQ^\star$ forms the oracle
$\Pe$-process  
constructed using both $a_\Q$ and $\bbet_\Q(a_\Q)$ and which was characterized in
\cref{theorem:log-optimality of spruce}.
\begin{proposition}[A lower bound on the expected rejection time]\label{proposition:stopping-timelower-bound}
  Fix an alternative distribution $\Qin$ and a desired type-I error rate $\alpha
  \in (0, 1)$. Let $\infseqn{\tW_n}$ be any process of the form
  \eqref{eq:prelim-test supermartingale}, that satisfies \cref{assumption:class of eprocesses}, 
  and for which $\bbet_n$ and $A_n$ are
  $\Hcal_{n-1}$-measurable for each $n \in \N$. Define the rejection time
  $\widetilde \tau_\alpha \coloneqq \inf\{ n \in \NN : \tW_n \geq 1/\alpha \}$.
  We then have the following lower bound on the expected time to rejection
  \begin{equation}
    \frac{\EE_\Q[\widetilde \tau_\alpha]}{\log(1/\alpha)} \geq \left ( \max_{(a, \bbet) \in
    \Acal \times \simplex} \EE_\Q \Brac{\log \left (\bbet^\trans \bE_1(a) \right )}\right
    )^{-1} \mper
  \end{equation}
  Moreover, for a given arm $a \in \Acal$, this lower bound holds for the
  stopping time $\tau_\alpha^\bracka \coloneqq \inf \{ n \in \NN :
  \tW_{n}^\bracka \geq 1/\alpha \}$ where $\tW_n^\bracka$ is any
  $\Pe$-process of the form \eqref{eq:prelim-test supermartingale} that
  satisfying \cref{assumption:class of eprocesses}, that pulls the $a$-th arm
  at each time step and selects an $\Hcal$-predictable portfolio $\infseqn{\bbet_n(a)}$.
\end{proposition}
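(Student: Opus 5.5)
The plan is a Wald's-identity argument: at the rejection time the log-wealth is at least $\log(1/\alpha)$, while its expected value grows no faster than $\ell^\star \EE_\Q[\widetilde\tau_\alpha]$. Write
\[
\ell^\star \coloneqq \max_{(a,\bbet)\in\Acal\times\simplex}\EE_\Q[\log(\bbet^\trans\bE_1(a))].
\]
If $\EE_\Q[\widetilde\tau_\alpha]=\infty$ there is nothing to prove. If $\ell^\star\le 0$, then the bound below forces $\EE_\Q[\log \tW_{\widetilde\tau_\alpha}]\le 0<\log(1/\alpha)$, so $\widetilde\tau_\alpha$ cannot be a.s.\ finite with finite mean. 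Hence we may assume $\ell^\star>0$ and $\EE_\Q[\widetilde\tau_\alpha]<\infty$.

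\textbf{Step 1 (one-step bound).} Let $X_i \coloneqq \log(\bbet_i^\trans\bE_i(A_i))$, with values in $[-\infty,\log b]$. Since $\bbet_i$ and $A_i$ are $\Hcal_{i-1}$-measurable, and $\bE_i(a)$ is independent of $\Hcal_{i-1}$ with the same law as $\bE_1(a)$ (this uses the \iidtext{} assumption on the outcome vectors), we get
\[
\EE_\Q[X_i\mid\Hcal_{i-1}] = g(A_i,\bbet_i) \le \ell^\star, \qquad g(a,\bbet)\coloneqq\EE_\Q[\log(\bbet^\trans\bE_1(a))].
\]
Because $X_i\le\log b$, the positive part is integrable and all conditional expectations are well defined, possibly equal to $-\infty$.

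\textbf{Step 2 (optional stopping).} Consider $S_n\coloneqq\sum_{i\le n}(X_i-\ell^\star)$. We apply a Wald-type inequality at $\widetilde\tau_\alpha$:
\[
\EE_\Q\Big[\sum_{i=1}^{\widetilde\tau_\alpha}X_i\Big]=\EE_\Q\Big[\sum_{i\ge1}\1\{\widetilde\tau_\alpha\ge i\}X_i\Big]\le \ell^\star\,\EE_\Q[\widetilde\tau_\alpha].
\]
This is the main technical obstacle, since the $X_i$ are unbounded below. I would handle it by splitting $X_i=X_i^+-X_i^-$. For the positive parts, monotone convergence applies because $X_i^+\le\log b$ and $\EE_\Q[\widetilde\tau_\alpha]<\infty$; this gives $\EE\big[\sum_{i\le\widetilde\tau_\alpha}X_i^+\big]=\sum_i\EE[\1\{\widetilde\tau_\alpha\ge i\}\EE[X_i^+\mid\Hcal_{i-1}]]$. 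The same monotone-convergence identity holds for the negative parts, now allowing $+\infty$. The event $\{\widetilde\tau_\alpha\ge i\}$ lies in $\Hcal_{i-1}$, so combining the two identities with Step 1 yields the display.

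\textbf{Step 3 (conclusion).} By definition of $\widetilde\tau_\alpha$,
\[
\sum_{i\le\widetilde\tau_\alpha}X_i=\log\tW_{\widetilde\tau_\alpha}\ge\log(1/\alpha) \quad\text{a.s.}
\]
Combining with Step 2 gives $\log(1/\alpha)\le\ell^\star\EE_\Q[\widetilde\tau_\alpha]$, which is the claim. For the single-arm statement, run the same argument with $A_i\equiv a$. Step 1 then gives the sharper bound $\EE_\Q[X_i\mid\Hcal_{i-1}]\le\max_{\bbet}g(a,\bbet)\le\ell^\star$, so the same lower bound holds a fortiori for $\tau^\bracka_\alpha$, and in fact with the arm-specific rate.
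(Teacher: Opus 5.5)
Your proof is correct, and it takes a more direct route than the paper's.

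\textbf{The paper's route.} The paper splits $\EE_\Q[\log\tW_\tau]$ into three pieces:
\begin{enumerate}
\item $\EE_\Q[\log(\tW_\tau/\tW_{\tauQ})]\le 0$, by Jensen and optional stopping of the allocation-wise numeraire supermartingale;
\item $\EE_\Q[L_\tau]\le 0$ for $L_n=\log\tW_\nQ-\log W^\star_\nQ$, by a bespoke optional stopping lemma for supermartingales with bounded conditional absolute increments;
\item Wald's identity for the i.i.d.\ sum $\log W^\star_{\tauQ}$.
\end{enumerate}

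\textbf{Your route.} You absorb both the ``portfolio'' comparison and the ``allocation'' comparison into one step. The bound $\EE_\Q[X_i\mid\Hcal_{i-1}]=g(A_i,\bbet_i)\le\ell^\star$ is just the definition of $\ell^\star$ as a maximum over arm--portfolio pairs. You then prove a one-sided Wald inequality by splitting into positive and negative parts. This needs only $X_i\le\log b$ and the independence of $\bE_i(a)$ from $\Hcal_{i-1}$, which the paper also uses. It requires no numeraire property, no integrability control of the negative parts, and no custom optional stopping theorem.

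\textbf{A step you avoid.} You never need $\EE_\Q[\lvert L_n-L_{n-1}\rvert\mid\Hcal_{n-1}]\le B$, which is the hypothesis of the paper's optional stopping lemma. The proof of the corresponding auxiliary lemma actually only shows $\lvert\EE_\Q[L_n-L_{n-1}\mid\Hcal_{n-1}]\rvert\le\log b$. The absolute-increment bound does hold, but it needs a further argument. For example, $\EE_\Q\lvert\log(\optbbet(a)^\trans\bE_1(a))\rvert\le 2\log b$ for each $a$, since the positive part is at most $\log b$ and the mean is nonnegative.

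\textbf{Degenerate cases.} You also handle $\EE_\Q[\widetilde\tau_\alpha]=\infty$ and $\ell^\star=0$; note that $\ell^\star\ge 0$ always, via $\tbbet$. The paper passes over these when it divides by the optimal expected log-increment.

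\textbf{What the paper's decomposition buys.} It gives the intermediate chain $\EE_\Q[\log\tW_\tau]\le\EE_\Q[\log\tW_{\tauQ}]\le\EE_\Q[\log W^\star_{\tauQ}]$ at the stopping time. This mirrors the portfolio-regret/allocation-regret split behind the matching upper bound. It is conceptually informative but not needed for the proposition itself.
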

The proof of \cref{proposition:stopping-timelower-bound} can be found in 
\cref{sec:proof-stopping-time-lower-bound}. The argument relies on comparing
$\tW_n$ to another process that pulls the same sequence of arms but has oracle
access to the log-optimal portfolios $\infseqn{\optbbet(A_n)}$. We show that the
difference between $\tW_n$ and the oracle process forms a supermartingale whose
mean is upper bounded by zero. The proof then follows by Doob's optional stopping
theorem and an application of Wald's identity. We highlight that version of
Doob's optional stopping theorem we employ in the proof of
\cref{proposition:stopping-timelower-bound} only assumes boundedness of the
expected absolute log-increments when we condition on the filtration $\cF_{n-1}$. We provide this version of
Doob's optional stopping theorem in 
\cref{lem:optional-stopping-supermartingale}---and whose proof can be found in 
\cref{proof:optional_stopping-supermartingale}---as this result seems to be new to the literature.

\begin{figure}
    \centering
    \begin{subcaptionblock}{0.485\textwidth}
        \centering 
        \includegraphics[width=\linewidth]{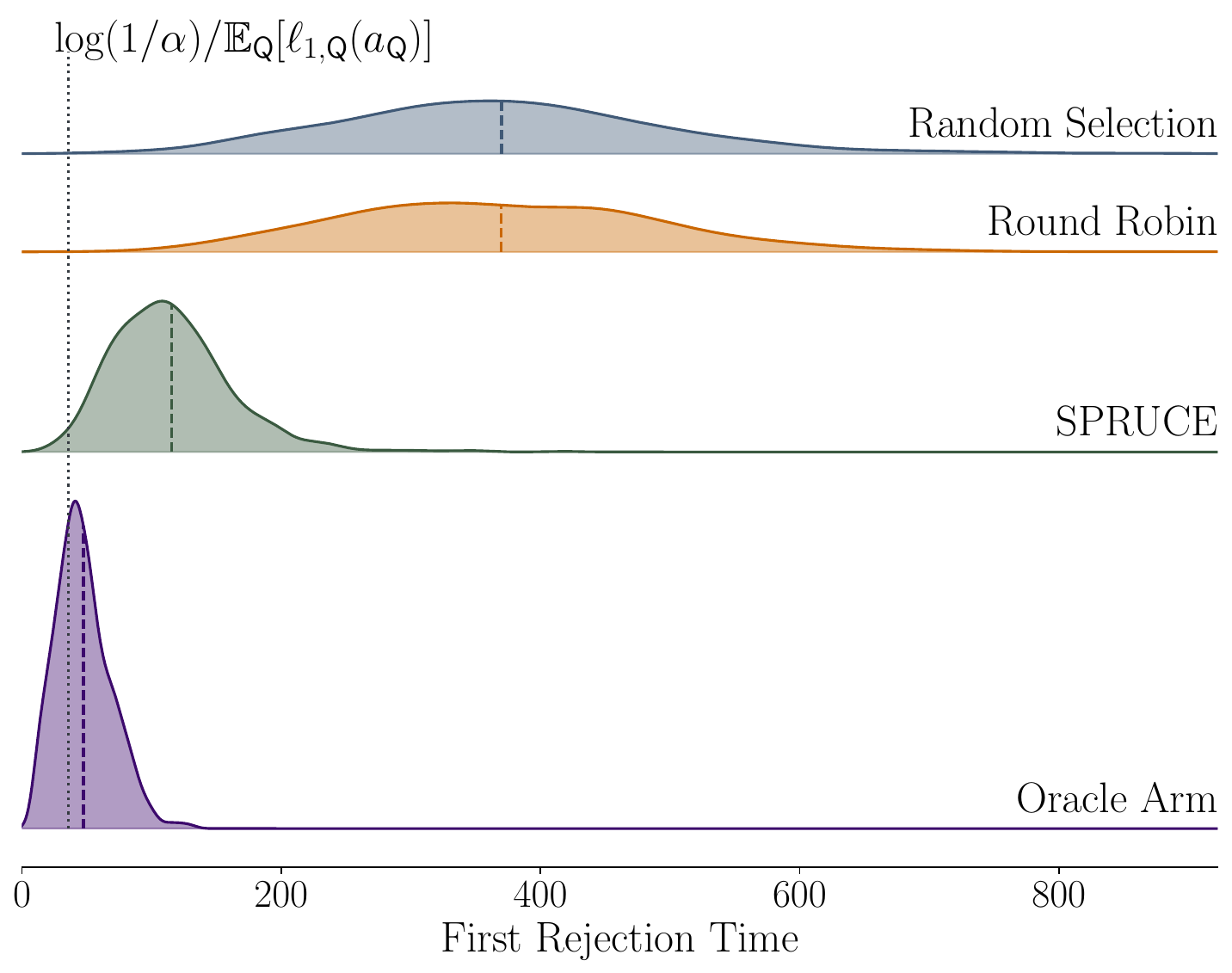}
    \end{subcaptionblock}
    \hfill
    \begin{subcaptionblock}{0.485\textwidth}
        \centering 
        \includegraphics[width=\linewidth]{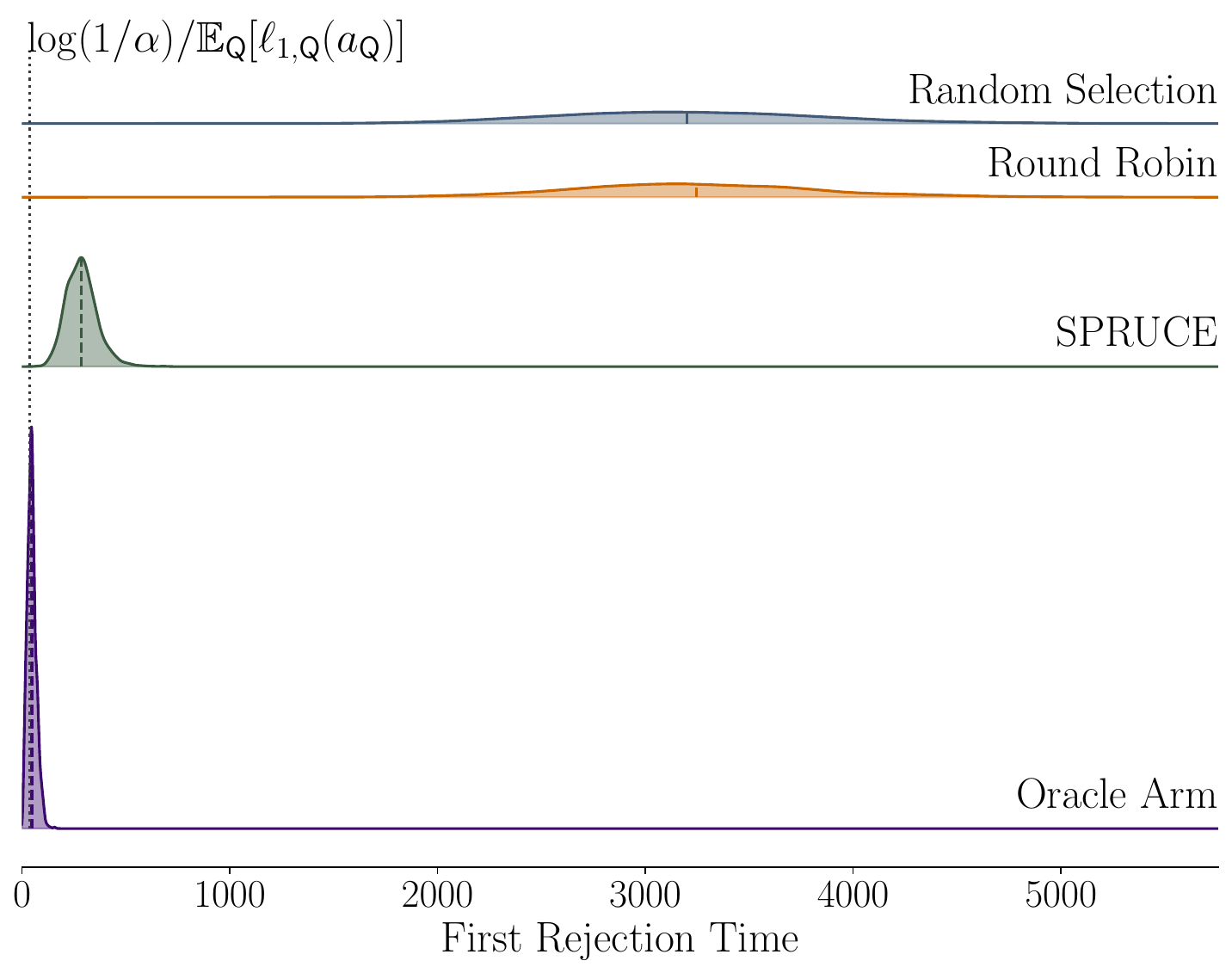}
    \end{subcaptionblock}
    \caption{Distribution of stopping times when $\alpha = 0.001$ for the one-sided bounded mean testing problem from \cref{example:one-sided mean testing} under ``easy'' (left) and ``hard'' (right) data generating processes. In addition to \SPRUCE{}, we evaluate three algorithms whose description can be found the caption of \cref{fig:bounded-mean-testing}.
    We use the following shorthand for the log-increment under the optimal arm and its optimal portfolio: $\ell_{1,\Q}\Paren{\optarm} \coloneqq \logp{\optbbet(\optarm)^\trans \bE_1\Paren{\optarm}}$.}
    \label{fig:bounded-mean-testing-expected-stopping-time}
\end{figure}

In the following theorem, we establish two facts. First, the process $\infseqn{W_\nQ^\star}$ with oracle access to both $a_\Q$ and $\optbbet(a_\Q)$ achieves the lower bound in \cref{proposition:stopping-timelower-bound} in the ``high-confidence'' regime where $\alpha \to 0^+$. Second, the two processes generated according to \SPRUCE{} achieve the same bound in the same high-confidence regime.
\begin{theorem}[An upper bound on the expected rejection time in the high-confidence regime]
\label{thm:expected-stopping-time-upper-bound}
    Fix an alternative $\Qin$. Let $\infseqn{W_n}$ be one of the processes generated according to \SPRUCE{} (\cref{algorithm:spruce}) and define $\tau_\alpha^\SPRUCEtext \coloneqq \inf\{ n \in \NN : W_n \geq 1/\alpha \}$. Moreover, let $\infseqn{W_\nQ^\star}$ be the process given by
    \begin{equation}
        W_\nQ^\star \coloneqq \prod_{i=1}^n\optbbet(a_\Q)^\trans \bE_i(a_\Q)
    \end{equation}
    and define $\tau_{\alpha,\Q}^\star \coloneqq \inf \{ n \in \NN : W_\nQ^\star \geq 1/\alpha \}$.  Then we have
\begin{equation}\label{eq:expected stopping time spruce and oracle}
        \lim_\alphato \frac{\EE_\Q[\tau_\alpha^\SPRUCEtext{}]}{\log(1/\alpha)} = \lim_\alphato \frac{\EE_\Q[\tau_{\alpha, \Q}^\star]}{\log(1/\alpha)} = \left ( \max_{(a, \bbet) \in
    \Acal \times \simplex} \EE_\Q \Brac{\log \left (\bbet^\trans \bE_1(a) \right )}\right
    )^{-1} 
    \end{equation}
\end{theorem}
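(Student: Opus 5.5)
Write $\ell^\star \coloneqq \max_{(a,\bbet)\in\Acal\times\simplex}\EE_\Q[\log(\bbet^\trans\bE_1(a))]$. The plan is to prove, separately for $\tau^\star_{\alpha,\Q}$ and for $\tau_\alpha^\SPRUCEtext$, a $\liminf$ lower bound and a $\limsup$ upper bound on $\EE_\Q[\tau]/\log(1/\alpha)$. We assume $\ell^\star>0$, as is implicit in the statement; otherwise both sides are infinite.

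The lower bounds come directly from \cref{proposition:stopping-timelower-bound}. The oracle process $W^\star_{\cdot,\Q}$ uses the deterministic arm $a_\Q$ and the deterministic portfolio $\optbbet(a_\Q)$, so it has the form \eqref{eq:prelim-test supermartingale}. The \SPRUCEtext{} processes are $\Fcal$-predictable, hence $\Hcal$-predictable. The process $\barW^\UP$ is of the form \eqref{eq:prelim-test supermartingale} directly. For $W^\CO$ we use $W^\CO_n\le\barW^\UP_n$ pathwise (proof of \cref{lem:spruce-test-statistic-eprocess}), which gives $\tau^{\SPRUCEtext}_\alpha(\CO)\ge\tau^\SPRUCEtext_\alpha(\UP)$. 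Hence $\liminf_{\alpha\to0^+}\EE_\Q[\tau]/\log(1/\alpha)\ge 1/\ell^\star$ for all three stopping times.

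For the upper bound we use a tail-sum argument driven by \cref{lemma:exponential concentration for bandit sequence - optimal}. Fix $\eps\in(0,\ell^\star)$ and set $n_\alpha\coloneqq\lceil\log(1/\alpha)/(\ell^\star-\eps)\rceil$. If $\tau>n$ then $\frac1n\log W_n<\frac{\log(1/\alpha)}{n}\le\ell^\star-\eps$ for every $n\ge n_\alpha$. Therefore
\begin{equation}
\EE_\Q[\tau]=\sum_{n\ge0}\PP_\Q(\tau>n)\le n_\alpha+\sum_{n\ge n_\alpha}\PP_\Q\Paren{\Abs{\tfrac1n\log W_n-\ell^\star}\ge\eps}.
\end{equation}
Dividing by $\log(1/\alpha)$ and letting $\alpha\to0^+$, then $\eps\to0$, gives $\limsup\le1/\ell^\star$, provided the series $S_\eps\coloneqq\sum_{n\ge1}\PP_\Q(|\frac1n\log W_n-\ell^\star|\ge\eps)$ is finite; its contribution is then $O(1)$ and vanishes after division by $\log(1/\alpha)$. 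Note that we need summability of the per-$n$ probabilities, not merely the vanishing of the $\sup_{n\ge m}$ probability. We get it by applying \cref{lemma:exponential concentration for bandit sequence - optimal} with $m=n$ and using $\PP_\Q(|\cdot|_n\ge\eps)\le\PP_\Q(\sup_{k\ge n}|\cdot|_k\ge\eps)$. The exponential terms then give contributions of order $\sum_n\sum_{k\ge n}k e^{-ck}=\sum_k k^2e^{-ck}<\infty$. The indicator terms are finitely nonzero once the regret bounds are deterministic and sublinear, so their double sums are finite as well.

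For the oracle $W^\star$ this is immediate: take $K=1$ with zero allocation and portfolio regret relative to the fixed optimal portfolio, or simply note that $W^\star$ is a random walk with bounded-above, integrable increments of mean $\ell^\star$. Then $\EE_\Q[\tau^\star]\le(\log(1/\alpha)+\log b)/\ell^\star$ by Wald's identity and overshoot $\le\log b$, which is cleaner.

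The main obstacle is the \SPRUCEtext{} case. The arm-wise portfolio regret is deterministically bounded by $\sum_a R^{\CO}_{N_a(n)}\le K(d\log(n+1)/2+\log2)$, which is sublinear. The allocation regret of \SPRUCEtext{}, however, is random, and \cref{lemma:exponential concentration for bandit sequence - optimal} needs deterministic monotone bounds. We will therefore insert the event $\{\Rcal^\MAB_{n,\Q}\ge\eps n/3\}$ as an additional probability term, using the UCB analysis of \cref{section:ingredients}. On the good event where each $\UCB_a(n)$ dominates arm $a$'s optimal growth rate, a suboptimal arm is pulled at most $O(\log(\zeta n+1)/\Delta_a^2)$ times. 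The bad event has probability $O(n^{-(\gamma-1)})$ per step by the concentration inequalities for empirical Kelly growth rates, union-bounded over the $n$ steps. So $\PP_\Q(\Rcal^\MAB_{n,\Q}\ge\eps n/3)=O(n^{2-\gamma})$. Summability over $n$ then requires $\gamma>3$; under only $\gamma>2$ we would instead avoid the union bound over steps, bounding $N_a(n)$ directly with high probability via a peeling argument. Verifying this summability is the crux; the rest is bookkeeping.
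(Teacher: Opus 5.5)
Your skeleton matches the paper's proof. The pieces are the tail-sum bound with threshold $n_\alpha=\lceil\log(1/\alpha)/(\ell^\star-\eps)\rceil$ (equivalent to the paper's $\lceil(1+\delta)\log(1/\alpha)/\ell^\star\rceil$), the per-time concentration bound from the proof of \cref{lemma:exponential concentration for bandit sequence - optimal}, and \cref{proposition:stopping-timelower-bound} for the matching lower limit, which the paper uses only implicitly. Your observation $\tau_\alpha^{\SPRUCEtext}(\CO)\ge\tau_\alpha^{\SPRUCEtext}(\UP)$ is correct. So is the Wald-plus-overshoot bound $\EE_\Q[\tau^\star_{\alpha,\Q}]\le(\log(1/\alpha)+\log b)/\ell^\star$, which is cleaner than routing the oracle through the general concentration bound.

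Where you depart is the allocation-regret term, and your worry there is well founded. The paper inserts the deterministic quantity $\sum_a\Delta_{a,\Q}\EE_\Q[N_a(t)]=O(\log t)$ from \cref{theorem:expected suboptimal pulls} into the indicator $\1\{R^\MAB_{t,\Q}\ge\eps t/3\}$. However, the centered step (\cref{lemma:exponential concentration for bandit sequence - centered with independent copies}) regroups the sum by arm, so it centers each term at $\EE_\Q[\ell_{1,\Q}(a)]$ evaluated at $a=A_i$. The decomposition therefore leaves the fluctuation $\frac{1}{t}\sum_a\Delta_{a,\Q}(N_a(t)-\EE_\Q[N_a(t)])$ uncontrolled. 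An $O(\log t)$ bound on $\EE_\Q[N_a(t)]$ gives only $\PP_\Q(N_a(t)\ge ct)=O(\log t/t)$ via Markov, which is not summable. So a summable tail bound on $\sum_a\Delta_{a,\Q}N_a(t)$ is genuinely needed, and the paper does not supply one.

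Your sketch of that bound needs two corrections.
\begin{itemize}
\item The failure probability at step $t$ is $O(t^{1-\gamma})$, not $O(n^{1-\gamma})$, so a union over all $t\le n$ gives only $O(1)$. You must first note that regret at least $\eps n/3$ forces some suboptimal arm $a$ to be pulled at a time $t\in(cn,n]$ with $N_a(t-1)\ge cn$, where $c\coloneqq\eps/(6K\max_a\Delta_{a,\Q})$. (Take the last pull of an arm with $N_a(n)\ge 2cn$.)
\item Putting this term into the $\sup_{k\ge n}$ form of the lemma and then summing over $n$ turns $\sum_n p_n$ into $\sum_k k\,p_k$, which costs a full power of $n$. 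Put it into the per-$n$ bound from the lemma's proof instead.
\end{itemize}

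With these changes, $\gamma>2$ suffices and no peeling is needed. Write $\widehat\mu_{a,s}$ for arm $a$'s in-hindsight maximal average log-increment after $s$ pulls, and $w(s,t)$ for the width in $\UCB_a(t)$. This width is nondecreasing in $t$ and decreasing in $s$. At such a $t$, one of two events holds:
\begin{itemize}
\item $\UCB_{\optarm}(t)<\ell^\star$; or
\item $\widehat\mu_{a,s}\ge\EE_\Q[\ell_{1,\Q}(a)]+\Delta_{a,\Q}/2$ with $s=N_a(t-1)\in[cn,n]$, because there $w(s,t)\le w(\lceil cn\rceil,n)=O(\sqrt{\log n/n})$.
\end{itemize}
By monotonicity in $t$, the union of the first events over $t\in(cn,n]$ lies in $\{\exists s\le n:\widehat\mu_{\optarm,s}+w(s,\lceil cn\rceil)<\ell^\star\}$. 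By optional skipping and the second part of \cref{proposition:regret based confidence interval}, this event has probability at most $n(\zeta cn+1)^{-\gamma}=O(n^{1-\gamma})$. The second event has probability $O(ne^{-\kappa cn})$ by the first part of the same proposition with $\log(1/\alpha)=\kappa s$, for a small $\kappa>0$ depending on $\Delta_{a,\Q}$ and $b$. Both bounds are summable for every $\gamma>2$, which closes the step you identified as the crux.
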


The proof of \cref{thm:expected-stopping-time-upper-bound} can be found in 
\cref{sec:proof-expected-stopping-time}. Note that \cref{thm:expected-stopping-time-upper-bound} follows from nonasymptotic upper bounds on $\EE_\Q[\tau_\alpha^\SPRUCEtext{}]$ and $\EE_\Q[\tau_{\alpha, \Q}^\star]$ that hold for any $\alpha \in (0, 1)$ and reduce to \eqref{eq:expected stopping time spruce and oracle} when $\alphato$. \cref{thm:expected-stopping-time-upper-bound} demonstrates that in the high-confidence regime, the expected time to rejection $\EE_\Q[\tau_\alpha^\SPRUCEtext]$ of \SPRUCE{} will match that of an oracle $e$-process that has access to $(a_\Q,\optbbet(a_\Q))$ from the outset. We depict the empirical distribution of \SPRUCE's rejection times in \cref{fig:bounded-mean-testing-expected-stopping-time}, and compare it to the empirical distributions of two other allocation algorithms as well as an algorithm that has oracle access to the optimal arm but employs the same regret-based test statistic as \SPRUCE.

In the next section we present the technical ingredients needed to arrive at the main results of \cref{section:multi-armed,section:rejection times}. In particular, we present several results on the derivation of logarithmic allocation regret.

\section{Proof Ingredients for Sublinear Allocation Regret}
\label{section:ingredients}

Recall the definition of portfolio-oracle allocation regret from \cref{definition:allocation regret}:
\begin{equation}
    \Rcal_\nQ^\MAB \coloneqq n \EE_\Q \left [ \log(\optbbet(a_\Q) ^\trans \bE(a_\Q) \right ] - \sum_{i=1}^n \EE_\Q \left [ \log( \optbbet(A_i)^\trans \bE(A_i) \right ],
\end{equation}
and recall from \cref{proposition:oracle log-optimality abstract} that when used in conjunction with sublinear arm-wise portfolio regret algorithms played on each arm, it suffices to find an allocation rule so that $\Rcal_\nQ^\MAB$ is sublinear in $n$.
In such a pursuit, one might be tempted to look to
standard multi-armed bandit analyses such as those for upper confidence
bound (UCB) algorithms \citep{lai1985asymptotically}, but
in our setting we lack two key ingredients typically present in such
analyses: ($i$) we do not observe \iidtext{} copies of
$\log(\optbbet(a)^\trans \bE_n(a))$ for any arm $a \in \Acal$ since
$\optbbet(a)$ is always unknown; ($ii$) even if such \iidtext{} copies were
available, we are not willing to assume that $\log\paren{\optbbet(a)^\trans \bE_n(A_n)}$
is $\sigma$-sub-Gaussian for some \emph{a priori} known $\sigma \in (0, \infty)$ given that the log-increments $\log\paren{\optbbet(a)^\trans \bE_n(a)}$ could take arbitrarily large negative values (in principle). Nevertheless, it is still possible to derive sufficiently sharp confidence intervals for the means $\EE_\Q[\log(\optbbet(a)^\trans \bE_1(a)]$ for $a \in \Acal$ using only
the \iidtext{} random variables $\infseqn{\bE_n(a)}$. The approach relies on incorporating a pathwise regret bound into the width of a confidence interval and demonstrating that the random variable $\log\paren{\optbbet(a)^\trans \bE_1(a)}$ is sub-\emph{exponential}, from which UCB-type analyses can be amended to still satisfy logarithmic allocation regret (see e.g., \citep{jia2021multi}). The fact that this random variable is sub-exponential is not immediately obvious and its proof relies on some properties of so-called \emph{numeraire portfolios} \citep{long1990numeraire,karatzas2007numeraire,larsson2025numeraire} which we will introduce as needed.

We carry out the analysis through some lemmas and propositions, ultimately culminating in \cref{theorem:expected suboptimal pulls} which demonstrates that the allocation regret $\Rcal_{\nQ}^\MAB$ is in fact \emph{logarithmic} under \SPRUCE{}. We begin by demonstrating that for each $a \in \Acal$, $\log(\optbbet(a)^\trans \bE_n(a))$ has a finite moment generating function.

\begin{lemma}[Log-increments have finite moment generating functions under optimal portfolios]
\label{lemma:sub-exponential-log-wealth-increments}
Fix $\Q \in \Qcal$ and let $K=1$. Let $\infseqn{\bE_n}$ be ($d+1$)-vectors of $\Pe$-values that satisfy 
\cref{assumption:class of eprocesses} with the constant $b > 1$.
Let $\optbbet \coloneqq \argmax_{\bbet \in \simplex} \E_{\Q}\Brac{
\logp{\bbet^\trans \bE_1}}$.
Then the logarithm of the $\optbbet$-weighted tuples is sub-exponential, meaning
that
\begin{equation}\label{eq:q-a.s. upper bound assumption}
    \forall \theta \in [-1, 1], \quad \E_{\Q} \Brac{\exps{\theta \Paren{
    \logp{\optbbet^\trans \bE_1} - \E_{\Q}\Brac{\logp{\optbbet^\trans \bE_1}}}}}
    \leq b \mper
\end{equation}
\end{lemma}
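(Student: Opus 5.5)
Write $L \coloneqq \logp{\optbbet^\trans \bE_1}$ and $\ell^\star \coloneqq \E_\Q[L]$. The bound will follow from three facts: an almost sure upper bound on $L$, a numeraire property of $\optbbet$, and a sign bound on $\ell^\star$. The two signs of $\theta$ are then handled separately.

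\emph{Step 1 (the two basic facts).} By \cref{assumption:class of eprocesses}, $\optbbet^\trans \bE_1 \le b$ almost surely, so $L \le \log b$. Since $\tilde{\bbet}^\trans \bE_1 = 1$ $\Q$-almost surely and $\optbbet$ maximizes expected log-wealth, we get $\ell^\star \ge \E_\Q[\log(\tilde\bbet^\trans \bE_1)] = 0$. The maximum exists because the objective is concave and upper semicontinuous on the compact simplex.

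\emph{Step 2 (numeraire property).} I will show that
\begin{equation}
\E_\Q\Brac{\frac{\bbet^\trans \bE_1}{\optbbet^\trans \bE_1}} \le 1 \quad \text{for all } \bbet \in \simplex\mper
\end{equation}
The first-order optimality argument goes as follows. For $t \in (0,1)$, concavity gives $\E_\Q[\log((1-t)\optbbet^\trans \bE_1 + t\bbet^\trans \bE_1) - \log(\optbbet^\trans \bE_1)] \le 0$. The integrand equals $\log(1 + t(X-1))$ with $X = \bbet^\trans\bE_1/\optbbet^\trans\bE_1$. Dividing by $t$, the quotient $\log(1+t(X-1))/t$ increases to $X-1$ as $t \downarrow 0$, and it is bounded below by its value at $t=1$, namely $\log X$. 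Monotone convergence then gives $\E_\Q[X] - 1 \le 0$. Applying this with $\bbet = \tilde\bbet$ yields $\E_\Q[1/\optbbet^\trans \bE_1] = \E_\Q[e^{-L}] \le 1$.

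Two technical points need care here. First, $\optbbet^\trans\bE_1 > 0$ almost surely, since $\ell^\star \ge 0$ is finite. Second, $\log X$ must have integrable negative part. This holds because $\log X \ge \log(\bbet^\trans \bE_1) - \log b$, and $\E_\Q[\log(\bbet^\trans\bE_1)]$ could be $-\infty$ for a general $\bbet$. For the case we actually need, $\bbet = \tilde\bbet$, we have $\log X = -L \ge -\log b$, so the lower bound is fine. I expect this monotone-convergence argument to be the main obstacle.

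\emph{Step 3 (combining).*} For $\theta \in [0,1]$, use $L - \ell^\star \le L \le \log b$ to get $\E_\Q[e^{\theta(L-\ell^\star)}] \le b^\theta \le b$.

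For $\theta \in [-1,0)$, write $\theta = -s$ with $s \in (0,1]$. Jensen's inequality (concavity of $x \mapsto x^s$) gives $\E_\Q[e^{-sL}] \le (\E_\Q[e^{-L}])^s \le 1$. Then $\E_\Q[e^{-s(L-\ell^\star)}] = e^{s\ell^\star}\E_\Q[e^{-sL}] \le e^{s\ell^\star} \le e^{\ell^\star}$. It remains to bound $\ell^\star$, and indeed $\ell^\star \le \log b$ by Step 1. So this case is also at most $b$, which completes the bound \eqref{eq:q-a.s. upper bound assumption}.
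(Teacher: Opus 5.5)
Your proof is correct and follows essentially the same route as the paper. The case $\theta\in[0,1]$ uses $\ell^\star\ge 0$ together with $\optbbet^\trans\bE_1\le b$, and the case $\theta\in[-1,0)$ uses Jensen for $x\mapsto x^s$, the numeraire bound $\E_\Q[1/\optbbet^\trans\bE_1]\le 1$ obtained by plugging in $\tbbet$, and $e^{\ell^\star}\le b$; the paper gets this last bound via $e^{\ell^\star}\le\E_\Q[\optbbet^\trans\bE_1]\le b$.

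The only real difference is that you derive the numeraire inequality yourself, by the first-order/monotone-convergence argument, instead of citing Cover and Thomas (Theorem 15.2.2). One wording fix there: the inequality you start from comes from optimality of $\optbbet$ over $\simplex$ (since $(1-t)\optbbet+t\bbet\in\simplex$), not from concavity.
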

The proof of \cref{lemma:sub-exponential-log-wealth-increments} can be found in
\cref{proof:sub-exponential-log-wealth-increments}. We remark that the proof
crucially relies on the fact that $\optbbet$ is the optimal portfolio under
$\Q$, and the inequality \eqref{eq:q-a.s. upper bound assumption} may not hold
if some other sub-optimal portfolio $\bbet \in \simplex$ were considered
instead. The lemma is used to deduce the following tail bound on the
deviation between the average of the log-increments under the best in hindsight
portfolio---i.e., $\max_{\bbet \in \simplex} \frac{1}{n} \sum_{i=1}^n \logp{\bbet^\trans
\bE_i}$---and the expected log-increment under the log-optimal portfolio.

\begin{proposition}\label{proposition:regret based confidence interval}Fix $\Qin$ and let $K=1$.
  Let $\infseqn{\bE_n}$ be $(d+1)$-vectors of $\Pe$-values satisfying
  \cref{assumption:class of eprocesses} with the constant $b > 1$, and
  suppose that there exists an algorithm for selecting the portfolios
  $\infseqn{\bbet_n}$ which has a pathwise portfolio regret bound of
  $\Rcal_n^\Port \leq R_n^\Port$ for each $n \in \NN$. Then for any $\alpha \in
  (0, 1)$,
  \begin{equation}
    \PP_\Q \left ( \max_{\bbet \in \simplex}\frac{1}{n} \sum_{i=1}^n \log \Paren{\bbet^\trans \bE_i} - \EE_\Q\Brac{\log\Paren{\optbbet^\trans \bE_1}} \geq \sqrt{\frac{8b\log(1/\alpha)}{n}} + \frac{5 \log (1/\alpha)}{n} + \frac{R_n^\Port}{n}\right ) \leq \alpha.
  \end{equation}
  Moreover,
  \begin{equation}
    \PP_\Q \left ( \EE_\Q\Brac{\log\Paren{\optbbet^\trans \bE_1}} - \max_{\bbet
    \in \simplex} \frac{1}{n} \sum_{i=1}^n \log \Paren{\bbet^\trans \bE_i} \geq \sqrt{\frac{8b\log(1/\alpha)}{n}} + \frac{4 \log (1/\alpha)}{n}\right ) \leq \alpha.
  \end{equation}
\end{proposition}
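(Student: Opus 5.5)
The plan is to sandwich the in-hindsight maximum between two quantities that are sums of i.i.d.\ sub-exponential variables. Write $\ell_i \coloneqq \log(\optbbet^\trans \bE_i)$ and $\mu \coloneqq \EE_\Q[\ell_1]$. By \cref{lemma:sub-exponential-log-wealth-increments}, the centered variables $\ell_i - \mu$ have moment generating function bounded by $b$ on $[-1,1]$. A standard Bernstein-type argument then gives, for $\theta \in [0,1/2]$, a bound of the form
\begin{equation}
\log \EE_\Q\bigl[e^{\theta(\ell_1-\mu)}\bigr] \leq 2b\theta^2 .
\end{equation}
To obtain it, Taylor expand and bound the moments $\EE|\ell_1-\mu|^k \leq k!\, 2b$ using $e^{|x|}\leq e^x+e^{-x}$. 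Chernoff plus optimization over $\theta$ yields
\begin{equation}
\PP_\Q\Bigl(\pm\tfrac1n\textstyle\sum_i(\ell_i-\mu) \geq \sqrt{8b\log(1/\alpha)/n} + 4\log(1/\alpha)/n\Bigr) \leq \alpha .
\end{equation}
I would track constants so the linear term is $4\log(1/\alpha)/n$, matching the second claim.

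\textbf{Lower-tail claim (the "Moreover" statement).} This is immediate. Since $\max_{\bbet}\frac1n\sum_i\log(\bbet^\trans\bE_i) \geq \frac1n\sum_i \ell_i$ pathwise, the event in question is contained in the event that $\mu - \frac1n\sum_i\ell_i$ exceeds the threshold. Apply the lower-tail sub-exponential bound with $\theta\in[-1/2,0]$.

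\textbf{Upper-tail claim.} The max over $\bbet$ is not directly a sum of i.i.d.\ terms, so I use the portfolio regret to pass to the wealth of the algorithm:
\begin{equation}
\max_{\bbet}\sum_i\log(\bbet^\trans\bE_i) \leq \log W_n + R_n^\Port, \qquad W_n=\prod_i\bbet_i^\trans\bE_i .
\end{equation}
It then suffices to control $\frac1n\log(W_n) - \mu$. Here I exploit the numeraire property of $\optbbet$: by first-order optimality of the concave objective over the simplex, for every $\bbet\in\simplex$,
\begin{equation}
\EE_\Q\bigl[\bbet^\trans\bE_1/\optbbet^\trans\bE_1\bigr]\leq 1 .
\end{equation}
Hence $\prod_i \bbet_i^\trans\bE_i/\optbbet^\trans\bE_i = W_n/e^{\sum_i\ell_i}$ is a nonnegative $\Q$-supermartingale with initial value $1$, since $\bbet_i$ is predictable and the $\bE_i$ are i.i.d. Markov's (or Ville's) inequality gives $\PP_\Q(\log W_n - \sum_i\ell_i \geq \log(1/\alpha')) \leq \alpha'$. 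Combining this with the upper-tail sub-exponential bound on $\frac1n\sum_i(\ell_i-\mu)$ via a union bound (splitting $\alpha$ appropriately, or more cleanly multiplying the two exponential factors inside a single Chernoff bound on $\log W_n - n\mu$) yields
\begin{equation}
\frac1n\log W_n - \mu \leq \sqrt{8b\log(1/\alpha)/n} + 5\log(1/\alpha)/n
\end{equation}
with probability at least $1-\alpha$. The extra $\log(1/\alpha)/n$ relative to the lower tail is exactly the numeraire slack.

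\textbf{Main obstacle.} The delicate step is getting the constants to come out exactly, with no $\log 2$ from a union bound. I would first try a single Chernoff argument: for $\theta\in[0,1/2]$, apply Hölder or the supermartingale property to bound
\begin{equation}
\EE_\Q\bigl[(W_n/e^{\sum_i\ell_i})^{\theta}\, e^{\theta\sum_i(\ell_i-\mu)}\bigr].
\end{equation}
Failing that, I would accept slightly altered constants. The verification of the sub-exponential moment bound from \cref{lemma:sub-exponential-log-wealth-increments} is routine.
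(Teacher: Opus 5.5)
Your plan follows the paper's argument. For the lower tail you use $\max_{\bbet}\sum_i\log(\bbet^\trans\bE_i)\ge\sum_i\ell_i$ plus a sub-exponential Chernoff bound. For the upper tail you use the regret bound, the numeraire supermartingale $W_n e^{-\sum_i\ell_i}$ with Markov, and the same Chernoff bound.

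The step you flag as delicate is exactly where the paper is loose. It combines the two upper-tail pieces by a union bound, and its displayed chain ends at $2\alpha$, not the $\alpha$ claimed in the statement. Your H\"older idea repairs this. Take $\theta\in(0,1/3]$ and write $(\bbet_i^\trans\bE_i)^\theta=(\bbet_i^\trans\bE_i/\optbbet^\trans\bE_i)^\theta(\optbbet^\trans\bE_i)^\theta$. Apply conditional H\"older with exponents $1/\theta$ and $1/(1-\theta)$. The numeraire inequality bounds the first factor by $1$, and the MGF bound at $\theta/(1-\theta)\le1/2$ handles the second. This gives $\EE_\Q[(\bbet_i^\trans\bE_i)^\theta\mid\Hcal_{i-1}]\le\exp\{\theta\mu+2b\theta^2/(1-\theta)\}$ and hence
\begin{equation}
\PP_\Q\bigl(\log W_n-n\mu\ge n\epsilon\bigr)\le\exp\bigl\{-n\theta\epsilon+2bn\theta^2/(1-\theta)\bigr\}.
\end{equation}
Set $L=\log(1/\alpha)$ and $\theta=\min\{\sqrt{L/(2bn)},1/3\}$; the resulting threshold is $L/(n\theta)+2b\theta/(1-\theta)$.
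In the first case it equals $\sqrt{8bL/n}+(L/n)/(1-\theta)\le\sqrt{8bL/n}+\tfrac32L/n$.
In the second case it equals $3L/n+b\le\sqrt{8bL/n}+3L/n$, since there $L/n>2b/9$ and so $\sqrt{8bL/n}>4b/3$.
Hence the first claim holds at level $\alpha$, even with $3L/n$ in place of $5L/n$.

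One constant in your MGF step also needs care. Let $X=\ell_1-\mu$, so $\EE_\Q[e^{|X|}]\le 2b$ and $\EE_\Q|X|^k\le2b\,k!$. A termwise Taylor bound then gives $1+2b\theta^2/(1-\theta)$, which is up to $4b\theta^2$ on $[0,1/2]$, not $2b\theta^2$. Sum before bounding instead: for $|\theta|\le1$,
$\sum_{k\ge2}|\theta|^k\EE_\Q|X|^k/k!\le\theta^2\,\EE_\Q[e^{|X|}-1-|X|]\le(2b-1)\theta^2$.
This gives $\EE_\Q[e^{\theta X}]\le e^{2b\theta^2}$ on all of $[-1,1]$. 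The paper's chain has a parallel slip: it feeds $\EE_\Q[e^{|X|}]\le b$ into its moment lemma, whereas the MGF lemma only yields $2b$.
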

The proof can be found in
\cref{proof:regret based confidence interval}. 
The argument proceeds by first relying on
\cref{lemma:sub-exponential-log-wealth-increments} to establish that the
log-increments have finite moment generating functions in the neighborhood
$\theta \in [-1,1]$. It then follows that for any integer $p \in \NN$, the
$p^\tth$ absolute moment is uniformly bounded by $bp!$
(\cref{lemma:bound-on-pth-moment-of-log-wealth-increments}), and hence its
moment generating function can be bounded by that of a Gaussian random variable
but only in the neighborhood $\theta \in [-1/2, 1/2]$ (\cref{lemma:range of
subgaussianity}). Using the Chernoff method combined with a pathwise bound on
the portfolio regret yields \cref{proposition:regret based confidence interval}.
The relationship between finite moment generating functions and sub-exponential
tail bounds is well-known; e.g., it is summarized in
\citet[2.7.1]{vershynin2018high}. However, the constants therein are implicit so
we carry out the proofs to derive explicit confidence bounds that can be used in
\SPRUCE{}.

We remark that one does not need to \emph{run} the portfolio selection algorithm
or compute $\infseqn{\bbet_n}$---the existence of an algorithm with a regret
bound of $R_n^\Port$ suffices. For the sake of concreteness, one can take
$R_n^\Port = d \log (n + 1) / 2 + \log (2)$ as in the universal portfolio
algorithm of \citet{cover1996universal} but in some special cases, such as when
$d = 1$, sharper regret bounds exist \citep{orabona2023tight}.
Concentration inequalities for optimal growth rates such as
\cref{proposition:regret based confidence interval} that do not rely on
knowledge of $\optbbet$ are not easily found in the literature. To the best of
our knowledge, the only other result similar in spirit are inequalities of
\citet[Appendix A.2]{agrawal2025stopping}.

We previously alluded to the fact that the setting we consider does not permit
some of the usual assumptions made for UCB-type algorithms (e.g.,
sub-Gaussianity or being in a parametric family
\citep{lai1985asymptotically,cappe2013kullback,lattimore2020bandit}).
Nevertheless, a key property that several of the UCB-based regret analyses
exploit is the fact that the width of the UCB scales as $\sqrt{\log (1/\alpha )
/ n}$, where $\alpha \in (0, 1)$ and $n$ are the miscoverage probability and the
sample size, respectively. The concentration inequalities of
\cref{proposition:regret based confidence interval} satisfy this condition
modulo some extra $\log(1/\alpha) / n$ and $R_n / n$ terms, both of which turn
out to affect the downstream analysis in benign ways. The following theorem uses
the concentration inequalities of \cref{proposition:regret based confidence
interval} to arrive at a bound on the portfolio-oracle allocation regret.

\begin{theorem}\label{theorem:expected suboptimal pulls}
    Fix a distribution $\Q \in \Qcal$. Let $n \in \NN$, $\gamma > 2$, and $\zeta >
    0$. Fix an arm $a\in \Acal$ and define the oracle suboptimality gap for arm $a
    \in \cA$ under the distribution $\Q$ as 
    \begin{equation}
     \Delta_{a,\Q} \coloneqq \E_{\Q}\Brac{\logp{\optbbet(\optarm)^\trans
    \bE(\optarm)}} - \E_{\Q}\Brac{\logp{\optbbet(a)^\trans \bE(a)}}.  
    \end{equation}
    Suppose that we are running \SPRUCE{} to select the arms and
    construct the test statistic $W_n$, and that \cref{assumption:class of eprocesses} 
    holds with the constant $b > 1$. Define $R_n^\CO \coloneqq d \log (n + 1) / 2 + \log (2)$.  Then the expected number of arm pulls $\EE_\Q[N_a(n)]$ is bounded by
    \begin{equation}
      \EE_\Q \left [ N_a(n) \right ] \leq 1 + \max \left \{ \frac{72 b \gamma
      \log(\zeta n + 1)}{\Delta_{a,\Q}^2}, \frac{15\gamma \log(\zeta n +
      1)}{\Delta_{a,\Q}}, \frac{3R_{n-1}^\CO}{\Delta_{a,\Q}} \right \} + \frac{4
        \zeta^{-\gamma}}{\gamma - 2}.
    \end{equation}
    Consequently, the portfolio-oracle allocation regret $\Rcal_n^{\MAB}$ can
    be bounded in the large-$n$, small-$\Delta_{a,\Q}$ regime as
    \begin{equation}
        \Rcal_{\nQ}^{\MAB} = \sum_{a\in\Acal} \Delta_{a,\Q} \EE_\Q \Brac{N_a(n)} = \cO
        \Paren{\sum_{a \in \Acal} \frac{\log(n)}{\Delta_{a,\Q}}}.
    \end{equation}
\end{theorem}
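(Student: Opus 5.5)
The argument follows the standard UCB analysis (as in \citet[\S 7--8]{lattimore2020bandit}). The Gaussian-type concentration is replaced by the two one-sided bounds of \cref{proposition:regret based confidence interval}, applied arm-wise.

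Write $\mu_a \coloneqq \EE_\Q[\log(\optbbet(a)^\trans \bE(a))]$, and let $\hat\mu_{a,s}$ be the in-hindsight maximum $\max_{\bbet}\frac1s\sum\log(\bbet^\trans\bE)$ over the first $s$ samples from arm $a$. To handle the random times at which arms are pulled, I would use a stack-of-rewards construction: for each arm, draw an i.i.d.\ sequence of vectors $\bE$, where the $s$-th pull of arm $a$ reveals the $s$-th element. Then $\hat\mu_{a,s}$ is a function of i.i.d.\ data, and the proposition applies for each fixed $s$. This is legitimate because $\UCB_a(n)$ depends only on $N_a(n-1)$ and arm-$a$ data.

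For a suboptimal arm $a$, I would decompose
\[
N_a(n) \le u + \sum_{t=K+1}^n \1\{A_t=a,\ N_a(t-1)\ge u\},
\]
with $u$ to be chosen. If $A_t=a$ then $\UCB_a(t)\ge \UCB_{a_\Q}(t)$, so one of the following occurs:
\begin{enumerate}[(a)]
\item $\UCB_{a_\Q}(t)\le\mu_{a_\Q}$;
\item $\UCB_a(t)\ge \mu_{a_\Q}$ with $N_a(t-1)\ge u$.
\end{enumerate}

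For (a), I would apply the second (lower-tail) inequality of \cref{proposition:regret based confidence interval} with $\log(1/\alpha)=\gamma\log(\zeta t+1)$. The UCB width dominates the required deviation $\sqrt{8b\log(1/\alpha)/s}+4\log(1/\alpha)/s$; the regret term is nonnegative and only helps. A union bound over $s\le t$ gives probability at most $t(\zeta t+1)^{-\gamma}$, and summing over $t$ yields a term of order $\sum_t t^{1-\gamma}\zeta^{-\gamma}\lesssim \zeta^{-\gamma}/(\gamma-2)$. This is where $\gamma>2$ enters, and careful bookkeeping should produce the stated $4\zeta^{-\gamma}/(\gamma-2)$ (split between (a) and the residual part of (b)).

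For (b), choose $u$ so that for $s\ge u$ the sum of widths $\sqrt{8b\gamma\log(\zeta n+1)/s}+4\gamma\log(\zeta n+1)/s+R^\CO_{s}/s$ is at most $2\Delta_{a,\Q}/3$ or so. It suffices to make each of the three terms at most about $\Delta/3$, which gives exactly the max of $72b\gamma\log/\Delta^2$, $\sim\gamma\log/\Delta$, and $3R^\CO_{n-1}/\Delta$. Here the monotonicity of $R^\CO_s/s$ in $s$ (so that the bound at $s=u$ controls all $s\ge u$) and $R^\CO_{s}\le R^\CO_{n-1}$ are used. Then (b) forces $\hat\mu_{a,s}-\mu_a\ge$ a deviation exceeding the upper-tail threshold of the first inequality, whose $5\log/s$ term is absorbed into the constant $15$. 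Its probability is again at most $(\zeta t+1)^{-\gamma}$ per $(t,s)$ pair, and it is summable as above. Taking expectations gives the bound on $\EE_\Q[N_a(n)]$.

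The regret identity $\Rcal^\MAB_{n,\Q}=\sum_a\Delta_{a,\Q}\EE_\Q[N_a(n)]$ follows by conditioning on $A_i$ and using independence of $\bE_i$ from $\Hcal_{i-1}$. For small $\Delta$ the $1/\Delta^2$ term dominates each summand, so multiplying by $\Delta$ gives $\log n/\Delta$.

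I expect the main obstacle to be the exact constant bookkeeping: matching the UCB's $4\gamma\log/N$ (lower side) against the proposition's $5\log/n$ upper-side term, and making sure the regret term $R^\CO/N$ is correctly absorbed on the upper side. The stack-of-rewards justification for applying a fixed-$n$ inequality at random pull counts is standard but should be stated.
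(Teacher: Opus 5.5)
Your argument follows the paper's proof closely. Your split into (a) and (b) is \cref{lemma:suboptimal arm selection implications} with the LCB-miscoverage and small-gap events merged. Part (a) is handled as the paper handles the optimal arm's UCB miscoverage: the lower-tail inequality with $\log(1/\alpha)=\gamma\log(\zeta t+1)$, a union over $s\le t$, and a sum of $\zeta^{-\gamma}t^{1-\gamma}$. The regret identity is the paper's appeal to \citet[Lemma 4.5]{lattimore2020bandit}. Your accounting of the additive term is also fine: it is at most $3\zeta^{-\gamma}/(\gamma-2)$, even though the upper-tail inequality only holds at level $2\alpha$, which is all its proof delivers. The gap is the constants in (b).

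Write $L\coloneqq\gamma\log(\zeta n+1)$ and $s=N_a(t-1)$. Event (b) gives $\hat\mu_{a,s}-\mu_a\ge\Delta_{a,\Q}-w_s$, where $w_s=\sqrt{8bL/s}+4L/s+R^\CO_s/s$ is arm $a$'s bonus. To invoke the first inequality of \cref{proposition:regret based confidence interval}, you need $\Delta_{a,\Q}-w_s\ge w'_s\coloneqq\sqrt{8bL/s}+5L/s+R^\CO_s/s$. That is, you need $2\sqrt{8bL/s}+9L/s+2R^\CO_s/s\le\Delta_{a,\Q}$: the bonus and the tail threshold must both fit inside the gap. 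Making each of the three terms at most $\Delta_{a,\Q}/3$ (your $u$, which is the paper's $m$) only guarantees $w_s+w'_s\le 29\Delta_{a,\Q}/15$, so the tail bound cannot be applied. Your phrase ``at most $2\Delta/3$'' is also incompatible with ``each term at most $\Delta/3$''.

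A valid choice is $u=\max\{288b\gamma\log(\zeta n+1)/\Delta_{a,\Q}^2,\ 27\gamma\log(\zeta n+1)/\Delta_{a,\Q},\ 6R^\CO_{n-1}/\Delta_{a,\Q}\}$. Alternatively, keep $72$ and raise the other two constants to $54$ and $12$. Either gives a bound of the same shape and the same $\cO(\sum_a\log(n)/\Delta_{a,\Q})$ allocation regret, but not the displayed constants $72,15,3$.

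The paper's proof has the same factor-of-two slip. On $\{N_a(t-1)>m\}$ it shows $2(\sqrt{\cdot}+5\gamma\log(\zeta t+1)/N_a(t-1)+R_{N_a(t-1)}/N_a(t-1))\le 2\Delta_a$. Excluding $B_t^{(\Delta_a)}=\{\Delta_a<2(\cdots)\}$ requires this quantity to be at most $\Delta_a$, not $2\Delta_a$. So neither argument establishes the stated constants. The logarithmic regret, which is all that is used downstream, is unaffected.
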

The proof of \cref{theorem:expected suboptimal pulls} can be found in
\cref{proof:expected suboptimal pulls}.
Combined with an arm-wise sublinear portfolio regret algorithm and invoking \cref{proposition:oracle log-optimality abstract}, we see that \SPRUCE{} yields $e$-processes that are multi-armed log-optimal and enjoy oracle-like expected rejection time behavior.
In what follows, we demonstrate how the technology we have
developed thus far can be used to tackle the problem of testing for the existence of a
positive treatment effect, which was the example that we provided as motivation in the introduction.

\section{Testing for the Existence of a Treatment Effect}
\label{section:treatment effects}

Suppose that a pharmaceutical company is running a randomized
control trial to determine whether there exists a variation of a new treatment
that is significantly more effective than the control.
In this section we formalize this motivating example
and show that \SPRUCE{} provides one way to solve the pharmaceutical company's
question in an optimal manner.

Here, $\cA$ should be thought of as representing the set of possible treatment variations (e.g., different dosage levels). We adopt the potential outcomes 
framework~\citep{neyman1923applications,rubin1974estimating} and assume that each
experimental unit $n \in \N$ has $K+1$ potential outcomes
\begin{align}
    Y_n(0), Y_n(1), \dots, Y_n(K),
\end{align}
where $Y_n(0)$ is their potential outcome under the control group, which will always be used as a baseline to which treatments $1, \dots, K$ will be compared. Suppose that for every $n \in \NN$ and $a \in \Acal \cup \Set{0}$, $Y_n(a)$ takes values in the unit interval $[0, 1]$. 
For a treatment variation $a \in \Acal$ and a distribution $\P$, define the \emph{average treatment effect} $\ate_\P(a)$
as
\begin{equation}
  \ate_{\P}(a) \coloneqq \E_{\P}\Brac{Y(a) - Y(0)}  .
\end{equation}
We consider the problem of testing whether \emph{any} of the treatment variations $a \in \cA$ have an average treatment effect greater than a pre-determined threshold $\delta \in [-1, 1]$. In the case of testing for a positive treatment effect, set $\delta$ to zero. Formally, consider the global null $\cP^\brackdelta$ versus the
alternative $\cQ^\brackdelta$:
\begin{equation}
    \cP^\brackdelta = \Set{\P \svert \forall a \in \cA,~\ate_{\P}(a) \leq \testate} 
    \quad\mathrm{versus}\quad
    \cQ^\brackdelta = \Set{\P \svert \exists a \in \cA,~\ate_{\P}(a) > \testate} \mper
\label{eq:ate-testing-problem}
\end{equation}
In words, the null $\Pcal^\brackdelta$ contains distributions for which every treatment variation has a $\delta$-small treatment effect while the alternative contains those for which at least one treatment variation is $\delta$-large.

As the experiment is run sequentially, the statistician collects data according to a protocol that is a slight variation on \cref{algorithm:data collection} to account for randomization between treatment and control groups. We present this protocol in \cref{algorithm:randomized expt data collection}.

\begin{algorithm}[!htbp]
    \caption{Data collection in multi-armed randomized experiments}
    \label{algorithm:randomized expt data collection}
    The statistician chooses a propensity score $\pi \in (0, 1)$.
    
    Nature selects a distribution $\P \in \cP^\brackdelta \cup \cQ^\brackdelta$.

    \For{each time step $n = 1, \dots$}{
      \begin{enumerate}
        \item Nature samples a $(K+1)$-vector of outcomes:
            $\Paren{Y_n(0), Y_n(1), \dots, Y_n(K)} \sim \P$.
        \item The statistician chooses $A_n \in \{1, \dots, K\}$ based on the data gathered thus far.
        \item The statistician draws $Z_n \sim \text{Bernoulli}(\pi)$ 
        \item Subject $n$ is assigned to treatment $A_n$ if $Z_n = 1$ or the control group if $Z_n = 0$.
        \item The statistician observes $\Yobs_n(A_n) = Z_n Y_n(A_n) + (1-Z_n) Y_n(0)$.
      \end{enumerate}
    }
\end{algorithm}

The primary difference between the protocols in \cref{algorithm:data
collection,algorithm:randomized expt data collection} is that while the
statistician may choose to inspect promising treatment variations $1,\dots, K$
adaptively at each step, the subject always has a marginal probability of
$1-\pi$ of being assigned to the control group (see Steps 3 and 4). 
One could use a sequence of adaptively chosen propensity scores $\infseqn{\pi_n}$
without affecting the type-I error guarantees, but we fix the propensity scores
to $\pi \in (0,1)$ for the optimality guarantees to come.

The observation $\Yobs_n(A_n)$ taking the value of $Z_nY_n(A_n) + (1-Z_n) Y_n(0)$ is implicitly imposing an assumption of no interference across units. This assumption is commonly known as stable unit treatment value assumption (SUTVA) but is also referred to as ``consistency.''
No interference stipulates that the observed outcome of unit $n$ only depends on 
the treatment assignment they received, and not on the treatment assignment
any other unit received. Indeed, this assumption along two others are sufficient for \emph{identification} of the average treatment effects $\psi_\P(a); a \in \Acal$. Here, the term ``identification'' simply means that the counterfactual functional $\psi_\P(a)$ can be written in terms of a functional of the observable data distribution; informally, $\psi_\P(1), \dots, \psi_\P(K)$ can be estimated from the data arising in \cref{algorithm:randomized expt data collection}. While these assumptions are ubiquitous throughout the causal inference literature, we state them formally for completeness.

\begin{assumption}\label{assumption:causal identification}
    Suppose that the data arising from \cref{algorithm:randomized expt data collection} satisfies the following three conditions:
    \begin{enumerate}
        \item \emph{SUTVA}: For each $n \in \NN$, $\Yobs_n(A_n) = Z_n Y_n(A_n) + (1-Z_n) Y_n(0)$,
        \item \emph{No confounding}: For each $n \in \NN$, $(Y_n(0), \dots, Y_n(K), A_n) \independent Z_n$, and
        \item \emph{Positivity}: $\pi \coloneqq \PP_{\Bern}(Z_1 = 1) \in (0, 1)$.
    \end{enumerate}
\end{assumption}
We remark that the second and third conditions are trivially satisfied by the fact that $\infseqn{Z_n}$ are independent Bernoullis with success probability $\pi \in (0, 1)$. Under \cref{assumption:causal identification}, it holds that for any $n \in \NN$,
\begin{equation}
    \ate_{\P}(a) = \EE_{\Prct}[\Yobs_n(a) \svert Z_n = 1] - \EE_{\Prct}[\Yobs_n(a) \svert Z_n = 0] \mcom
\end{equation}
where we use $\Prct$ with the additional ``RCT'' subscript to indicate that the expectation is taken over the randomness in both the treatment assignments $\infseqn{Z_n}$ and $\P$.
To estimate this quantity, we consider the Horvitz-Thompson estimator \citep{horvitz1952generalization} of the individual treatment effect given by
\begin{equation}
\label{eq:horvitz-thompson-estimator}
    \hate_n(a) \coloneqq \Yobs_n(a)\Paren{\frac{Z_n}{\pi} - \frac{1-Z_n}
    {1-\pi}} \mper
\end{equation}
Importantly, the Horvitz-Thompson estimator from \eqref{eq:horvitz-thompson-estimator}
is an unbiased estimator for the average treatment effect.
\begin{fact}
\label{fact:horvitz-thompson-unbiased-estimation}
    Fix a distribution $\P \in \Pcal^\brackdelta \cup \Qcal^\brackdelta$. Under \cref{assumption:causal identification}, we have that for any $n \in \NN$ and any $a \in \Acal$
    \begin{equation}
        \EE_{\Prct}\left [\hate_n(a) \right ] = \ate_{\P}(a).
    \end{equation}
\end{fact}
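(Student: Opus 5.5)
The plan is a direct computation: condition on $Z_n$ and then apply the no-confounding and SUTVA parts of \cref{assumption:causal identification}. Fix $\P$, $n$ and $a \in \Acal$. Since $Z_n \in \{0,1\}$, we have $Z_n(1-Z_n) = 0$, and SUTVA applied at arm $a$ gives
\begin{equation}
  \hate_n(a) = \frac{Z_n Y_n(a)}{\pi} - \frac{(1-Z_n) Y_n(0)}{1-\pi}.
\end{equation}
The cross terms disappear because $Z_n\Yobs_n(a) = Z_n Y_n(a)$ and $(1-Z_n)\Yobs_n(a) = (1-Z_n)Y_n(0)$. Both summands are bounded, since the outcomes lie in $[0,1]$ and $\pi \in (0,1)$. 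Therefore all expectations below exist, and we may apply linearity of expectation under $\Prct$.

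Next we evaluate each term. By the no-confounding condition, $Z_n$ is independent of $(Y_n(0), \dots, Y_n(K))$, so
\begin{equation}
  \EE_\Prct[Z_n Y_n(a)] = \EE_{\Bern}[Z_n]\,\EE_\P[Y_n(a)] = \pi\,\EE_\P[Y(a)],
\end{equation}
where the last equality uses positivity, which identifies $\EE_{\Bern}[Z_n] = \pi$. The same argument gives $\EE_\Prct[(1-Z_n)Y_n(0)] = (1-\pi)\EE_\P[Y(0)]$. Because the outcome vectors are drawn i.i.d.\ from $\P$, the marginal law of $Y_n(\cdot)$ does not depend on $n$. Dividing by $\pi$ and $1-\pi$ respectively and subtracting yields $\EE_\P[Y(a) - Y(0)] = \ate_\P(a)$, as claimed.

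The only point that needs care is what "$\hate_n(a)$ for a fixed $a$" means, since the statistician actually observes $\Yobs_n(A_n)$ with $A_n$ chosen adaptively. We handle this by reading $\hate_n(a)$ as the estimator evaluated at the fixed arm $a$, and applying SUTVA with $A_n$ replaced by $a$. This is legitimate because $Z_n$ is independent of $(Y_n(0), \dots, Y_n(K), A_n)$. With that reading, no other obstacle remains.
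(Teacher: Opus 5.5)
Your computation is correct and is the standard argument the paper relies on. The paper states this as a Fact without a written proof, implicitly via the identification formula $\ate_\P(a) = \EE_{\Prct}[\Yobs_n(a) \mid Z_n = 1] - \EE_{\Prct}[\Yobs_n(a) \mid Z_n = 0]$ displayed just before it; that formula is your independence calculation written with conditional expectations.

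Two cosmetic remarks. $\EE[Z_n] = \pi$ comes from $Z_n \sim \mathrm{Bernoulli}(\pi)$, whereas positivity is what makes dividing by $\pi$ and $1-\pi$ legitimate. Independence of $Z_n$ from $A_n$ is not needed for a fixed arm $a$.
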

Furthermore, we note that under the assumption that potential outcomes are bounded, the Horvitz-Thompson estimator from 
\eqref{eq:horvitz-thompson-estimator} is almost surely bounded as
$\hate_n(a) \in [-1/(1-\pi), 1/\pi]$.
We apply the transformation $x \mapsto \pi(1 + x (1-\pi))$
to $(\hate_n(a), \delta)$ so that the latter two variables lie in the unit interval:
\begin{equation}
    \underline{\hate}_n(a) \coloneqq \pi(1 + \hate_n(a) (1-\pi)) \quad\mathrm{and}\quad
    \underline{\delta} \coloneqq \pi(1 + \delta (1-\pi)) \mper
\end{equation}
Given that $\underline{\hate}_n(a)$ is a $[0, 1]$-bounded random variable with $\P$-mean $\pi(1 + \ate_\P(a)(1-\pi))$, we can employ techniques from prior work on bounded mean testing \citep{hendriks2021test,waudby2024estimating,orabona2023tight,waudby2024anytime} (see also \cref{example:one-sided mean testing}). Specifically, take $d = 1$ and for each $n \in \NN$, define the $\cP^\brackdelta$-$e$-values $\bE_n$ by
\begin{equation}
    \bE_n \coloneqq \left (1, \frac{\underline{\hate}_n(A_n)}{\underline{\delta}} \right )
\end{equation}
and let $\infseqn{W_n^\brackdelta}$ be the process given by
\begin{equation}
\label{eq:ate-test-statistic}
    W_n^\brackdelta = \prod_{a \in \cA} \exps{\max_{\bet \in [0,1]} 
        \sum_{i=1}^n \Ind{A_i = a}\logs{1-\lambda + \lambda
        \flatfrac{\underline{\hate}_i(A_i)}{\underline{\delta}}}
    - R_{N_a(n)}^\CO} \mcom
\end{equation}
where $R_{n}^\CO \coloneqq \logp{n + 1} / 2 + \logp{2}$ for $n \in \NN$ as in the definition of \SPRUCE{}.
Importantly, the test statistic from \eqref{eq:ate-test-statistic} forms a
$\cP^\brackdelta$-$e$-process with  growth-rate and expected-rejection-time guarantees that we summarize here.

\begin{figure}
    \centering
    \begin{subcaptionblock}{0.485\textwidth}
        \centering 
        \includegraphics[width=\linewidth]{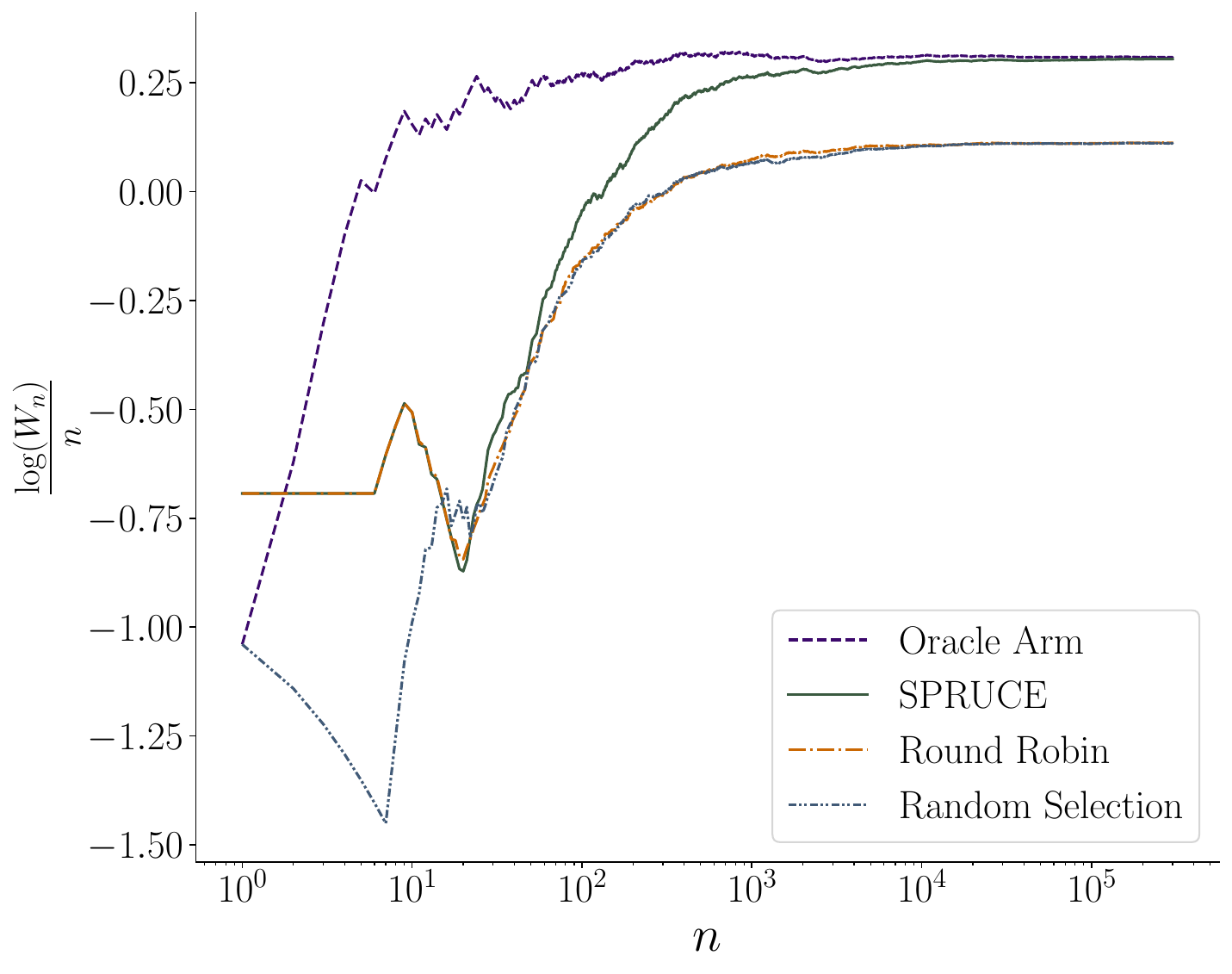}
    \end{subcaptionblock}
    \hfill
    \begin{subcaptionblock}{0.485\textwidth}
        \centering 
        \includegraphics[width=\linewidth]{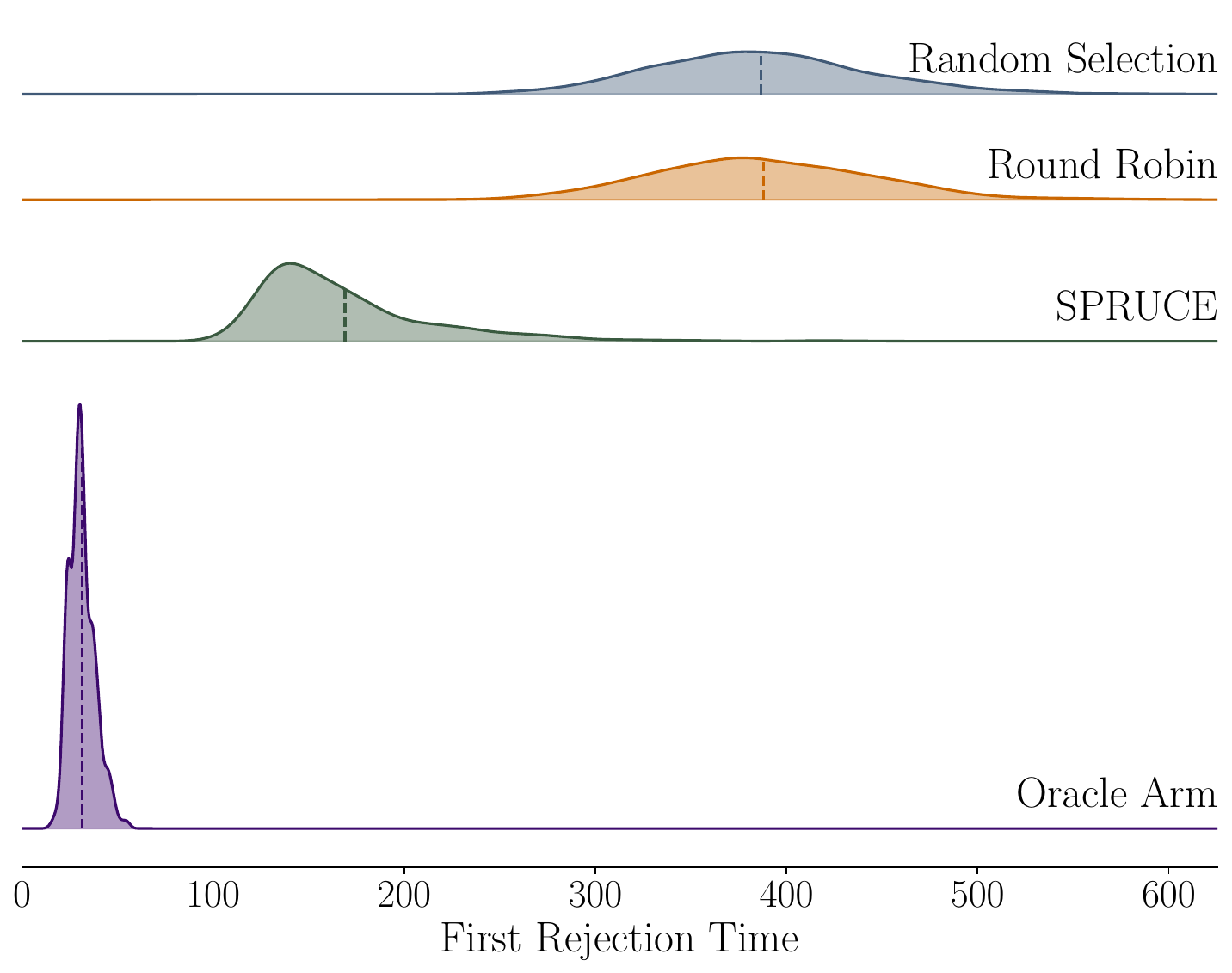}
    \end{subcaptionblock}
    \caption{Empirical growth rates (left) and distribution of stopping times when $\alpha=0.001$ (right)
    for the average treatment effect testing problem from \eqref{eq:ate-testing-problem}
    for four different algorithms. \emph{Round Robin} pulls the arms one-by-one until
    all of them have been selected, and starts the process over again. 
    \emph{Random Selection} samples uniformly at random the arm to be played in 
    round $n$.
    All four algorithms compute their test statistic following the form given
    in \cref{eq:ate-test-statistic}; that is, they only differ in the way they
    select the arm to pull (i.e., treatment variation to test) at each time step.}
    \label{fig:ate-testing}
\end{figure}

\begin{corollary}
\label{corollary:ate-e-process}
    Let $\infseqn{W_n^\brackdelta}$ be the process given by
    \eqref{eq:ate-test-statistic}. Then $\infseqn{W_n^\brackdelta}$ is a $\cP^\brackdelta$-$e$-process so that for any $\Q \in
    \cQ^\brackdelta$,
    \begin{equation}
        \lim_{n \to \infty} \frac{1}{n} \log \left ( W_n^\brackdelta \right ) = \max_{(a, \bet) \in
        \cA \times [0,1]} \E_{\Qrct}\Brac{\logp{1 - \lambda + \lambda
        \flatfrac{\underline{\hate}_1(a)}{\underline{\delta}}}}
        \quad \Qrct \text{-almost surely}\mper
    \end{equation}
    Furthermore, for any $\alpha \in (0, 1)$, define the first rejection time $\tau_\alpha \coloneqq \inf \{ n \in \NN : W_n^\brackdelta \geq 1/\alpha \}$. Then,
    \begin{equation}
        \lim_{\alpha \to 0^+} \frac{\E_{\Qrct}\Brac{\tau_{\alpha}}}{\logp{1/\alpha}}
         = \Paren{\max_{(a, \bet) \in \cA \times [0,1]} \E_{\Qrct}\Brac{\logp{1 - \lambda + \lambda
         \flatfrac{\underline{\hate}_1(a)}{\underline{\delta}}}}}^{-1} .
    \end{equation}
\end{corollary}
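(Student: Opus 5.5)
The plan is to reduce \cref{corollary:ate-e-process} to \cref{theorem:log-optimality of spruce} and \cref{thm:expected-stopping-time-upper-bound}. To do so, we view the randomized-experiment protocol (\cref{algorithm:randomized expt data collection}) as an instance of \cref{algorithm:data collection}. The per-arm ``outcome'' becomes the pair $(Y_n(0),Y_n(a),Z_n)$; equivalently, we use the transformed Horvitz--Thompson statistic $\underline{\hate}_n(a)$. The joint law is $\Prct$ (resp.\ $\Qrct$), i.e.\ $\P$ tensored with $\mathrm{Bern}(\pi)$. Because the $Z_n$ are drawn independently of everything else, the vectors $(\underline{\hate}_n(1),\dots,\underline{\hate}_n(K))$ are i.i.d.\ across $n$. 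The statistician observes $\Yobs_n(A_n)$ and $Z_n$, hence $\underline{\hate}_n(A_n)$, so the information structure matches the partial-information protocol. Here we take the arm choice to be the \SPRUCE{} rule with $d=1$; the corollary implicitly assumes this.

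The first step is to verify \cref{assumption:class of eprocesses} for $\bE_n(a)=(1,\underline{\hate}_n(a)/\underline{\delta})$.
\begin{itemize}
\item \emph{Nonnegativity.} $\underline{\hate}_n(a)\in[0,1]$, which follows from $\hate_n(a)\in[-1/(1-\pi),1/\pi]$.
\item \emph{$e$-value property.} By \cref{fact:horvitz-thompson-unbiased-estimation},
\[
\EE_{\Prct}[\underline{\hate}_n(a)]=\pi(1+\ate_\P(a)(1-\pi))\le \underline{\delta}
\]
whenever $\ate_\P(a)\le\delta$. This holds for all $a$ under $\cP^\brackdelta$, so $\sup_{\P}\EE[\underline{\hate}_n(a)/\underline{\delta}]\le 1$. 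The first coordinate is identically $1$.
\item \emph{Boundedness.} We need $\underline{\delta}>0$, which holds since $\delta\ge -1$ and $\pi\in(0,1)$ give $1+\delta(1-\pi)\ge\pi>0$. Then
\[
\sup_{\bbet}\bbet^\trans\bE_n(a)\le b\coloneqq\max\{1,1/\underline{\delta}\}.
\]
If this equals $1$, we inflate $b$ to any constant $>1$.
\item \emph{Unit-increment portfolio.} The requirement \eqref{eq:assump-wealth-increment-equal-to-one} holds with $\tilde\bbet=(1,0)$.
\end{itemize}
This also shows that $\cP^\brackdelta$ is a global null of the form \eqref{eq:prelim-global null}, with $\cP_a=\{\ate_\P(a)\le\delta\}$, and that $\cQ^\brackdelta$ is its alternative.

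Next we identify $W_n^\brackdelta$ with $W_n^\CO$ from \cref{algorithm:spruce}. Parametrizing $\simplex$ for $d=1$ by $\bbet=(1-\lambda,\lambda)$ gives $\bbet^\trans\bE_i=1-\lambda+\lambda\underline{\hate}_i/\underline{\delta}$, and $R^\CO_n=\log(n+1)/2+\log 2$ coincides with $d\log(n+1)/2+\log 2$. Then:
\begin{itemize}
\item \cref{lem:spruce-test-statistic-eprocess} gives that $W^\brackdelta$ is a $\cP^\brackdelta$-$e$-process; its proof uses the Cover regret bound pathwise together with \cref{proposition:validity}.
\item \cref{theorem:log-optimality of spruce}(ii) gives the almost-sure growth rate. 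The maximum over $(a,\bbet)\in\Acal\times\triangle_1$ is exactly the stated maximum over $(a,\lambda)\in\Acal\times[0,1]$, with expectations under $\Qrct$.
\item \cref{thm:expected-stopping-time-upper-bound} gives the limit of $\EE_{\Qrct}[\tau_\alpha]/\log(1/\alpha)$.
\end{itemize}

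The main obstacle is the bookkeeping of the translation rather than any new estimate. First, the filtration and the i.i.d.\ structure must include the auxiliary randomization $Z_n$. Concretely, $\Hcal_n$ is enlarged to $\sigma((Y_i(0),\dots,Y_i(K),Z_i)_{i\le n})$, and the product measure $\Qrct$ plays the role of $\Q$. Second, the rejection-time theorem must apply even though the maximal growth rate is only positive; this needs to be checked. For $\Q\in\cQ^\brackdelta$ some arm has mean of $\underline{\hate}(a)/\underline{\delta}$ strictly greater than $1$. The derivative of $\lambda\mapsto\EE\log(1-\lambda+\lambda X)$ at $\lambda=0$ equals $\EE X-1>0$, so the maximum is strictly positive and the limit $\left(\max\cdots\right)^{-1}$ is finite, as the earlier theorems implicitly require.
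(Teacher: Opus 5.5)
Your proposal is correct and takes the same route as the paper, which simply states that the corollary follows immediately from \cref{theorem:log-optimality of spruce,thm:expected-stopping-time-upper-bound}, with \cref{lem:spruce-test-statistic-eprocess} supplying the $e$-process property. Your explicit checks are exactly the bookkeeping the paper leaves implicit: \cref{assumption:class of eprocesses} for $\bE_n(a) = (1, \underline{\hate}_n(a)/\underline{\delta})$ with $b = 1/\underline{\delta}$, the enlargement of $\Hcal$ to include the $Z_n$, arms being chosen by \SPRUCE{}, and strict positivity of the optimal growth rate. One minor point: $b = 1/\underline{\delta}$ already exceeds one because $\underline{\delta} \le \pi(2-\pi) < 1$, so no inflation is needed.
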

\cref{corollary:ate-e-process} follows immediately from \cref{theorem:log-optimality of spruce,thm:expected-stopping-time-upper-bound}. Moreover, it follows that these growth rates and expected rejection times are unimprovable in the sense that no other rule for selecting treatment arms to test $\infseqn{A_n}$ and choosing portfolios $\infseqn{\lambda_n}$ could lead to larger growth rates or smaller expected rejection times.

\cref{fig:ate-testing} illustrates empirically the conclusions from
\cref{corollary:ate-e-process}. It depicts how the empirical growth rate of \SPRUCE~converges to that of an $e$-process with oracle access to the optimal arm and which uses the regret-based test statistic from \eqref{eq:ate-test-statistic}. Moreover, it can be seen that the distribution of rejection times under \SPRUCE{} tends to include smaller values than those of Round Robin or Random Selection. Indeed, the average time to rejection under \SPRUCE{} is roughly 44\% that of the processes employing Round Robin or Random Selection.

\section{Conclusions}
In this work we presented a generalization of sequential hypothesis testing by betting wherein
the statistician must choose to draw data from one of many possible distributions (arms) and then must proceed to select a portfolio (via a ``betting strategy'') to play on that arm. We considered global nulls in which each arm satisfies some property of interest, and it is desired to (quickly) accumulate evidence that \emph{at least one} arm does not satisfy that property. We showed that
this multi-armed data collection protocol does not add any extra complexity when it comes to
time-uniform type-I error control, but argued that designing powerful
tests under the alternative motivates new algorithms and analyses. We
introduced a notion of multi-armed log-optimality for the growth rate of e-processes under the alternative hypotheses, and derived an algorithm that
achieves this new and strong notion of optimality. We analyzed the expected
time to rejection under that same algorithm, leading to matching lower and upper bounds in the high-confidence regime. In summary, we observed that the (achievable) optimal growth rate
and expected times to rejection depend on the same quantity: the largest expected logarithmic increment over all arms and all portfolios. Along the way, we derived some new concentration inequalities for these maximal expected logarithmic increments that may be of independent interest. 

We anticipate this work having several
applications beyond testing for the existence of treatment effects. For example, one may consider testing properties characterizing aspects of the performance of large language models, where the different ``arms'' could be prompts or candidate models themselves. Alternatively, the problem of quantum state certification (and related problems) \citep{buadescu2019quantum,martinez2021quantum,zecchin2025quantum} can be framed in terms of a sequential hypothesis testing problem where partial information naturally arises; this partial information has not yet been exploited in the sense of the present paper.
It may also be worth exploring how side 
information (e.g., covariates or features) can be used to better choose which arms to pull or portfolios to construct at each time step. 
We intend to pursue these directions in followup work.

\subsection*{Acknowledgments}
The authors would like to thank Sivaraman Balakrishnan, Ron Boger, Avi Feller,
Paula Gradu, Keegan Harris, Christian Ikeokwu, Jivat Neet Kaur, Aaditya Ramdas,
David Wu, and Tijana Zrnic for insightful conversations.
Funded by the European Union (ERC-2022-SYG-OCEAN-101071601).
Views and opinions expressed are however those of the author(s) only and do not
necessarily reflect those of the European Union or the European Research Council
Executive Agency. Neither the European Union nor the granting authority can be
held responsible for them.

\bibliographystyle{plainnat}
\bibliography{refs}

\newpage
\appendix
\section{Proofs of the Main Results}
In what follows we present the proofs of our main results as well as our auxiliary results and their proofs. Nevertheless, to simplify the exposition of certain proofs we will be employing some shorthands for the log-increments. We will denote the log-increment corresponding to round $n \in \N$ under any arm $a \in \cA$ as:
\begin{equation}
    \ell_n(a) \coloneqq \logp{\bbet_n(a)^\trans \bE_n(a)} \mper
\end{equation}
Following the above notation, we will further denote the log-increments under any arm $a \in \cA$ and under the optimal portfolio for that arm as:
\begin{equation}
    \ell_{n,\Q}(a) \coloneqq \logp{\optbbet(a)^\trans \bE_n(a)} \mcom
\end{equation}
for any $n \in \N$. Whenever we suppress the dependency on the arm $a \in \cA$, the proof should be thought of as holding for the single-arm ($K = 1$) setting. Having defined these shorthands, we now proceed to state the proofs of our main results.

\subsection{Proof of \cref*{proposition:validity}}\label{proof:validity}
\begin{proof}[Proof of \cref{proposition:validity}]
    We will show that for each $n \in \NN$,
    $\barW_n$ is nonnegative almost surely, that it forms a supermartingale, and that its mean is upper bounded by one. Indeed, nonnegativity follows by the assumption that $f_n(\cdot)$ takes values in $[0, \infty)$ for each $n \in \NN$. To show that $\barW$ is a $\Pcal$-supermartingale, observe that for any $n \in \NN$ and $\Pin$,
    \begin{equation}
        \EE_\P[\barW_n \mid \Hcal_{n-1}] = \barW_{n-1} \EE_\P [f_n(Y_n(A_n)) \mid \Hcal_{n-1}] \leq \barW_{n-1}\quad \text{$\P$-almost surely,}
    \end{equation}
    where the inequality follows by assumption. Instantiating the above for $n = 1$, we have that $\sup_\Pin \EE_\P[\bar W_1] \leq 1$. This completes the proof.
\end{proof}

\subsection{Proof of \cref*{proposition:oracle log-optimality abstract}}\label{proof:oracle log-optimality abstract}
\begin{proof}
    Fix $\Qin$. Similar to the proof sketch, decompose the difference between the optimal portfolio under the optimal arm and the empirical growth rate: 
    \begin{align}
    \cR_n =\ &\EE_\Q \left [ \log \left ( \optbbet(\optarm)^\trans
    \bE(\optarm) \right ) \right ] - \frac{1}{n}\sum_{i=1}^n \EE_\Q \left [ \log \left (
    \optbbet(A_i)^\trans \bE_1(A_i) \right ) \right
            ]\label{eq:decomposition-ucb}\\ 
                &+ \frac{1}{n}\sum_{i=1}^n \EE_\Q \left [ \log
            \left ( \optbbet(A_i)^\trans \bE_i(A_i) \right ) \right ] -
            \frac{1}{n} \sum_{i=1}^n \sum_{a=1}^K \1 \{A_i = a \} \log \left (
            \optbbet(a)^\trans \bE_i(a) \right )\label{eq:decomposition-lil}\\
                &+ \frac{1}{n}\sum_{i=1}^n \sum_{a=1}^K \1
                \{A_i = a \} \log \left (
                \optbbet(a)^\trans \bE_i(a) \right )
                - \frac{1}{n} \sum_{a=1}^K \log(W_n(a))
                .
                \label{eq:decomposition-universal-portfolio}
    \end{align}
    By sublinearity of the allocation regret, \eqref{eq:decomposition-ucb} vanishes. Letting $\Rcal_n^\Port(a)$ be the arm-wise portfolio regret of arm $a \in \Acal$, we have that 
    \begin{equation}
        \eqref{eq:decomposition-universal-portfolio} \leq \frac{1}{n} \sum_{a=1}^K \Rcal^\Port_{n}(a) \leq \frac{1}{n} K \max\{ \Rcal^\Port_n(1) ,\dots, \Rcal^\Port_n(K) \}
    \end{equation}
    pathwise, and hence \eqref{eq:decomposition-universal-portfolio} is sublinear.
    By \cref{lemma:exponential concentration for bandit sequence - centered with independent copies}, we have that for any $\eps > 0$,
  \begin{align}
      &\PP_{\Q}\Paren{\frac{1}{n} \sum_{i=1}^n \Paren{\log \left ( \optbbet(A_i)^\trans \bE_i(A_i) \right ) -
        \E_{\Q}\Brac{\log \left ( \optbbet(A_i)^\trans \bE_i(A_i) \right )}} \geq \epsilon} \\
        &\quad\leq Kn \Paren{\exps{-\frac{\epsilon^2 n}{8bK^2}} + \exps{-\frac{\epsilon n}{4K}}}.
  \end{align}
  Applying the Borel-Cantelli lemma, \eqref{eq:decomposition-lil} vanishes $\Q$-almost surely, completing the proof.
\end{proof}

\subsection{Proof of \cref*{theorem:log-optimality of spruce}}\label{proof:log-optimality of spruce}
\begin{proof}[Proof of \cref{theorem:log-optimality of spruce}]
    We begin by deriving a time-uniform concentration inequality for the difference between the $\infseqn{W_n}$ and that of an oracle with access to $a_\Q$ and $\optbbet(a_\Q)$. This inequality will be integral to the proofs of $(i)$ and $(ii)$.
    
    \paragraph{A time-uniform concentration inequality.}
    Fix a $\Q \in \cQ$ and let $W^\star_\nQ \coloneqq \prod_{i = 1}^n \Paren{\optbbet(\optarm)^\trans
    \bE_i(\optarm)}$.
    Our goal is to provide an upper bound on the following probability
    for any $\epsilon > 0$,
    \begin{equation}
        \PP_{\Q}\Paren{\sup_{n \geq m}~ \Abs{\frac{1}{n} \logp{W_\nQ^\star} -
        \frac{1}{n}\logp{W_n}} \geq \epsilon} \mper
        \label{eq:proof-log-opt-target probability}
    \end{equation}
    Applying the triangle inequality, we obtain
    \begin{align}
        \eqref{eq:proof-log-opt-target probability} 
        \leq \underbrace{\PP_{\Q}\Paren{\sup_{n \geq m}~ \Abs{\frac{1}{n}
        \logp{W_\nQ^\star} - \E_{\Q}\Brac{\ell_{1,\Q}(\optarm)}} \geq
        \frac{\epsilon}{2}}}_{
        \eqref{eq:proof-log-opt-target-prob-decomp}.(1)} + 
        \underbrace{\PP_{\Q}\Paren{\sup_{n \geq m}~ \Abs{\E_{\Q}\Brac{\ell_{1,\Q}(\optarm)}
        - \frac{1}{n}\logp{W_n}} \geq \frac{\epsilon}{2}}}_{
        \eqref{eq:proof-log-opt-target-prob-decomp}.(2)}.
        \label{eq:proof-log-opt-target-prob-decomp}
    \end{align}
    We first focus on upper bounding  
    $\eqref{eq:proof-log-opt-target-prob-decomp}.(1)$. To do so, we apply a
    union bound and obtain
    \begin{align}
        \PP_{\Q}\Paren{\sup_{n \geq m}~ \Abs{\frac{1}{n}\logp{W_\nQ^\star} -
        \E_{\Q}\Brac{\ell_{1,\Q}(\optarm)}} \geq \frac{\epsilon}{2}}
        &\leq \sum_{n=m}^{\infty} \PP_{\Q}\Paren{\Abs{\frac{1}{n}
        \logp{W_\nQ^\star} - \E_{\Q}\Brac{\ell_{1,\Q}(\optarm)}} \geq
        \frac{\epsilon}{2}}.\label{eq:borel cantelli bound continued}
    \end{align}
    Now, by \cref{lemma:sub-exponential-log-wealth-increments} we have that 
    \begin{equation}
    \forall \theta \in [-1, 1], \quad \E_{\Q} \Brac{\exps{\theta \Paren{
    \logp{\optbbet^\trans \bE_1} - \E_{\Q}\Brac{\logp{\optbbet^\trans \bE_1}}}}}
    \leq b,
    \end{equation}
    which when used in conjunction with the fact that random variables with finite moment generating functions have sub-exponential tails (\cref{lem:concentration-sub-exponential-rvs}), the right-hand side of \eqref{eq:borel cantelli bound continued} can be bounded as
    \begin{equation}
        \sum_{n=m}^{\infty} \PP_{\Q}\Paren{\Abs{\frac{1}{n}
        \logp{W_\nQ^\star} - \E_{\Q}\Brac{\ell_{1,\Q}(\optarm)}} \geq
        \frac{\epsilon}{2}}
        \leq 2 \sum_{n=m}^{\infty} \Paren{\exps{-\frac{\epsilon^2 n}{32 b}} + \exps{-\frac{\epsilon n}{8}}}.
    \end{equation}
    We now turn our attention to upper bounding
    $\eqref{eq:proof-log-opt-target-prob-decomp}.(2)$. 
    To do so, let $R_\nQ^\MAB$ be an upper bound on $\Rcal_{\nQ}^\MAB$ for each $n \in \NN$ and appeal to \cref{lemma:exponential concentration for bandit sequence - optimal} to obtain the following upper bound:
     \begin{align}
       &\PP_{\Q}\Paren{\sup_{n \geq m}\Abs{\E_{\Q}\Brac{\ell_{1,\Q}(\optarm)} - \frac{1}{n}\logp{W_n}} \geq \frac{\epsilon}{2}}\\
       &\quad \leq
         \sum_{n=m}^\infty \1 \left \{ R_\nQ^\MAB \geq
        \epsilon n / 6 \right \} + 2K \sum_{n=m}^\infty n \left ( \exp \left \{ - \frac{\eps^2n }{288bK^2} \right \} + \exp \left \{ -\frac{\eps n }{24K} \right \} \right )\\
        &\quad + \sum_{n=m}^\infty \1  \left \{ \sum_{a=1}^K R^\Port_n(a) \geq \eps n / 6\right  \} + \sum_{n=m}^\infty \exp \left \{ -n\eps / 6 \right \}.
     \end{align} 
    Putting these steps together and consolidating some terms, we end up with the following inequality on \eqref{eq:proof-log-opt-target probability}
    for all $\Q \in \cQ$ and $\epsilon > 0$,
    \begin{align}
      \PP_{\Q}\Paren{\sup_{n \geq m} \Abs{\frac{1}{n} \logp{W_\nQ^\star} -
        \frac{1}{n}\logp{W_n}} \geq \epsilon} &\leq 5 K \sum_{n=m}^{\infty} n \Paren{\exps{-\frac{\epsilon^2 n}{288 b K^2}} + \exps{-\frac{\epsilon n}{24K}}}\\
      &\quad +  \sum_{n=m}^\infty \1 \left \{ R_\nQ^\MAB \geq
        \frac{\epsilon n}{6} \right \} + \sum_{n=m}^\infty \1  \left \{ \sum_{a=1}^K R^\Port_n(a) \geq \frac{\eps n}{6}\right  \}.
    \end{align}
    We will use this upper bound to prove properties
    $(i)$ and $(ii)$.

    \paragraph{Proof of property $(i)$.} We start by recalling the well known 
    fact that for a stochastic process $\infseqn{X_n}$ and a distribution $\P$,
    \begin{equation}
        \PP_{\P}\Paren{\lim_{n \to \infty} \Abs{X_n} = 0} = 1 \quad\text{if and
        only if}\quad \forall \epsilon > 0\mcom~ \lim_{m \to
        \infty}\PP_{\P}\Paren{\sup_{n \geq m} \Abs{X_n} \geq \epsilon} = 0 \mper
    \end{equation}
    Indeed, as we take $m \to \infty$ the previously derived upper bound on  
    \eqref{eq:proof-log-opt-target probability} vanishes, implying that
    \begin{equation}
        \frac{1}{n} \logp{W_n / W_\nQ^\star} \to 0 \quad\text{with
        $\Q$-probability one} \mcom
    \end{equation}
    completing the proof of property $(i)$.

    \paragraph{Proof of property $(ii)$.} By the law of large numbers we have
    that with $\Q$-probability one,
    \begin{equation}
        \frac{1}{n} \logp{W_\nQ^\star} \to \E_{\Q}\Brac{\logp{\optbbet(\optarm)^\trans 
        \bE(\optarm)}}\equiv \E_{\Q}\Brac{\ell_{1,\Q}(\optarm)} \mper
    \end{equation}
    Appealing to property $(i)$ it follows that
    \begin{equation}
        \frac{1}{n} \logp{W_n} = \frac{1}{n} \logp{W_n / W_\nQ^\star} +
        \frac{1}{n} W_n^\star \to \E_{\Q}\Brac{\ell_{1,\Q}(\optarm)} \mcom
    \end{equation}
    $\Q$-almost surely for every $\Q \in \cQ$.

    \paragraph{Proof that $W$ is multi-armed $\Qcal$-log-optimal.} Let $\widetilde W$ be any process of the form 
    \begin{equation}
        \widetilde W_n := \prod_{i=1}^n \widetilde \bbet_i^\trans \bE_i(\widetilde A_i),
    \end{equation}
    for $\Hcal$-predictable $\infseqn{\tbbet_n}$ and $\infseqn{\widetilde A_n}$.
    Consider the following limit inferior:
    \begin{align}
        \liminf_{n \to \infty} \frac{1}{n} \Paren{\logp{W_n} - \logp{\tW_n}}
        &= \liminf_{n \to \infty} \frac{1}{n} \Paren{\logp{W_n / W_\nQ^\star}
        + \logp{W_\nQ^\star / \tW_n}} \\
        &\geq \underbrace{\liminf_{n \to \infty} \frac{1}{n} \logp{W_n /
        W_\nQ^\star}}_{\eqref{eq:proof-log-opt-prop-i-decomp-lower-bound}.(1)}
        + \underbrace{\liminf_{n \to \infty} \frac{1}{n} \logp{W_\nQ^\star /
        \tW_n}}_{\eqref{eq:proof-log-opt-prop-i-decomp-lower-bound}.(2)} \mper
        \label{eq:proof-log-opt-prop-i-decomp-lower-bound}
    \end{align}
    Using property $(i)$, we have that 
    $\eqref{eq:proof-log-opt-prop-i-decomp-lower-bound}.(1)$ is zero $\Q$-almost
    surely. Focusing now on 
    $\eqref{eq:proof-log-opt-prop-i-decomp-lower-bound}.(2)$, we have
    \begin{align}
        \eqref{eq:proof-log-opt-prop-i-decomp-lower-bound}.(2)  
        &= \liminf_{n \to \infty} \left(\frac{1}{n} \logp{W_\nQ^\star}
          - \E_{\Q}\Brac{\ell_{1,\Q}(\optarm)} + \E_{\Q}\Brac{\ell_{1,\Q}(\optarm)}
          - \frac{1}{n} \sum_{i=1}^n \E_{\Q}\Brac{\ell_{i, \Q}(\tA_i)}\right. \\
        &\qquad\qquad \left. + \frac{1}{n} \sum_{i=1}^n \E_{\Q}\Brac{\ell_{i, \Q}(\tA_i)}
        -\frac{1}{n} \logp{\tW_n}\right) \\
        &\geq \liminf_{n \to \infty} \left(\frac{1}{n} \logp{W_\nQ^\star} -
        \E_{\Q}\Brac{\ell_{1,\Q}(\optarm)} + \frac{1}{n} \sum_{i=1}^n \E_{\Q}\Brac{
        \ell_{i, \Q}(\tA_i)} - \frac{1}{n} \logp{\tW_n} \right),
        \label{eq:proof-log-opt-prop-i-lower-bound-opt-arm-def}
    \end{align}
    where the inequality follows from the definition of $\optarm$. 
    We now continue to analyze 
    \eqref{eq:proof-log-opt-prop-i-lower-bound-opt-arm-def}:
    \begin{align}
        \eqref{eq:proof-log-opt-prop-i-lower-bound-opt-arm-def}
        &\geq \liminf_{n \to \infty} \Paren{\frac{1}{n}\logp{W_\nQ^\star} -
        \E_{\Q}\Brac{\ell_{1,\Q}(\optarm)}} 
        + \liminf_{n \to \infty} \Paren{\frac{1}{n} \sum_{i=1}^n
        \E_{\Q}\Brac{\ell_{i, \Q}(\tA_i)} - \frac{1}{n} \logp{\tW_n}} \\
        &= \liminf_{n \to \infty} \Paren{\frac{1}{n} \sum_{i=1}^n
        \E_{\Q}\Brac{\ell_{i, \Q}(\tA_i)} - \frac{1}{n} \logp{\tW_n}}
        \label{eq:proof-log-opt-prop-i-equality-lln-opt-process} \mcom
    \end{align}
    where \eqref{eq:proof-log-opt-prop-i-equality-lln-opt-process} follows from
    the fact that $\lim_{n \to \infty} n^{-1} \logp{W_{\nQ}^\star} =
    \E_{\Q}\Brac{\ell_{1,\Q}(\optarm)}$ with $\Q$-probability one by the strong 
    law of large numbers.
    Focusing on \eqref{eq:proof-log-opt-prop-i-equality-lln-opt-process}, define the process $\widetilde W_\Q$ given by 
    \begin{equation}
        \tW_\nQ := \prod_{i=1}^n \optbbet(\widetilde A_i)^\trans \bE_i(\widetilde A_i).
    \end{equation}
    and notice that \eqref{eq:proof-log-opt-prop-i-equality-lln-opt-process} can be further lower bounded as
    \begin{equation}
        \eqref{eq:proof-log-opt-prop-i-equality-lln-opt-process}
        \geq \underbrace{\liminf_{n \to \infty}\Paren{\frac{1}{n} \sum_{i=1}^n
        \E_{\Q}\Brac{\ell_{i, \Q}(\tA_i)} - \frac{1}{n}\logp{\tW_\nQ}}}_{
        \eqref{eq:proof-log-opt-prop-i-final-decomp-concentration-term}.(1)}
        + \underbrace{\liminf_{n \to \infty} \Paren{\frac{1}{n}\logp{\tW_\nQ} -
        \frac{1}{n} \logp{\tW_n}}}_{
        \eqref{eq:proof-log-opt-prop-i-final-decomp-concentration-term}.(2)}.
        \label{eq:proof-log-opt-prop-i-final-decomp-concentration-term}
    \end{equation}
    Notice that $\eqref{eq:proof-log-opt-prop-i-final-decomp-concentration-term}.(1)$ vanishes to zero $\Q$-almost surely by \cref{lemma:exponential concentration for bandit sequence - centered with independent copies} combined with the Borel-Cantelli lemma.
    Similarly, \eqref{eq:proof-log-opt-prop-i-final-decomp-concentration-term} is nonnegative in its limit inferior with $\Q$-probability one by \cref{corollary:arm-wise-portfolio-concentration} combined with the Borel-Cantelli lemma.
    
    Putting all of the previous steps together we conclude that
    \begin{equation}
        \liminf_{n \to \infty} \frac{1}{n} \Paren{\logp{W_n} - \logp{\tW_n}}
        \geq 0 \quad\text{with $\Q$-probability one}\mcom
    \end{equation}
    completing the proof of the property that $W$ is multi-armed $\cQ$-log-optimality and hence the proof of 
    \cref{theorem:log-optimality of spruce}.
\end{proof}

\subsection{Proof of \cref*{lemma:exponential concentration for bandit sequence - optimal}}
\label{proof:exponential concentration for bandit sequence - optimal}
\begin{proof}[Proof of \cref{lemma:exponential concentration for bandit sequence - optimal}]
    Fix $\eps > 0$. 
    For any $n \in \N$ we define $\cR_n$ as
    \begin{equation}
        \cR_n \coloneqq \E_{\Q}\Brac{\logp{\optbbet(\optarm)^\trans
        \bE(\optarm)}} - \frac{1}{n} \sum_{a=1}^K\log( W_n(a)).
    \end{equation}
    Decompose $\cR_n$ into the following three terms:
    \begin{align}
        \cR_n = 
        \ & \E_{\Q}\Brac{\logp{\optbbet(\optarm)^\trans \bE(\optarm)}}
        - \frac{1}{n}\sum_{i=1}^n \E_{\Q}\Brac{\ell_{i, \Q}(A_i)} \\
        & + \frac{1}{n}\sum_{i=1}^n \E_{\Q}\Brac{\ell_{i, \Q}(A_i)} - \sum_{i=1}^n
        \ell_{i, \Q}(A_i) \\
        & + \frac{1}{n} \sum_{i=1}^n \ell_{i, \Q}(A_i) - 
        \frac{1}{n}\sum_{a=1}^K \log(W_n(a)).
    \end{align}
    By a union bound over integers $n \geq m$, we have that 
        $\PP_\Q \left ( \sup_{n \geq m} |\Rcal_n| \geq \eps \right ) \leq \sum_{n=m}^\infty \PP_\Q \left ( |\Rcal_n| \geq \eps \right ),$
    and hence for any $n \geq m$, the summand in the right-hand side can therefore be upper-bounded as
    \begin{align}
        \PP_{\Q}\Paren{\Abs{\cR_n} \geq \epsilon}
        &\leq \1 \left \{ \Abs{ \E_{\Q}\Brac{\logp{\optbbet(\optarm)^\trans 
            \bE(\optarm)}} - \frac{1}{n}\sum_{i=1}^n \E_{\Q}\Brac{\ell_{i, \Q}(A_i)}} \geq
        \epsilon / 3 \right \} \\
        &\quad + \PP_{\Q}\Paren{\Abs{\frac{1}{n}\sum_{i=1}^n \E_{\Q}\Brac{\ell_{i, \Q}(A_i)} 
            - \frac{1}{n}\sum_{i=1}^n \ell_{i, \Q}(A_i)} \geq \epsilon / 3} \\
        &\quad + \PP_{\Q}\Paren{\Abs{\frac{1}{n}\sum_{i=1}^n \ell_{i, \Q}(A_i) 
        - \frac{1}{n} \sum_{a=1}^K \logp{W_n(a)}} \geq \epsilon / 3} 
        \label{eq:proof-exponential-concentration-for-bandit-sequence-portfolio-prob}\mper
    \end{align}
    Using \cref{lemma:exponential concentration for bandit sequence - centered with independent copies} and \cref{corollary:arm-wise-portfolio-concentration} to control the second and third terms, respectively, we have
    \begin{align}
        \PP_\Q \left ( |\Rcal_n| \geq \eps \right ) &\leq \1 \left \{ R_{\nQ}^\MAB \geq
        \epsilon n / 3 \right \} + 2Kn \left ( \exp \left \{ - \frac{\eps^2n }{72bK^2} \right \} + \exp \left \{ -\frac{\eps n }{12K} \right \} \right )\\
        &+ \1  \left \{ \sum_{a=1}^K R^\Port_n(a) \geq \eps n / 3\right  \} + \exp \left \{ -n\eps / 3 \right \}.
    \end{align}
    Summing over $n \geq m$
   completes the proof.
\end{proof}

\subsection{Proof of \cref*{proposition:stopping-timelower-bound}}
\label{sec:proof-stopping-time-lower-bound}
\begin{proof}[Proof of \cref{proposition:stopping-timelower-bound}]
    Let $\tW$ be an arbitrary test $\Pcal$-supermartingale of the form \eqref{eq:prelim-test supermartingale} constructed with any $\Hcal$-predictable values of $\infseqn{\widetilde \bbet_n, \widetilde A_n}$. Let $\tau \equiv \widetilde \tau_{\alpha} \coloneqq \inf 
    \{ n \in \N : \tW_n \geq 1/\alpha \}$ be its associated stopping time for reaching the threshold $1/\alpha$. 
    
    Let $\tW_\nQ$
    be the process that in each time step $n \in \NN$, selects the arm $\tA_n$ (i.e., the same arm as $\tW_n$)
    but always with the optimal portfolio $\bbet_\Q(\cdot)$ for that selected arm. Concretely, define $\tW_\nQ$ by
    \begin{equation}
        \tW_\nQ
    \coloneqq \prod_{i=1}^n \optbbet(\tA_i)^\trans \bE_i(\tA_i)
    \end{equation}
    Now, define the process $\infseqn{W_\nQ^\star}$, which always plays the optimal arm, and its optimal portfolio,
    \begin{equation}
        W_\nQ^\star := \prod_{i=1}^n \optbbet(a_\Q)^\trans \bE_i(a_\Q).
    \end{equation}
    We now write $\E_{\Q}[\log (\tW_{\tau})]$ as follows:
    \begin{align}
        \E_{\Q}\Brac{\logp{\tW_{\tau}}}
        &= \E_{\Q}\Brac{\logp{\tW_{\tau} / \tW_{\tauQ}}} 
        + \E_{\Q}\Brac{\logp{\tW_{\tauQ}}} \\
        &= \underbrace{\E_{\Q}\Brac{\logp{\tW_{\tau} / \tW_{\tauQ}}}}_{
        \eqref{eq:proof-stopping-time-lower-bound-decomp}.(1)}
        + \underbrace{\E_{\Q}\Brac{\logp{\tW_{\tauQ}}
        - \logp{W_{\tauQ}^\star}}}_{
        \eqref{eq:proof-stopping-time-lower-bound-decomp}.(2)} 
        + \E_{\Q}\Brac{\logp{W_{\tauQ}^\star}} \mper
        \label{eq:proof-stopping-time-lower-bound-decomp}
    \end{align}
    We will subsequently upper bound both \eqref{eq:proof-stopping-time-lower-bound-decomp}.(1) and \eqref{eq:proof-stopping-time-lower-bound-decomp}.(2) by zero.
    Turning to the first term,
    \begin{equation}
        \E_{\Q}\Brac{\logp{\tW_{\tau} / \tW_{\tauQ}}}
        \leq \logp{\E_{\Q}\Brac{\tW_{\tau}/\tW_{\tauQ}}}
        \leq \logp{1} = 0 \mcom
    \end{equation}
    where the first inequality follows from Jensen's inequality and the second
    inequality follows from the allocation-wise numeraire property of $\tW_{\tauQ}$ 
    (i.e., \cref{lemma:allocation-wise numeraire portfolio}).

    Turning now to the second term, define the process $\infseqn{L_n}$ given for any $n \in \NN$ by
    \begin{equation}
        L_n \coloneqq \log(\tW_\nQ) - \log(W_\nQ^\star) = \sum_{i=1}^n \Brac{\logp{\optbbet(\tA_i)^\trans
        \bE_i(\tA_i)} - \logp{\optbbet(\optarm)^\trans \bE_i(\optarm)}} \mcom
    \end{equation}
    noting that $\EE_\Q[L_\tau] = \eqref{eq:proof-stopping-time-lower-bound-decomp}$.(2) by definition.
    We will now show that $\infseqn{L_n}$ is a $\Q$-supermartingale with respect to the filtration $\Hcal$ and with mean zero. For any $n \in
    \N$ we have that
    \begin{align}
        \E_{\Q}\Brac{L_n \smvert \cH_{n-1}} 
        &= L_{n-1} + \E_{\Q}\Brac{\logp{\optbbet(\tA_i)^\trans
        \bE_n(\tA_n)} - \logp{\optbbet(\optarm)^\trans \bE_n(\optarm)} \smvert
        \cH_{n-1}} \\
        &\leq L_{n-1} \mcom
    \end{align}
    where the inequality follows from the definition of $\optarm$. 
    Together \cref{lem:bounded-expected-absolute-log-difference,lem:optional-stopping-supermartingale}
    allow us to conclude that $\E_\Q\Brac{L_\tau} \leq 0$.
    Returning to the equality in
    \eqref{eq:proof-stopping-time-lower-bound-decomp}, and applying Wald's
    identity,
    \begin{equation}
        \EE_\Q[\log(\tW_\tau)] \leq \EE_\Q[\log(W_\tauQ^\star)] = \EE_\Q[\tau] \EE_\Q[\log(\optbbet(a_\Q)^\trans \bE_1(a_\Q)].
    \end{equation}
    By definition of $\tau$ being the first time for which $\log(\tW_\tau)$ exceeds $\log(1/\alpha)$, we have that
    \begin{equation}
    \EE_\Q[\widetilde \tau_\alpha] \EE_\Q[\log(\optbbet(a_\Q)^\trans \bE_1(a_\Q)] \geq \log(1/\alpha).
    \end{equation}
    Dividing both sides by the expected log-increment yields the first part of the proposition.

    Moving on to the second part of the proposition, fix $a \in \Acal$ and let $\tW_n^\bracka$ form any $\Pe$-process of the form described in \cref{proposition:stopping-timelower-bound}. Define $\tau \equiv \widetilde \tau_\alpha^\bracka := \inf \{ n \in \NN : \tW_n^\bracka \geq 1/\alpha \}$. Following a similar approach to the first part of the proof, define $\tW_\nQ^\bracka$ as the process given by
    \begin{equation}
        \tW_\nQ^\bracka := \prod_{i=1}^n \optbbet(a)^\trans\bE_i(a)
    \end{equation}
    and observe that
    \begin{align}
        \EE_\Q \left [ \log(\tW_\tau^\bracka) \right ] &= \EE_\Q \left [ \log \left (\tW_\tau^\bracka / \tW_\tauQ^\bracka\right ) \right ] + \EE_\Q \left [ \log(\tW_\tauQ^\bracka) \right ] \\
        &\leq \underbrace{\log \left ( \EE_\Q \left [ \tW_\tau^\bracka / \tW_\tauQ^\bracka \right ]\right )}_{\leq 0} + \EE_\Q \left [ \log(\tW_\tauQ^\bracka) \right ].
    \end{align}
    Since $\log(\tW_\tau^\bracka) \geq \log(1/\alpha)$, we have by an application of Wald's identity that
    \begin{equation}
        \EE_\Q[\tau] \EE_\Q\left [ \log(\optbbet(a)^\trans \bE_1(a)) \right ] \geq \log(1/\alpha).
    \end{equation}
    By definition of $a_\Q$, it follows that
    \begin{equation}
        \EE_\Q[\widetilde \tau_\alpha^\bracka] \geq \frac{\log(1/\alpha)}{\EE_\Q\left [ \log(\optbbet(a)^\trans \bE_1(a)) \right ]} \geq \frac{\log(1/\alpha)}{\EE_\Q\left [ \log(\optbbet(a_\Q)^\trans \bE_1(a_\Q)) \right ]},
    \end{equation}
    completing the proof.
\end{proof}

\subsection{Proof of \cref*{thm:expected-stopping-time-upper-bound}}
\label{sec:proof-expected-stopping-time}
\begin{proof}[Proof of \cref{thm:expected-stopping-time-upper-bound}]
    Fix $\Q \in \cQ$ and $\alpha \in (0,1)$.
    For an arbitrary $\delta \in (0,1)$, let
    \begin{align}
        \epsilon \coloneqq \frac{\delta}{1 + \delta}
        \E_{\Q}\Brac{\ell_{1,\Q}(\optarm)}
        \quad\mathrm{and}\quad m \coloneqq \ceil{\frac{\logp{1/\alpha}}
        {\E_{\Q}\Brac{\ell_{1,\Q}(\optarm) } - \epsilon}} 
        = \ceil{\frac{(1 + \delta)
        \logp{1/\alpha}}{\E_{\Q}\Brac{\ell_{1,\Q}(\optarm)}}} \mper
    \end{align}
    
    We begin by rewriting the expected stopping time through the tail sum 
    formula:
    \begin{align}
        \E_{\Q}\Brac{\tau} 
        &= \sum_{t=1}^\infty \PP_{\Q}\Paren{\tau \geq t} \\
        &\leq m + \sum_{t=m}^\infty \PP_{\Q}\Paren{\tau \geq t} \\
        &\leq 1 + m + \sum_{t=m}^\infty \PP_{\Q}\Paren{\tau > t} \mcom
        \label{eq:proof-stopping-time-ub-series-probs}
    \end{align}
    where the first inequality follows by upper bounding the first $m$ probabilities
    in the series by one, and in the second inequality we rewrite the event
    inside the probability $\tau \geq t$ as $\tau > t$ while upper bounding the
    corresponding $m-1$th probability by one. We now recall our definition of 
    the stopping time $\tau \equiv \tau_{\alpha} \coloneqq \Set{n \in \N \smvert 
    W_n \geq 1/\alpha}$. This definition allows us to rewrite the sum in 
    \eqref{eq:proof-stopping-time-ub-series-probs} as
    \begin{align}
        &\sum_{t = m}^\infty \PP_{\Q}\Paren{\tau > t}\\ 
        &\qquad= \sum_{t = m}^\infty \PP_{\Q}\Paren{W_t < 1/\alpha}\\
        &\qquad= \sum_{t=m}^\infty \PP_{\Q}\Paren{t^{-1}\logp{W_t} < 
        t^{-1} \logp{1/\alpha}} \\
        &\qquad \leq \sum_{t=m}^\infty
        \Brac{\PP_{\Q}\Paren{\E_{\Q}\Brac{\ell_{1,\Q}(\optarm)} 
        - \epsilon \leq t^{-1} \logp{1/\alpha}} 
        + \PP_{\Q}\Paren{t^{-1}\logp{W_t} - \E_{\Q}\Brac{\ell_{1,\Q}(\optarm)} < -\epsilon}} \\
        &\qquad \leq \sum_{t=m}^\infty \Brac{\Ind{\E_{\Q}\Brac{\ell_{1,\Q}(\optarm)} 
        - \epsilon \leq m^{-1} \logp{1/\alpha}} 
        + \PP_{\Q}\Paren{\E_{\Q}\Brac{\ell_{1,\Q}(\optarm)} - t^{-1}\logp{W_t} \geq \epsilon}} 
        \label{eq:proof-stopping-time-ub-indicator-upper-bound}
    \end{align}
    where the first inequality follows from a union bound, and the second
    inequality follows from upper bounding the probability by an indicator and
    from the fact that $t^{-1} \leq m^{-1}$ whenever $t \geq m$.
    By plugging in the definition of $m$ we can see how 
    \eqref{eq:proof-stopping-time-ub-indicator-upper-bound} simplifies to
    \begin{align}
        \eqref{eq:proof-stopping-time-ub-indicator-upper-bound}
        &= \sum_{t=m}^\infty \PP_{\Q}\Paren{\E_{\Q}\Brac{\ell_{1,\Q}(\optarm)} 
        - t^{-1} \logp{W_t} \geq \epsilon} \\
        &\leq \sum_{t=m}^\infty
        \PP_{\Q}\Paren{\Abs{\E_{\Q}\Brac{\ell_{1,\Q}(\optarm)} 
        - t^{-1} \logp{W_t}} \geq \epsilon} \mper
        \label{eq:proof-stopping-time-bandit-sequence-concentration-term}
    \end{align}
    Appealing to the proof of 
    \cref{lemma:exponential concentration for bandit sequence - optimal}, we have 
    \begin{align}
        \eqref{eq:proof-stopping-time-bandit-sequence-concentration-term} &\leq \1 \left \{ R_\tQ^\MAB \geq \eps t / 3 \right \} + 3Kt \left ( \exp \left \{ - \frac{\eps^2 t }{72bK^2} \right \} + \exp \left \{ -\frac{\eps t }{12 K} \right \} \right ) \\
        &\quad + \1 \left \{ \sum_{a=1}^K R_t^\CO(a) \geq \eps t/3 \right \}.
    \end{align}
    Putting these inequalities together and substituting in the definition of $m$ we obtain the following upper bound on the expected stopping time:
    \begin{align}
        \E_{\Q}\Brac{\tau} 
        \leq 2 + \frac{(1+\delta) \log(1/\alpha)}{\EE_\Q [\ell_{1,\Q}(\optarm)]} &+ \sum_{t=m}^\infty \1 \left \{ R_\tQ^\MAB \geq \eps t / 3 \right \} \\
        &+ \sum_{t=m}^\infty 3Kt \left ( \exp \left \{ - \frac{\eps^2 t }{72bK^2} \right \} + \exp \left \{ -\frac{\eps t }{12 K} \right \} \right ) \\
        & + \sum_{t=m}^\infty \1 \left \{ \sum_{a=1}^K R_t^\CO(a) \geq \eps t/3 \right \} .
    \end{align}
    Dividing both sides of the inequality by $\logp{1/\alpha}$, noting the sublinearity of $R_t^\CO(a)$ for each $a \in \Acal$ and the sublinearity of $R_\tQ^\MAB$ as in \cref{theorem:expected suboptimal pulls}, and taking the limit superior as $\alpha \to 0^+$, we have 
    \begin{equation}
        \limsup_{\alphato}\frac{\EE_\Q[\tau_\alpha]}{\log(1/\alpha)} \leq \frac{1+\delta}{\EE_\Q[\ell_\Q(a_\Q)]}.
    \end{equation}
    Since $\delta \in (0, 1)$ was arbitrary, the above holds with a one in the numerator, completing the proof.
\end{proof}

\begin{lemma}[Sub-optimal arm-pull event decomposition]\label{lemma:suboptimal arm selection implications}
  Fix $a \in \Acal$ with $a \neq \optarm$. Suppose that actions $\infseqn{A_n}$ are chosen according to \SPRUCE{}. For any $t \in \NN$, let $B_t^\bracka
  \coloneqq \{ A_t = a \}$ be the event that suboptimal arm $a$ is selected at time $t$. For any $t \geq K+1$, it holds that
  \begin{equation}
    B_t^\bracka \subseteq B_t^{(\text{UCB-$\optarm$})} \cup B_t^{(\text{LCB-$a$})} \cup B_t^{(\Delta_a)},
  \end{equation}
  where
  \begin{align}
      B_t^{\brackUCBastar} &:= \left \{ \UCB_{\optarm}(t) <
          \E_{\Q}\Brac{\ell_{1, \Q}(\optarm)} \right \},\\
    B_t^{\brackLCBa} &:= \left \{ \LCB_a(t) > \E_{\Q}\Brac{\ell_{1, \Q}(\optarm)} \right \}, \text{ and}\\
    B_t^{\brackDeltaa} &:= \left \{ \Delta_a < 2 \left ( \sqrt{\frac{8b \gamma \log(\zeta t + 1)}{N_a(t-1)}} + \frac{5 \gamma \log (\zeta t + 1)}{N_a(t-1)} + \frac{R_{N_a(t-1)}}{N_a(t-1)} \right ) \right \}.
  \end{align}
  In words, choosing a suboptimal arm implies that either the UCB for the
  optimal arm $\optarm$ miscovered, the LCB for the suboptimal arm $a$ miscovered, or the suboptimality gap is small relative to the confidence bound width.
\end{lemma}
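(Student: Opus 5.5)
We argue by contraposition: fix $t \geq K+1$, assume that none of $B_t^{\brackUCBastar}$, $B_t^{\brackLCBa}$, $B_t^{\brackDeltaa}$ occurs, and show that $A_t \neq a$. Write $\mu^\star \coloneqq \E_\Q[\ell_{1,\Q}(\optarm)]$ and $\mu_a \coloneqq \E_\Q[\ell_{1,\Q}(a)]$, so that $\Delta_{a} = \mu^\star - \mu_a$. Let $w_a(t)$ be the bracketed width in $B_t^{\brackDeltaa}$,
\begin{equation}
  w_a(t) \coloneqq \sqrt{\frac{8b \gamma \log(\zeta t + 1)}{N_a(t-1)}} + \frac{5 \gamma \log (\zeta t + 1)}{N_a(t-1)} + \frac{R_{N_a(t-1)}}{N_a(t-1)}.
\end{equation}
The argument is purely deterministic, pathwise set algebra; no probability is used. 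The probabilistic control of the first two events comes later, through \cref{proposition:regret based confidence interval} with $\alpha = (\zeta t+1)^{-\gamma}$.

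\textbf{Step 1: tie the LCB to the UCB.} $\LCB_a(t)$ is not defined explicitly in the text above. I will fix its convention so that it sits at most $2 w_a(t)$ below $\UCB_a(t)$, i.e.\ $\UCB_a(t) \leq \LCB_a(t) + 2 w_a(t) - \Delta_a$. This is the normalization that makes its comparison with $\mu^\star$ in $B_t^{\brackLCBa}$ meaningful: it is equivalent to saying that the empirical lower bound $\UCB_a(t) - 2w_a(t)$ for $\mu_a$ does not exceed $\mu_a$. The widths are compatible for this purpose. The upper deviation in \cref{proposition:regret based confidence interval} carries $5\log(1/\alpha)/n + R_n/n$. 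The lower deviation carries $4\log(1/\alpha)/n$. The UCB in \SPRUCE{} uses $4\gamma\log(\zeta t + 1)$, and this is dominated by the $5\gamma\log(\zeta t+1)$ appearing in $w_a(t)$.

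\textbf{Step 2: chain the inequalities.} Suppose $A_t = a$.
\begin{enumerate}[(i)]
  \item Because $A_t = \argmax_{a'} \UCB_{a'}(t)$, we have $\UCB_a(t) \geq \UCB_{\optarm}(t)$.
  \item Not being in $B_t^{\brackUCBastar}$ gives $\UCB_{\optarm}(t) \geq \mu^\star$, hence $\UCB_a(t) \geq \mu^\star$.
  \item Not being in $B_t^{\brackLCBa}$ gives $\LCB_a(t) \leq \mu^\star$. By Step 1, $\UCB_a(t) \leq \mu^\star + 2w_a(t) - \Delta_a$.
  \item Not being in $B_t^{\brackDeltaa}$ gives $\Delta_a \geq 2w_a(t)$, so $\UCB_a(t) \leq \mu^\star$.
\end{enumerate}
Combining (ii) and (iv) forces $\UCB_a(t) = \UCB_{\optarm}(t) = \mu^\star$ and $\Delta_a = 2w_a(t)$.

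\textbf{Main obstacle: the boundary case.} Step 2 does not yield a contradiction in the boundary case above, where the UCBs are tied and the gap exactly equals twice the width. Since ties are broken arbitrarily, $a$ could still be selected there. I would try to close this in one of two ways. The first is to check whether the definitions give a strict inequality somewhere, for instance $4\gamma < 5\gamma$ in comparing the UCB width with $w_a(t)$, whenever $\log(\zeta t + 1) > 0$. The second, failing that, is to adopt the tie-breaking convention that ties are resolved in favour of $\optarm$ (or, equivalently, to absorb equality into $B_t^{\brackDeltaa}$ by making its inequality non-strict). This does not affect the downstream bound on $\E_\Q[N_a(n)]$. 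With this edge case settled, the inclusion $B_t^\bracka \subseteq B_t^{\brackUCBastar} \cup B_t^{\brackLCBa} \cup B_t^{\brackDeltaa}$ follows.
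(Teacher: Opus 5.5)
Your Step 2 is the paper's argument. On $\{A_t = a\}$ the paper chains $\UCB_a(t) \geq \UCB_{\optarm}(t) \geq \mu^\star = \Delta_a + \mu_a \geq \Delta_a + \LCB_a(t)$, where $\LCB_a(t)$ is fixed only implicitly by $\UCB_a(t) - \LCB_a(t) \equiv 2w_a(t)$. It also uses $\LCB_a(t) \leq \mu_a$ as the complement of $B_t^{\brackLCBa}$, i.e.\ it reads that event with $\E_\Q[\ell_{1,\Q}(a)]$ in place of $\mu^\star$. Taken literally, with $\mu^\star$ and the unshifted lower bound, nothing ties $\LCB_a(t) \leq \mu^\star$ to $\Delta_a$, and the inclusion can fail. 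Your $\Delta_a$-shift in Step 1 is exactly what reconciles the printed statement with the event the proof actually uses.

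Your boundary case is also a real defect, and the paper's proof hides it rather than resolving it. The proof writes strict inequalities in the second and third links. But the complements of the strict events $B_t^{\brackUCBastar}$ and $B_t^{\brackLCBa}$ only supply weak inequalities, so the chain gives $\Delta_a \leq 2w_a(t)$, not $\Delta_a < 2w_a(t)$. Suppose the lower bound sits exactly $2w_a(t)$ below the UCB (the paper's convention, or yours up to the $\Delta_a$ shift) and ties are broken arbitrarily. Then nothing excludes the event $\{\UCB_a(t) = \UCB_{\optarm}(t) = \mu^\star,\ \Delta_a = 2w_a(t),\ A_t = a\}$, and on it the stated inclusion fails.

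So the proof is not complete until you choose a repair. Resolving ties in favour of $\optarm$ is not admissible: \SPRUCEtext{} does not know $\optarm$, and the lemma is asserted for arbitrary tie-breaking. That option is also not equivalent to making $B_t^{\brackDeltaa}$ non-strict. The latter is a legitimate weakening of the statement, and it is harmless downstream, where $N_a(t-1) > m$ yields strict comparisons anyway. Your first idea, however, proves the lemma exactly as written, just not under your Step 1 convention, which leaves no slack. Write $M_a(t)$ for the empirical-maximum term in $\UCB_a(t)$. Take instead the lower bound whose miscoverage, for each fixed value of $N_a(t-1)$, the first half of \cref{proposition:regret based confidence interval} controls at level $(\zeta t + 1)^{-\gamma}$:
\begin{equation}
  \LCB_a(t) \coloneqq M_a(t) - \sqrt{\frac{8b\gamma\log(\zeta t+1)}{N_a(t-1)}} - \frac{5\gamma\log(\zeta t+1)}{N_a(t-1)} - \frac{R_{N_a(t-1)}}{N_a(t-1)},
\end{equation}
adding $\Delta_a$ if you keep $\mu^\star$ in $B_t^{\brackLCBa}$. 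The term $\UCB_a(t)$ carries $4\gamma$ where $w_a(t)$ carries $5\gamma$. Hence $\UCB_a(t) - \LCB_a(t) = 2w_a(t) - \gamma\log(\zeta t+1)/N_a(t-1) < 2w_a(t)$, strictly for every $t$ since $\zeta t + 1 > 1$. Your chain then delivers $\Delta_a < 2w_a(t)$, with no tie-breaking assumption and no change to $B_t^{\brackDeltaa}$. Finally, your ``widths are compatible'' remark belongs to the later probability bound on $B_t^{\brackLCBa}$, not to this deterministic inclusion.
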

    \begin{proof}
      Suppose that $B_t^\bracka \not \subseteq B_t^\brackUCBastar$ and that $B_t^\bracka \not \subseteq B_t^\brackLCBa$. We will show that $B_t^\bracka \subseteq B_t^\brackDeltaa$. Indeed, we have that on the event $B_t^\bracka$,
      \begin{align}
        \UCB_a(t) &\geq \UCB_{\optarm}(t)\\
                  &> \EE_{\Q} \left [ \ell_{1, \Q}(\optarm) \right ] 
                  - \EE_{\Q} \left [ \ell_{1, \Q}(a) \right ] + \EE_{\Q} \left [
                  \ell_{1, \Q}(a) \right ]\\
                  &= \Delta_a + \EE_{\Q} \left [ \ell_{1, \Q}(a) \right ] \\
                          &> \Delta_a + \LCB_a(t),
      \end{align}
      where the first inequality is precisely the event $B_t^\bracka$, the second uses the fact that $B_t^\bracka \not \subseteq B_t^\brackUCBastar$, and the third uses the fact that $B_t^\bracka \not \subseteq B_t^\brackLCBa$. We therefore have that on the event $B_t^\bracka$,
      \begin{equation}
	 2 \left ( \sqrt{\frac{8b \gamma \log(\zeta t + 1)}{N_a(t-1)}} + \frac{5 \gamma \log (\zeta t + 1)}{N_a(t-1)} + \frac{R_{N_a(t-1)}}{N_a(t-1)} \right ) \equiv \UCB_a(t) - \LCB_a(t) > \Delta_a.
      \end{equation}
      It follows that $B_t^\bracka \subseteq B_t^\brackDeltaa$, and hence
      \begin{equation}
	B_t^\bracka \subseteq B_t^{(\text{UCB-$\optarm$})} \cup B_t^{(\text{LCB-$a$})} \cup B_t^{(\Delta_a)},
      \end{equation}
      which completes the proof.
\end{proof}

\subsection{Proof of \cref*{lemma:sub-exponential-log-wealth-increments}}\label{proof:sub-exponential-log-wealth-increments}

\begin{proof}[Proof of \cref{lemma:sub-exponential-log-wealth-increments}]
    Let us first consider the case where $\theta \in [0, 1]$. 
    Through a direct calculation, it holds that
    \begin{align}
        \EE_\Q\Brac{\exps{\theta \Paren{\ell_{1, \Q} - \EE_\Q[\ell_{1, \Q}]}}} 
        &= \E_\Q\Brac{\Paren{\frac{\exps{\ell_{1, \Q}}}{\exps{\EE_\Q[\ell_{1,\Q}]}}}^{\theta}} \\
        &= \E_\Q\Brac{\Paren{\frac{\optbbet^\trans \bE_1}
            {\exps{\E_\Q\Brac{\logp{\optbbet^\trans \bE_1}}}}}^\theta} \\
        &\leq \E_\Q\Brac{\Paren{\optbbet^\trans \bE_1}^\theta}\\
      &\leq b^\theta \leq b,
    \end{align}
    where in the first equality, we used the fact that $\EE_\Q[\log
    (\optbbet^\trans \bE_1 )] \geq 0$ by construction, and the second inequality
    uses the almost-sure upper bound in \eqref{eq:q-a.s. upper bound assumption}
    and the fact that $\theta \in [0,1]$ and $b > 1$.

    Consider now the case where $\theta \in [-1, 0)$ and let $\beta := -\theta \in (0, 1]$. We have
    \begin{align}
        \E_\Q\Brac{\exps{\theta \Paren{ \ell_{1, \Q} - \EE_\Q[\ell_{1, \Q}]}}}
        &= \E_\Q\Brac{\Paren{\frac{\exps{\E_\Q\Brac{\logp{\optbbet^\trans \bE_1}}}}
        {\optbbet^\trans \bE_1}}^\beta} \\
        &\leq \E_\Q\Brac{\Paren{\frac{\E_\Q \left [ \exps{\logp{\optbbet^\trans \bE_1}} \right ]}
        {\optbbet^\trans \bE_1}}^\beta}
        \label{eq:proof-sub-exponential-upper-bound-side-two-jensens} \\
        &\leq b^\beta \E_\Q\Brac{\Paren{\frac{1}
          {\optbbet^\trans \bE_1}}^\beta},
    \end{align}
    where the first inequality follows from Jensen's inequality applied to the
    convex function $x \mapsto \exp \{ x\}$ and the second follows from the
    almost sure upper bound on $\optbbet^\trans \bE_1$. By another application of
    Jensen's inequality but now applied to the concave function $x \mapsto
    x^\beta$, we have
    \begin{equation}
      b^\beta \E_\Q\Brac{\Paren{\frac{1}
          {\optbbet^\trans \bE_1}}^\beta} \leq b^\beta \left ( \EE_\Q \left [ \frac{1}{\optbbet^\trans \bE_1} \right ] \right )^\beta.
    \end{equation}
    Using the fact that there exists some $\lambda_1$ for which $(1-\lambda_1)
    \dot E_1 + \lambda_1 \ddot E_1 = 1$ alongside the numeraire property of
    $\optbbet$, we have
    \begin{equation}
      b^\beta \left ( \EE_\Q \left [ \frac{1}{\optbbet^\trans \bE_1} \right ] \right )^\beta \leq b^\beta \leq b.
    \end{equation}
    Combining the above with the previous upper bound for the case where $\theta
    \in [0, 1]$, we conclude that for any $\theta \in [-1, 1]$,
    \begin{equation}
        \EE_\Q \left [ \exp \left \{ \theta \left ( \ell_{1, \Q} - \EE_\Q
        [\ell_{1, \Q}] \right ) \right \} \right ] \leq b
    \end{equation}
    which completes the proof.
\end{proof}

\subsection{Proof of \cref*{proposition:regret based confidence interval}}
\label{proof:regret based confidence interval}
\begin{proof}[Proof of \cref{proposition:regret based confidence interval}]
    We start by using the following portfolio regret inequality,
    \\$\max_{\bbet \in \simplex} \sum_{i=1}^n \logp{\bbet^\trans \bE_i} 
    - \sum_{i=1}^n \logp{\bbet_i^\trans \bE_i} \leq R_n^\Port$, and use it to obtain
    the following upper bound,
    \begin{align}
      &\PP_\Q \left ( \max_{\bbet \in \simplex}\frac{1}{n} \sum_{i=1}^n
          \logp{\bbet^\trans \bE_i(a)} - \EE_\Q\left[\logp{\optbbet^\trans \bE_1}\right] \geq \sqrt{\frac{8b\log(1/\alpha)}{n}}
        + \frac{5 \log (1/\alpha)}{n} + \frac{R_n}{n}\right )\\
      &\quad\leq \PP_\Q \left ( \frac{1}{n} \sum_{i=1}^n \logp{\bbet_i^\trans \bE_i} -
        \EE_\Q\left[\logp{\optbbet^\trans \bE_1}\right] \geq \sqrt{\frac{8b\log(1/\alpha)}{n}} + \frac{5 \log
       (1/\alpha)}{n}\right ) \leq 2\alpha,
    \end{align} 
    where the second inequality follows from \cref{proposition:root-n logwealth}. 
    Analyzing the same deviation but from below, we have
    \begin{align}
      &\PP_\Q \left ( \EE_\Q\left[\logp{\optbbet^\trans \bE_1}\right] - \max_{\bbet \in \simplex}
      \frac{1}{n} \sum_{i=1}^n \logp{\bbet^\trans \bE_i} \geq
      \sqrt{\frac{8b\log(1/\alpha)}{n}} + \frac{4 \log (1/\alpha)}{n}\right ) \\
      &\quad\leq \PP_\Q \left ( \EE_\Q\left[\logp{\optbbet^\trans \bE_1}\right] - \frac{1}{n}
      \sum_{i=1}^n \logp{\optbbet^\trans \bE_i} \geq \sqrt{\frac{8b\log(1/\alpha)}{n}} +
    \frac{4 \log (1/\alpha)}{n}\right ) \leq \alpha,
    \end{align}
    where the final inequality follows from \cref{lemma:root-n confidence intervals of sub-exp}.
    This completes the proof.
\end{proof}

\subsection{Proof of \cref*{theorem:expected suboptimal pulls}}\label{proof:expected suboptimal pulls}
\begin{proof}
    Fix $m > 0$ whose value will be chosen later. 
    Let $N_a(n)$ be the number of times that arm
    $a \in \Acal$ has been pulled up until and including time $n$. Consider its
    expectation and recall that each arm is played exactly once in the first
    $K$ rounds:
    \begin{align}
        \E_\Q\Brac{N_a(n)} &= 1 + \E_\Q\Brac{\sum_{t = K+1}^n \Ind{A_t = a}} \\
                    &= 1 + \Ex{\sum_{t = K+1}^n \Ind{A_t = a ~\mathrm{and}~ 
                    N_a(t-1) \leq m} + \Ind{A_t = a 
                      ~\mathrm{and}~ N_a(t - 1) > m}}. \label{eq:mabproof m-indicators}
    \end{align}
    Notice that $\sum_{t=K+1}^n \1 \{ A_t = a \text{ and } N_a(t-1) \leq m \} \leq m$ with probability one since in the most extreme case, $N_a(n-1) = m$, in which case at most $m$ of those indicators can be nonzero. As such, we have that
    \begin{align}
                    \eqref{eq:mabproof m-indicators} &\leq 1 + m + \E_\Q\Brac{\sum_{t = K+1}^n \Ind{A_t = a ~\mathrm{and}~
                    N_a(t-1) > m}} \\
                    &\leq 1 + m + \E_\Q\Brac{\sum_{t = m+1}^n \Ind{A_t = a ~\mathrm{and}~
                      N_a(t-1) > m}}, \label{eq:mabproof mceil}
      \end{align}
      where the second inequality follows from the fact that the indicator inside the sum is only positive when $N_a(t-1) > m$, which implies that $t-1 > m$.
      For 
      \begin{equation}
            m \coloneqq \max \left \{ \frac{72 b \gamma \log(\zeta n +
            1)}{\Delta_a^2}, \frac{15\gamma \log(\zeta n + 1)}{\Delta_a},
        \frac{3R_{n-1}}{\Delta_a} \right \} \mcom
      \end{equation}
    notice that on the event $\{ N_a(t-1) > m \}$, it holds that
    \begin{align}
      &2 \left ( \sqrt{\frac{8b \gamma \log(\zeta t + 1)}{N_a(t-1)}} + \frac{5 \gamma \log (\zeta t + 1)}{N_a(t-1)} + \frac{R_{N_a(t-1)}}{N_a(t-1)} \right ) \\
      &\quad \leq 2 \left ( \sqrt{\frac{8b \gamma \log(\zeta t + 1)}{N_a(t-1)}} + \frac{5 \gamma \log (\zeta t + 1)}{N_a(t-1)} + \frac{R_{t-1}}{N_a(t-1)} \right ) \\
    &\quad \leq 2 \left ( \sqrt{\frac{ \Delta_a^2 \cancel{8b \gamma \log(\zeta t + 1)}}{9\cdot \cancel{8b\gamma \log(\zeta t + 1)}}} + \frac{\Delta_a\cancel{5 \gamma \log (\zeta t + 1)}}{3 \cdot \cancel{5 \gamma \log(\zeta t + 1)}} + \frac{\Delta_a \cancel{R_{t-1}}}{3 \cancel{R_{t-1}} } \right ) \\
      &\quad = 2 \Delta_a\mper
    \end{align}
    In other words, $\{ N_a(t-1) > m \} \not \subseteq B_t^\brackDeltaa$.
    Therefore, by \cref{lemma:suboptimal arm selection implications} (to be stated and proven later), it is the
    case that $\{ A_t = a \text{ and } N_a(t-1) > m \} \subseteq
    B_t^\brackUCBastar \cup B_t^\brackLCBa$, and hence we have that
    \begin{align}
        \eqref{eq:mabproof mceil} &\leq 1 + m + \E_\Q\Brac{\sum_{t=m+1}^n \1 \left \{
        B_{t}^{(\text{UCB-$\optarm$})} \cup B_{t}^{(\text{LCB-$a$})} \right \} } \\
        &\leq 1 + m + \sum_{t=m+1}^n \PP_\Q \left (B_{t}^{(\text{UCB-$\optarm$})} \right ) + \sum_{t=m+1}^n \PP_\Q \left (B_{t}^{(\text{LCB-$a$})} \right ).
    \end{align}
    We now consider each of the sums separately. We define the following
    shorthands: $\ell_{i, \Q}(a) \coloneqq \logp{\optbbet(a)^\trans \bE_i(a)}$
    and $\ell_{i}(a) \coloneqq \logp{\bbet_i(a)^\trans \bE_i(a)}$ for
    every $a \in \cA$. We proceed to inspect the first term.
    Using \cref{proposition:root-n logwealth}, we have that
    \begin{align}
      \sum_{t = m+1 }^n \PP_\Q \left ( B_t^{\UCB\text{-}\optarm} \right ) 
      &\leq
      \sum_{t= m+1 }^n \PP_\Q \left ( \EE_\Q [\ell_{1, \Q}(\optarm)] -
      \sum_{i=1}^{N_{\optarm}(t-1)} \frac{\ell_{i}(\optarm)}{N_{\optarm}(t-1)} \right.\\
      &\qquad\left. > \sqrt{\frac{8b\gamma \log(\zeta t + 1)}{N_{\optarm}(t-1)}} +
      \frac{5\gamma \log(\zeta t + 1)}{N_{\optarm}(t-1)} + \frac{R_{N_{\optarm}
      (t-1)}}{N_{\optarm}(t-1)}  \right )\\
      &\leq \sum_{t=m+1}^n \sum_{s=1}^t \PP_\Q\left (\EE_\Q[\ell_{1, \Q}(\optarm)]
          - \frac{1}{s} \sum_{i=1}^{s} \ell_{i}(\optarm)  > \sqrt{\frac{8b\gamma 
      \log(\zeta t + 1)}{s}} + \frac{5\gamma \log( \zeta t + 1)}{s} + \frac{R_{s}}{s} \right )  \\
      &\leq \sum_{t= m+1 }^n \sum_{s=1}^t  \frac{2}{(\zeta t + 1)^\gamma}\\
      &\leq \zeta^{-\gamma}\sum_{t= m+1 }^n \frac{2}{t^{\gamma - 1}}\\
      &\leq \frac{2 \zeta^{-\gamma}}{\gamma - 2}.
    \end{align}
    Considering now the second term, we similarly have
    \begin{align}
        \sum_{t=m+1}^n \PP_\Q \left (B_{t}^{(\text{LCB-$a$})} \right ) &\leq \frac{2\zeta^{-\gamma}}{\gamma - 2}.
    \end{align}
    Putting all of the previous inequalities together, we finally conclude that
    \begin{equation}
      \EE_\Q \left [ N_a(n) \right ] \leq 1 + \max \left \{ \frac{72 b \gamma \log(\zeta n + 1)}{\Delta_a^2}, \frac{15\gamma \log(\zeta n + 1)}{\Delta_a}, \frac{3R_{n-1}}{\Delta_a} \right \} + \frac{4 \zeta^{-\gamma}}{\gamma - 2},
    \end{equation}
    and hence $\EE_\Q \Brac{N_a(n)} = O \Paren{ \Delta_a^{-2} \log (n)}$
    which completes the proof of the first claim. The second follows from \citet[Lemma 4.5]{lattimore2020bandit}.
\end{proof}

\section{Auxiliary lemmas}

\subsection{Arm- and allocation-wise numeraire portfolios}

\begin{lemma}[Arm-wise numeraire portfolios]\label{lemma:arm-wise-numeraire-portfolio}
    Fix an arm $a \in \cA$. 
    Let $\Hcal$ be the filtration described in \cref{section:prelim-multi-armed},
    $\optbbet(a)$ be the numeraire portfolio under arm $a$,
    and suppose that $\infseqn{A_n}$ and $\infseqn{\bbet_n}$ are predictable
    sequences.
    Then, for any $\Qin$, the process $\infseqn{S_n^\Q(a)}$ given by
    \begin{equation}
        S_n^\Q(a) = \prod_{i=1}^n \left ( \frac{\bbet_i(a)^\trans
        \bE_i(a)}{\optbbet(a)^\trans \bE_i(a)} \right )^{\1 \{ A_i = a \}}
    \end{equation}
    is a nonnegative $\Q$-supermartingale with mean one. Consequently, for any $\Hcal$-stopping time $\tau$,
    \begin{equation}
        \forall \Qin,\quad \EE_\Q[S_\tau^\Q(a)] \leq 1.
    \end{equation}
\end{lemma}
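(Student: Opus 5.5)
The plan is to reduce the claim to the single-arm numeraire property of $\optbbet(a)$ and then use optional stopping for nonnegative supermartingales. The numeraire property states that for every $\bbet \in \simplex$,
\begin{equation}
    \EE_\Q \left [ \frac{\bbet^\trans \bE_1(a)}{\optbbet(a)^\trans \bE_1(a)} \right ] \leq 1 .
\end{equation}
I would obtain it from the first-order optimality conditions of the concave program $\bbet \mapsto \EE_\Q[\log(\bbet^\trans \bE_1(a))]$ over the simplex. Take the directional derivative at $\optbbet(a)$ toward an arbitrary $\bbet$, i.e., the right derivative of $\epsilon \mapsto \EE_\Q[\log(((1-\epsilon)\optbbet(a) + \epsilon \bbet)^\trans \bE_1(a))]$ at $\epsilon = 0$. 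This derivative is nonpositive, and it equals $\EE_\Q[\bbet^\trans \bE_1(a)/\optbbet(a)^\trans \bE_1(a)] - 1$.

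To pass the derivative inside the expectation, I would use that the difference quotients $\epsilon^{-1}\log(1 + \epsilon(X-1))$, with $X$ the ratio, are monotone in $\epsilon$. Monotone convergence then applies; Fatou's lemma is an alternative. I would also check that $\optbbet(a)^\trans \bE_1(a) > 0$ holds $\Q$-a.s. This follows because $\tilde\bbet$ in \cref{assumption:class of eprocesses} gives finite expected log-wealth equal to zero. Hence the optimum is $\geq 0 > -\infty$, so the log-optimal wealth is positive almost surely.

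Next comes the supermartingale step with respect to $\Hcal$. The increment of $S^\Q(a)$ at time $n$ equals $1$ on $\{A_n \neq a\}$ and equals $X_n \coloneqq \bbet_n(a)^\trans \bE_n(a)/\optbbet(a)^\trans \bE_n(a)$ on $\{A_n = a\}$. Since $A_n$ and $\bbet_n(a)$ are $\Hcal_{n-1}$-measurable, I would write
\begin{equation}
    \EE_\Q[S_n^\Q(a) \mid \Hcal_{n-1}] = S_{n-1}^\Q(a)\left(\1\{A_n \neq a\} + \1\{A_n = a\}\,\EE_\Q[X_n \mid \Hcal_{n-1}]\right).
\end{equation}
The vector $\bE_n(a)$ is independent of $\Hcal_{n-1}$ and identically distributed with $\bE_1(a)$ by the i.i.d. assumption. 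So, freezing the predictable portfolio, $\EE_\Q[X_n \mid \Hcal_{n-1}] = g(\bbet_n(a))$ with $g(\bbet) \coloneqq \EE_\Q[\bbet^\trans \bE_1(a)/\optbbet(a)^\trans \bE_1(a)] \leq 1$; this uses the freezing lemma for independent variables. Hence the conditional expectation is at most $S_{n-1}^\Q(a)$. Nonnegativity is immediate, and $S_0^\Q(a) = 1$, which gives $\EE_\Q[S_1^\Q(a)] \leq 1$. The statement says ``mean one''; I would interpret this as an initial value of one, or equivalently a mean at most one. Equality holds only when the numeraire inequality is tight.

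For the stopping-time conclusion, I would use that for any $\Hcal$-stopping time $\tau$, possibly infinite, the stopped process $S^\Q_{\tau\wedge n}(a)$ is a nonnegative supermartingale. This gives $\EE_\Q[S^\Q_{\tau \wedge n}(a)] \leq 1$. On $\{\tau = \infty\}$, $S_\tau^\Q(a)$ is understood as $S^\Q_\infty(a)$, which exists a.s. by the supermartingale convergence theorem. Fatou's lemma as $n\to\infty$ then yields $\EE_\Q[S^\Q_\tau(a)] \leq 1$.

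The main obstacle is the rigorous derivation of the numeraire inequality at possible boundary optima of the simplex. There, interchanging the derivative and the expectation with possibly unbounded ratios must be justified by the monotone-difference-quotient argument above, rather than by dominated convergence.
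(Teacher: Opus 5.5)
Your proposal is correct and follows the paper's proof. The paper also splits the conditional increment on $\{A_n = a\}$ versus $\{A_n \neq a\}$, uses predictability of $A_n, \bbet_n(a)$ and independence of $\bE_n(a)$ from $\Hcal_{n-1}$ to reduce to the numeraire inequality, and finishes with optional stopping. You go further in three places: you derive the numeraire inequality from first-order optimality (valid, since the difference quotients are bounded below by $-2\log 2$ for $\epsilon \le 1/2$) where the paper just cites Cover and Thomas, Theorem 15.2.2; you give a Fatou argument for possibly infinite $\tau$; and you correctly note that ``mean one'' should read ``initial value one, mean at most one''.
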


\begin{proof}[Proof of \cref{lemma:arm-wise-numeraire-portfolio}]
  Nonnegativity follows by construction. To demonstrate that $S_n^\Q(a)$ forms
  a nonnegative $\Q$-supermartingale, observe that for any $n \in \NN$,
  \begin{align}
    \EE_\Q \left [ S_n^\Q(a) \mid \Hcal_{n-1} \right ] &= S_{n-1}^\Q(a)  \EE_\Q
    \left [ \left (\frac{\bbet_n(a)^\trans \bE_n(a)}{\optbbet(a)^\trans
    \bE_n(a)} \right )^{\1 \{ A_n = a \}} \Big \vert \Hcal_{n-1} \right ].
  \end{align}
  Analyzing the conditional expectation and keeping in mind that both
  $\bbet_n(a)$ and $A_n$ are $\Hcal_{n-1}$-measurable, we have that
  if $A_n \neq a$, then the right-hand side is one $\Q$-almost surely. On the other hand, 
  if $A_n = a$, then by $\Hcal_{n-1}$-measurability of $\bbet_n(a)$ and independence between
  $\bE_n(a)$ and $\Hcal_{n-1}$, we have, by \citet[Theorem 15.2.2]{cover1999elements} 
  (see also \citep{waudby2025universal}),
  \begin{equation}
    \EE_\Q \left [ \frac{\bbet_n(a)^\trans \bE_n(a)}{\optbbet(a)^\trans
    \bE_n(a)}  \Big \vert \Hcal_{n-1} \right ] \leq 1\quad\text{$\Q$-almost surely.}
  \end{equation}
  Putting these results together, we have that $\infseqn{S_n^\Q(a)}$ is a
  $\Q$-supermartingale with mean one. The final result follows from Doob's
  optional stopping theorem.
\end{proof}

\cref{lemma:arm-wise-numeraire-portfolio}
is used in the derivation of the
confidence intervals for the optimal log-increments under each arm. The arm-wise
numeraire property allows us to obtain the following bound for any arm $a \in
\cA$, $\alpha \in (0, 1)$, and predictable portfolios $(\bbet_i)_{i=1}^n$:
\begin{equation}
    \PP_{\Q}\Paren{\frac{1}{n} \sum_{i=1}^n \Paren{\logp{\bbet_i^\trans \bE_i} -
    \logp{\optbbet^\trans \bE_i}} \geq \frac{\logp{1/\alpha}}{n}} \leq \alpha \mper
\end{equation}
This inequality is then used to relate the average of the log-increments
under the predictable portfolios to the expected log-increment under the numeraire
portfolio by incurring an additional cost of $\logp{1/\alpha} / n$. Note that such a cost is substantially smaller than what one would typically observe in concentration inequalities for sample averages around their \emph{means}---those rates would typically scale as $\sqrt{\log(1/\alpha) / n}$.

\begin{lemma}[Allocation-wise numeraire portfolios]
\label{lemma:allocation-wise numeraire portfolio}
    Let $\cH$ be the filtration described in \cref{section:prelim-multi-armed}.
    Suppose that $\infseqn{A_n}$ and $\infseqn{\bbet_n}$ are predictable
    sequences.
    Then for any $\Q \in \cQ$, the process $\infseqn{S_n^{\Q}}$ given by
    \begin{equation}
        S_n^{\Q} \coloneqq \prod_{i=1}^n \Paren{\frac{\bbet_i(A_i)^\trans
        \bE_i(A_i)}{\optbbet(A_i)^\trans \bE_i(A_i)}}
    \end{equation}
    is a nonnegative $\Q$-supermartingale with mean one. Consequently, for any
    $\cH$-stopping time $\tau$,
    \begin{equation}
        \forall \Q \in \cQ \mcom \quad \E_{\Q}\Brac{S_{\tau}^{\Q}} \leq 1 \mper
    \end{equation}
\end{lemma}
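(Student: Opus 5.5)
The plan is to mirror the proof of \cref{lemma:arm-wise-numeraire-portfolio}, now conditioning on the random arm $A_n$ rather than fixing it. Nonnegativity of $S_n^\Q$ is immediate, because each ratio is a quotient of nonnegative quantities. The denominator $\optbbet(A_i)^\trans \bE_i(A_i)$ is $\Q$-almost surely positive, since $\EE_\Q[\log(\optbbet(a)^\trans \bE(a))] \geq \EE_\Q[\log(\tilde\bbet^\trans \bE(a))] = 0 > -\infty$ by \eqref{eq:assump-wealth-increment-equal-to-one}. For the supermartingale property we write
\begin{equation}
  \EE_\Q\Brac{S_n^\Q \smvert \Hcal_{n-1}} = S_{n-1}^\Q \, \EE_\Q\Brac{\frac{\bbet_n(A_n)^\trans \bE_n(A_n)}{\optbbet(A_n)^\trans \bE_n(A_n)} \smvert \Hcal_{n-1}},
\end{equation}
and then show that the conditional expectation on the right is at most one.

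To bound that conditional expectation, decompose it over arms as $\sum_{a \in \Acal} \1\{A_n = a\}\,\EE_\Q[\bbet_n(a)^\trans \bE_n(a) / \optbbet(a)^\trans \bE_n(a) \mid \Hcal_{n-1}]$. This uses that $A_n$ and $\bbet_n(\cdot)$ are $\Hcal_{n-1}$-measurable, so the indicators pull out. On each event $\{A_n = a\}$, the vector $\bE_n(a)$ is independent of $\Hcal_{n-1}$ because the outcome vectors are i.i.d. across time. Hence the conditional expectation equals $g_a(\bbet) \coloneqq \EE_\Q[\bbet^\trans \bE_1(a)/\optbbet(a)^\trans \bE_1(a)]$ evaluated at the fixed value $\bbet = \bbet_n(a)$; this step requires a freezing lemma for conditional expectations.

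The numeraire property of the log-optimal portfolio \citep[Theorem 15.2.2]{cover1999elements} gives $g_a(\bbet) \leq 1$ for every $\bbet \in \simplex$. Summing over $a$ with the indicators then yields a bound of one, so $(S_n^\Q)$ is a nonnegative $\Q$-supermartingale. Its mean is bounded by one, since $\EE_\Q[S_1^\Q] \leq 1$ by the same argument with $\Hcal_0$ trivial; under the convention $S_0^\Q = 1$ this is exactly the stated mean-one property.

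For the stopped statement, consider the stopped process $S_{n\wedge\tau}^\Q$. It is again a nonnegative supermartingale, so $\EE_\Q[S^\Q_{n\wedge\tau}] \leq 1$ for every $n$. On the event $\{\tau < \infty\}$ we have $S^\Q_{n\wedge\tau} \to S^\Q_\tau$, so Fatou's lemma gives $\EE_\Q[S_\tau^\Q] \leq \liminf_n \EE_\Q[S^\Q_{n\wedge\tau}] \leq 1$. If $\tau$ may be infinite, one defines $S_\infty^\Q$ as the almost-sure limit, which exists by the supermartingale convergence theorem, and the same Fatou argument applies.

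I expect the main obstacle to be justifying the numeraire inequality $g_a(\bbet) \leq 1$ rigorously. It relies on the first-order optimality condition for the concave objective $\bbet \mapsto \EE_\Q[\log(\bbet^\trans \bE(a))]$ over the simplex. One must justify differentiating under the expectation via monotone convergence of the difference quotients $(\log((1-t)\optbbet^\trans\bE + t\bbet^\trans\bE) - \log(\optbbet^\trans\bE))/t$, which are bounded below thanks to the bound $b$. Handling this via the citation, as the paper does for the arm-wise case, should suffice.
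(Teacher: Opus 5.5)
Your proposal is correct and follows essentially the same route as the paper: you factor out $S_{n-1}^\Q$, use the $\Hcal_{n-1}$-measurability of $A_n$ and $\bbet_n$ together with the independence of $\bE_n(\cdot)$ from $\Hcal_{n-1}$, and invoke the numeraire inequality of \citet[Theorem 15.2.2]{cover1999elements}. Your arm-wise indicator decomposition, freezing lemma, and Fatou-based treatment of a possibly infinite $\tau$ spell out steps the paper compresses into ``the only source of randomness left is the i.i.d.\ $e$-variables'' and ``Doob's optional stopping theorem.''
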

\begin{proof}[Proof of \cref{lemma:allocation-wise numeraire portfolio}]
    Our goal is to demonstrate that $S_n^{\Q}$ forms a nonnegative
    $\Q$-supermartingale. Indeed, nonnegativity follows by construction.
    To show that $S_n^{\Q}$ is a $\Q$-supermartingale, observe that for any $n
    \in \N$,
    \begin{equation}
        \E_{\Q}\Brac{S_n^{\Q} \smvert \cH_{n-1}} = S_{n-1}^{\Q} \E_{\Q}\Brac{
        \frac{\bbet_n(A_n)^\trans \bE_n(A_n)}{\optbbet(A_n)^\trans \bE_n(A_n)} 
        \smvert \cH_{n-1}} \mper
    \end{equation}
    We now focus on the conditional expectation and recall that both $A_n$ and
    $\bbet_n(A_n)$ are $\cH_{n-1}$ measurable, meaning that the only source of
    randomness left in the conditional expectation is due to the independent and
    identically distributed $e$-variables. We have from 
    \citet[Theorem 15.2.2]{cover1999elements} (see also \citep{waudby2025universal})
    that
    \begin{equation}
        \E_{\Q}\Brac{\frac{\bbet_n(A_n)^\trans \bE_n(A_n)}{\optbbet(A_n)^\trans \bE_n(A_n)} 
        \smvert \cH_{n-1}} \leq 1 \quad \Q\text{-almost surely} \mper 
    \end{equation}
    Putting these results together, we have that $\infseqn{S_n^{\Q}}$ 
    is a $\Q$-supermartingale with mean one. The final result follows from
    Doob's optional stopping theorem.
\end{proof}

\begin{corollary}
\label{corollary:arm-wise-portfolio-concentration}
  Let $W_n$ be a $\Pe$-process satisfying \cref{assumption:arm-wise products} and which for each $a \in \Acal$ has an arm-wise portfolio regret bounded by $R_n^\Port(a)$. Then for any $\Qin$ and any $\eps > 0$,
  \begin{equation}
    \PP_\Q \left ( \left \lvert  \sum_{i=1}^n \ell_{i,\Q}(A_i) - \sum_{a=1}^K \log(W_n(a)) \right \rvert \geq \eps \right ) \leq \1  \left \{ \sum_{a=1}^K R^\Port_n(a) \geq \eps n \right  \} + \exp \left \{ -n\eps \right \}.
  \end{equation}
\end{corollary}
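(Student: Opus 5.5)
We split the absolute deviation into its two one-sided parts and bound each separately, with a deterministic (regret) argument for one direction and a numeraire/Ville argument for the other. Write
\begin{equation}
D_n \coloneqq \sum_{i=1}^n \ell_{i,\Q}(A_i) - \sum_{a=1}^K \log(W_n(a)).
\end{equation}
Then $\PP_\Q(|D_n| \geq \eps) \leq \PP_\Q(D_n \geq \eps) + \PP_\Q(-D_n \geq \eps)$. We read the statement with the normalization used where the corollary is invoked, i.e.\ the event is really $|D_n| \geq \eps n$ (equivalently $|D_n|/n \geq \eps$), which is what makes the two terms on the right-hand side the natural ones.

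\emph{Upper deviation (regret term).} By \cref{assumption:arm-wise products}, we regroup the first sum arm by arm: $\sum_{i=1}^n \ell_{i,\Q}(A_i) = \sum_{a=1}^K \sum_{i=1}^n \1\{A_i=a\}\log(\optbbet(a)^\trans \bE_i(a))$. For each arm, the inner sum is at most $\max_{\bbet\in\simplex}\sum_i \1\{A_i=a\}\log(\bbet^\trans\bE_i(a))$. By \cref{definition:arm-wise portfolio regret} this maximum minus $\log W_n(a)$ is $\Rcal_n^\Port(a) \leq R_n^\Port(a)$. Hence $D_n \leq \sum_a R_n^\Port(a)$ pathwise. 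The event $\{D_n \geq \eps n\}$ is therefore contained in the deterministic event $\{\sum_a R_n^\Port(a) \geq \eps n\}$, whose probability is the indicator in the bound.

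\emph{Lower deviation (Ville/numeraire term).} We have $-D_n = \log\big(\prod_a W_n(a)/\prod_{i}\optbbet(A_i)^\trans\bE_i(A_i)\big)$. Use that $W_n(a)$ is pathwise dominated by a predictable-portfolio product of the form \eqref{eq:prelim-test supermartingale} on arm $a$ (for $\CO$ this is the comparison $W^\CO\leq \barW^\UP$ in \cref{lem:spruce-test-statistic-eprocess}; for $\UP$ it holds with equality). Then $e^{-D_n} \leq S_n^\Q$, where $S_n^\Q$ is the process of \cref{lemma:allocation-wise numeraire portfolio}. That lemma says $S^\Q$ is a nonnegative $\Q$-supermartingale with $\EE_\Q[S_n^\Q]\leq 1$. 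Markov's inequality (or Ville's) then gives
\begin{equation}
\PP_\Q(-D_n \geq \eps n) \leq \PP_\Q(S_n^\Q \geq e^{\eps n}) \leq e^{-\eps n}.
\end{equation}
Summing the two bounds yields the claim.

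\emph{Main obstacle.} The delicate point is the lower-deviation step. It requires that each $W_n(a)$ be upper bounded by a wealth process whose portfolios are $\Hcal$-predictable, so that the numeraire supermartingale of \cref{lemma:allocation-wise numeraire portfolio} applies. This holds for both \SPRUCE{} options, but for a general $\Pe$-process satisfying \cref{assumption:arm-wise products} it must be imposed as part of the hypotheses (implicitly via \cref{def:comparator-class}). I would state this explicitly and otherwise treat it as given. The remaining bookkeeping, namely the regrouping by arms and the scaling of $\eps$ by $n$, is routine.
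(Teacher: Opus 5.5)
Your proposal is correct and follows essentially the same route as the paper. The paper also bounds the deviation in which $\sum_a \log W_n(a)$ falls short of $\sum_i \ell_{i,\Q}(A_i)$ deterministically via the arm-wise regret bounds, and bounds the opposite deviation by Markov's inequality applied to the supermartingale of \cref{lemma:allocation-wise numeraire portfolio}. The paper's proof likewise works with the $1/n$-normalized event, so your reading of the threshold as $\eps n$ is the intended one. It also justifies dominating $W_n$ by a predictable-portfolio product by appealing to $W \in \Wcal$, which is exactly the implicit hypothesis you flag.
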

\begin{proof}[Proof of \cref{corollary:arm-wise-portfolio-concentration}]
  Fix $\eps > 0$ and first consider the deviation between $\sum_{i=1}^n \ell_{i,\Q}(A_i)$ and $\sum_{a=1}^K \log(W_n(a))$ from below:
  \begin{align}
    &\PP_{\Q} \left (   \frac{1}{n}\sum_{a=1}^K \left ( \sum_{i=1}^n \1 \{ A_i = a \} \ell_{i, \Q}(a) - \logp{W_n(a)}\right ) 
    \geq \epsilon \right ) \\
    &\leq \PP_{\Q} \left ( \frac{1}{n}\sum_{a=1}^K \left ( \max_{\bbet \in \simplex}\sum_{i=1}^n \1 \{ A_i = a \} \log(\bbet^\trans \bE_i(a)) - \logp{W_n(a)}\right ) 
      \geq \epsilon \right ) \\
    &\leq \PP_\Q \left ( \frac{1}{n} \sum_{a=1}^K R_n^\Port(a) \geq \eps \right )\\
    &= \1  \left \{ \sum_{a=1}^K R^\Port_n(a) \geq \eps n \right  \},
  \end{align}
  where the first inequality follows from the fact that $\sum_{i=1}^n \1 \{ A_i = a \}\ell_{i,\Q}(a) \leq \max_{\bbet \in \simplex} \sum_{i=1}^n \1 \{A_i = a\}\log(\bbet^\trans \bE_i(a))$ by definition, the second follows from the portfolio regret bound, and the final equality follows from rearranging terms in the previous probability and noting that all of those terms are deterministic.

  Considering now the same deviation but from above, we have
  \begin{align}
    &\PP_{\Q} \left ( \frac{1}{n}\sum_{a=1}^K\left (\log ( W_n(a) ) - \sum_{i=1}^n \1 \{ A_i = a \} \ell_{i, \Q}(a) \right ) 
    \geq \epsilon \right ) \\
    &\ \leq \PP_{\Q} \left ( \sum_{i=1}^n \left (\log ( \bbet_i(A_i)^\trans \bE_i(A_i) ) - \ell_{i, \Q}(A_i) \right ) 
      \geq \eps n \right ) \\
    &\ = \PP_\Q \left ( \prod_{i=1}^n \frac{\bbet_i(A_i)^\trans \bE_i(A_i)}{\optbbet(A_i)^\trans \bE_i(A_i)} \geq \exp \{ \eps n \} \right ) \\
    &\ \leq \exp \{ -n\eps \} \EE_\Q \left [ \prod_{i=1}^n \frac{\bbet_i(A_i)^\trans \bE_i(A_i)}{\optbbet(A_i)^\trans \bE_i(A_i)} \right ]\\
      &\ \leq \exp \left \{ -n\eps \right \},
  \end{align}
  where the first inequality follows from the fact that $W \in \Wcal$, the second inequality uses Markov's inequality, and the final inequality
  is a consequence of \cref{lemma:allocation-wise numeraire portfolio}.
  It follows that
  \begin{equation}
    \PP_\Q \left ( \left \lvert \sum_{a=1}^K \left ( \sum_{i=1}^n \1\{ A_i = a \} \ell_{i,\Q}(a) - \log(W_n(a))\right ) \right \rvert \geq \eps \right ) \leq \1  \left \{ \sum_{a=1}^K R^\Port_n(a) \geq \eps n \right  \} + \exp \left \{ -n\eps \right \}.
  \end{equation}
  completing the proof.
\end{proof}

\begin{remark}[On the connection to the numeraire
    portfolios~\citep{long1990numeraire,karatzas2007numeraire}]
    \cref{lemma:allocation-wise numeraire portfolio} can be interpreted as an extension of the classical ``numeraire portfolio'' setup of
    \citep{long1990numeraire,karatzas2007numeraire} to a
    setting in which an investor must select which stock market to invest in on each
    day (i.e., arm to pull). In this case, the numeraire
    portfolio is defined with respect to the \emph{path} of stock markets that have
    been selected (i.e., arms pulled), meaning that the comparison at each time
    step is with respect to the stock market that was selected. In this fictional extension, the investor does not see the results of those stock markets in which they did not invest.
\end{remark}

\subsection{Properties of sub-exponential random variables}

\begin{lemma}[Bounds on the $p^\tth$ moments of sub-exponential random variables]
\label{lemma:bound-on-pth-moment-of-log-wealth-increments}
Fix $p \in \NN$. Let $X$ be a mean-zero random variable with distribution $\Q$ and suppose that there exists $b > 0$ so that 
\begin{equation}
  \EE_\Q \left [\exp \left \{ |X| \right \} \right ] \leq b.
\end{equation}
Then the $p^\tth$ moment of $X$ is bounded by $b p!$, i.e.
\begin{equation}
  \EE_\Q \left [ |X|^p \right ] < bp!.
\end{equation}
\end{lemma}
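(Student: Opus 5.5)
The plan is to compare $|X|^p$ pointwise with $e^{|X|}$ using the Taylor expansion of the exponential, and then take expectations. For any $x \geq 0$ we have
\begin{equation}
  e^{x} = \sum_{k=0}^\infty \frac{x^k}{k!} \geq \frac{x^p}{p!},
\end{equation}
since every term of the series is nonnegative. Applying this with $x = |X|$ gives $|X|^p \leq p!\, e^{|X|}$ pointwise. Taking $\Q$-expectations and invoking the hypothesis $\EE_\Q[\exp\{|X|\}] \leq b$ then yields $\EE_\Q[|X|^p] \leq b\, p!$.

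To upgrade this to the strict inequality in the statement, I will use the fact that the pointwise bound is itself strict. Indeed, $e^{x} - x^p/p! \geq 1 > 0$ for all $x \geq 0$, because the $k=0$ term of the series equals one and the $k = p$ term has been removed, using $p \geq 1$. Hence
\begin{equation}
  p!\, e^{|X|} - |X|^p \geq p! > 0
\end{equation}
almost surely. Integrating gives $\EE_\Q[|X|^p] \leq p!\,\EE_\Q[e^{|X|}] - p! \leq p!(b-1) < b\, p!$.

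For this argument to be meaningful, the expectations must be finite. This follows from $\EE_\Q[e^{|X|}] \leq b < \infty$ together with the same pointwise domination, and no monotone convergence subtleties arise.

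No step here is really an obstacle; the only point needing care is this strictness, and the $k=0$ term handles it. The mean-zero assumption is not needed for this moment bound. It presumably matters only downstream, in \cref{lemma:range of subgaussianity}, where these moment bounds are fed into a series expansion of the moment generating function.

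One remark on how the lemma will be used. In applications, the hypothesis $\EE_\Q[e^{|X|}] \leq b$ will come from \cref{lemma:sub-exponential-log-wealth-increments} via $e^{|X|} \leq e^{X} + e^{-X}$. That route gives a constant $2b$ rather than $b$, but this is a matter for the caller, not for the present lemma.
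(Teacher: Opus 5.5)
Your proof is correct, and it takes a different route from the paper. The paper integrates tails. It writes $\EE_\Q[|X|^p] = \int_0^\infty \PP_\Q(|X| > x^{1/p})\,\dd x$, bounds the integrand by $b e^{-x^{1/p}}$ using Markov's inequality on $e^{|X|}$, and evaluates the integral as $b\,p\,\Gamma(p) = b\,p!$ after substituting $z = x^{1/p}$.

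You instead use the pointwise bound $|X|^p \le p!\,(e^{|X|} - 1)$, read off from one term of the exponential series. This is more elementary: there is no layer-cake formula and no Gamma function. It also gives the sharper constant $(b-1)p!$, so it proves the strict inequality in the statement. The paper's proof only concludes $\EE_\Q[|X|^p] \le b\,p!$; strictness could be recovered there by noting $\PP_\Q(\cdot) \le \min\{1, b e^{-x^{1/p}}\}$, but the paper does not argue it.

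The paper's route has one advantage. It uses the hypothesis only through the tail bound $\PP_\Q(|X| > t) \le b e^{-t}$. That is strictly weaker than $\EE_\Q[e^{|X|}] \le b$, since it does not force $\EE_\Q[e^{|X|}] < \infty$. It is also the natural starting point when one has only one-sided bounds $\EE_\Q[e^{\pm X}] \le b$. In that case summing the two tails gives $2b e^{-t}$ and hence $2b\,p!$, the same factor of $2$ you flag in your remark about how the lemma is invoked downstream. Your observation that the mean-zero hypothesis is never used applies equally to the paper's proof.
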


\begin{proof}[Proof of \cref{lemma:bound-on-pth-moment-of-log-wealth-increments}]
  We begin by writing out the $p^\tth$ moment as
    \begin{align}
      \EE_\Q \left [ |X|^p \right ] &= \int_0^\infty \PP_\Q \left ( |X|^p > x \right )\dd x\\
                                   &= \int_0^\infty \PP_\Q \left ( |X| > x^{1/p} \right )\dd x \\
                                   &= \int_0^\infty \PP_\Q \left ( \exp \{ |X| \} > \exp \{ x^{1/p} \} \right )\dd x \\
      &\leq b \int_0^\infty  \exp \{ -x^{1/p} \} \dd x,\label{eq:proof-bdd-moments-integral-x}
    \end{align}
    where in the last two lines we employed a Chernoff bound.
    Letting $z = x^{1/p}$ and through a change of variables we have that
    \begin{align}
        \eqref{eq:proof-bdd-moments-integral-x} &= b \int_{0}^\infty \exps{- z} p z^{p-1} \dd z.
        \label{eq:proof-bounded-moments-intermediate-integral-2}
    \end{align}
    Recognizing that the gamma function $\Gamma(q); q \geq 1$ is given by $\int_{0}^\infty \exp\{-z\} z^{p-1} \dd z$, we have
    \begin{equation}
      \eqref{eq:proof-bdd-moments-integral-x} = bp \Gamma(p) = bp!.
    \end{equation}
    Therefore, we have the desired result: $\EE_\Q[|X|^p] \leq bp!$ which completes the proof.
\end{proof}

\begin{lemma}[Factorial power moments imply sub-exponential tails]\label{lemma:range of subgaussianity}
  For a mean-zero random variable $X$ with distribution $\Q$, suppose that there
  exists $b > 0$ so that for every $p \in \NN$, we have $\EE_\Q[|X|^p] \leq
  bp!$. Then $X$ is $(2 \sqrt{b}, 2)$-sub-exponential, i.e.
  \begin{equation}
    \forall \theta \in [-1/2, 1/2], \quad \EE_\Q \left [ \exp \left \{ \theta X \right \} \right ] \leq \exp \left \{ 2 b \theta^2 \right \}.
  \end{equation}
\end{lemma}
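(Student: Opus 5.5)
The plan is to expand the moment generating function as a power series and bound it term by term using the factorial moment hypothesis. Fix $\theta \in [-1/2, 1/2]$. Since $\EE_\Q[|X|^p] \leq bp!$ for every $p$, we have $\sum_{p} |\theta|^p \EE_\Q[|X|^p]/p! \leq b\sum_p |\theta|^p < \infty$ for $|\theta| \leq 1/2$. Hence Fubini--Tonelli (or dominated convergence with dominating function $e^{|\theta X|}$) justifies interchanging expectation and sum:
\begin{equation}
  \EE_\Q\left[\exp\{\theta X\}\right] = 1 + \theta \EE_\Q[X] + \sum_{p=2}^\infty \frac{\theta^p \EE_\Q[X^p]}{p!}.
\end{equation}
The linear term vanishes because $X$ is mean zero. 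This is exactly why the constant term is $1$ rather than something of order $b|\theta|$, and it is what makes a quadratic bound possible.

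Next we bound the tail of the series. Using $|\EE_\Q[X^p]| \leq \EE_\Q[|X|^p] \leq b p!$, the remainder is at most
\begin{equation}
  b\sum_{p=2}^\infty |\theta|^p = \frac{b\theta^2}{1-|\theta|} \leq 2b\theta^2,
\end{equation}
where the last inequality uses $|\theta| \leq 1/2$. This restriction is precisely what produces the factor $2$ and the range $[-1/2,1/2]$ in the statement. We then conclude with $1 + x \leq e^x$ applied at $x = 2b\theta^2$, which gives $\EE_\Q[\exp\{\theta X\}] \leq \exp\{2b\theta^2\}$. This is the claimed $(2\sqrt b, 2)$-sub-exponential bound: variance proxy $\nu^2 = 4b$, so $\nu^2\theta^2/2 = 2b\theta^2$, valid for $|\theta| \leq 1/\alpha$ with $\alpha = 2$.

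No step is a serious obstacle. The only point requiring care is justifying the exchange of expectation and infinite sum, and the absolute summability above handles it. Note that the hypothesis is assumed for every $p \in \NN$, including $p=1$ (though only $p \geq 2$ is used). The even and odd powers need no separate treatment, since the absolute bound covers both signs of $\theta$ at once.
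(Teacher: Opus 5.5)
Your proof is correct and follows the paper's argument essentially verbatim. Both Taylor-expand $\EE_\Q[\exp\{\theta X\}]$, drop the linear term using $\EE_\Q[X]=0$, bound the remainder by the geometric series $b\theta^2/(1-|\theta|)\le 2b\theta^2$, and finish with $1+x\le e^x$. Your version is slightly tidier than the paper's, since you justify interchanging expectation and sum and handle both signs of $\theta$ (including the endpoints $\pm 1/2$) at once, rather than treating $[0,1/2)$ and saying the negative case ``proceeds analogously.''
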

\begin{proof}
  First let us consider the case where $\theta \in [0, 1/2)$. We begin by Taylor expanding $x \mapsto \exp \left \{ \theta x \right \}$ around $x= 0$ and evaluating the expectation $\EE_\Q[\exp \{\theta X\}]$:
  \begin{align}
    \EE_\Q \left [ \exp \left \{ \theta X \right \} \right ] &= \EE_\Q \left [ 1 + \theta X + \sum_{p=2}^\infty \frac{(\theta X)^p}{p!} \right ]\\
                                                            &= \EE_\Q \left [ 1 + \sum_{p=2}^\infty \frac{(\theta X)^p}{p!} \right ] \\
                                                            &\leq 1 + \sum_{p=2}^\infty \frac{\theta^p \EE_\Q \left [ |X|^p \right ] }{p!} \\
    &\leq 1 + b\sum_{p=2}^\infty \theta^p.\label{eq:proof end of taylor}
  \end{align}
  Using the fact that $\theta \in [0,1/2)$ and analyzing the geometric series, we have
  \begin{equation}
    \eqref{eq:proof end of taylor} = 1 + b \theta^2 \sum_{j=0}^\infty \theta^j = 1 + \frac{b \theta^2}{1-\theta}.
  \end{equation}
  Again, using the fact that $\theta \in [0, 1/2)$, we have the upper bound
  \begin{equation}
    1 + \frac{b\theta^2}{1-\theta} \leq 1 + 2b\theta^2 \leq \exp \left \{ 2b\theta^2 \right \}.
  \end{equation}
  Taking the previous arguments and combining them, we have
  \begin{equation}
    \EE_\Q  \left [ \exp \left \{ \theta X \right \} \right ] \leq \exp \left \{ 2b\theta^2 \right \}.
  \end{equation}
  The proof for the case where $\theta \in [-1/2,0)$ proceeds analogously.
\end{proof}

\begin{lemma}[Concentration inequalities for random variables with bounded MGFs]
\label{lem:concentration-sub-exponential-rvs}
    Let $X_1, \dots, X_n$ be i.i.d random variables with distribution $\Q$ and
    mean $\mu = \EE_\Q[X_1]$. Suppose that for some $b > 0$, it holds that
    \begin{equation}
        \forall \theta \in [-1, 1],\quad \E_{\Q}\Brac{\exps{\theta (X_1 - \mu)}}
        \leq b \mper
    \end{equation}
    Then for any $\eps > 0$,
    \begin{equation}\label{eq:concentration probability to bound}
        \PP_{\Q}\Paren{\frac{1}{n} \sum_{i=1}^n \Paren{X_i - \mu} \geq \epsilon}
        \leq \exps{-\frac{\epsilon^2 n}{8b}} + \exps{-\frac{\epsilon n}{4}} \mper
    \end{equation}
\end{lemma}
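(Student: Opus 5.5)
The plan is to combine the two preceding lemmas with a Chernoff bound, handling the deviation in two regimes. Set $Y_i \coloneqq X_i - \mu$, which are i.i.d.\ and mean zero. First I upgrade the two-sided MGF hypothesis to an absolute exponential moment. Since $e^{|y|} \leq e^{y} + e^{-y}$, taking $\theta = \pm 1$ gives $\EE_\Q[\exp\{|Y_1|\}] \leq 2b$. Then \cref{lemma:bound-on-pth-moment-of-log-wealth-increments} yields $\EE_\Q[|Y_1|^p] \leq 2b\, p!$ for every $p \in \NN$. Feeding this into \cref{lemma:range of subgaussianity} with the constant $2b$ would give $\EE_\Q[\exp\{\theta Y_1\}] \leq \exp\{4b\theta^2\}$ for $|\theta| \leq 1/2$.

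That constant is off by a factor of two from what the target rate needs. So instead I will redo the Taylor step of \cref{lemma:range of subgaussianity} directly from the hypothesis, to get $\exp\{2b\theta^2\}$. A route I expect to work uses Cauchy's estimates for the analytic function $\theta \mapsto \EE_\Q[e^{\theta Y_1}]$, which is bounded by $b$ on $[-1,1]$. Alternatively, bound $\EE_\Q[|Y_1|^p]/p! \leq \EE_\Q[e^{|Y_1|}]$ termwise, separating even and odd terms so that only $\cosh$-type quantities appear, each bounded by $b$. Either way the target is
\begin{equation}
\EE_\Q[\exp\{\theta Y_1\}] \leq \exp\{2b\theta^2\}, \qquad |\theta| \leq 1/2.
\end{equation}
This constant bookkeeping is the main obstacle. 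If it fails, I will accept $4b$ and track how the exponent $\eps^2 n/(8b)$ degrades.

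Given that MGF bound, the Chernoff method and independence give, for $\theta \in (0, 1/2]$,
\begin{equation}
\PP_\Q\Big(\sum_{i=1}^n Y_i \geq n\eps\Big) \leq \exp\{-n(\theta\eps - 2b\theta^2)\}.
\end{equation}
The unconstrained minimizer is $\theta^\star = \eps/(4b)$.

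I then split into two cases.
\begin{itemize}
\item If $\eps \leq 2b$, then $\theta^\star \leq 1/2$ is admissible. The exponent becomes $\eps^2/(8b)$, which gives the first term.
\item If $\eps > 2b$, take $\theta = 1/2$. Since $2b < \eps$, the exponent satisfies $\theta\eps - 2b\theta^2 = \eps/2 - b/2 \geq \eps/4$, which gives the second term.
\end{itemize}
Bounding the maximum of the two cases by their sum yields \eqref{eq:concentration probability to bound}. The lower tail follows by applying the same argument to $-Y_i$, since the hypothesis is symmetric in $\theta$.
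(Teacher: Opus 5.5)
Your proposal is correct, and its skeleton is the paper's proof: Chernoff plus independence, the choice $\theta=\min\{\eps/(4b),1/2\}$, and the same split into $\eps\le 2b$ (exponent $\eps^2/(8b)$) and $\eps>2b$ (exponent $\eps/2-b/2\ge\eps/4$).

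The one place you differ is the MGF step, and there you are more careful than the paper. The paper obtains $\mathbb{E}[e^{\theta(X_1-\mu)}]\le e^{2b\theta^2}$ by citing \cref{lemma:bound-on-pth-moment-of-log-wealth-increments,lemma:range of subgaussianity}. The first of these assumes $\mathbb{E}[e^{|X_1-\mu|}]\le b$, but the two-sided hypothesis only gives $2b$. For example, if $Y=\pm c$ with equal probabilities, one can take $b=\cosh c$, yet $\mathbb{E}[e^{|Y|}]=e^{c}$. Read literally, the paper's chain therefore gives only $e^{4b\theta^2}$, exactly as you noticed.

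Your even/odd route repairs this, so the hedge is unnecessary. Set $Y=X_1-\mu$. For even $p$, $|Y|^p/p!\le\cosh Y$ pointwise. For odd $p$, $|Y|^p/p!\le\sinh|Y|\le\cosh Y$ pointwise. Since $\mathbb{E}[\cosh Y]\le b$, this gives $\mathbb{E}|Y|^p\le b\,p!$ for every $p\in\mathbb{N}$.

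That is precisely the conclusion of the paper's moment lemma, now derived from the hypothesis actually available. After this, \cref{lemma:range of subgaussianity} applies verbatim and yields $e^{2b\theta^2}$ on $|\theta|\le 1/2$. Your Cauchy-estimate route also works: it gives the coefficient bound $|\mathbb{E}Y^p|/p!\le b$, which suffices for the same Taylor sum.

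So your argument justifies the stated constants, whereas the paper's written proof justifies them only up to this factor of two.
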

\begin{proof}[Proof of \cref{lem:concentration-sub-exponential-rvs}]
    Let $\theta = \min\{ \eps / (4b), 1/2 \}$.
    Writing out the probability in \eqref{eq:concentration probability to bound} and applying a Chernoff bound, we have
    \begin{align}
        \PP_{\Q}\Paren{\frac{1}{n} \sum_{i=1}^n \Paren{X_i - \mu} \geq \epsilon}
        &= \PP_{\Q}\Paren{\exps{\theta \sum_{i=1}^n \Paren{X_i - \mu}} 
        \geq \exps{\theta n \epsilon}} \\
        &\leq \exps{- \theta n \epsilon} \E_{\Q}\Brac{\prod_{i=1}^n \exps{\theta
        \Paren{X_i - \mu}}} 
        \label{eq:proof-exp-concentration-bound-chernoff-bound}\\
        &= \exps{-\theta n \epsilon} \prod_{i=1}^n \E_{\Q}\Brac{\exps{\theta
        \Paren{X_i - \mu}}} 
        \label{eq:proof-exp-concentration-bound-rvs-independence}\\
        &\leq \exps{- \theta n \epsilon} \prod_{i=1}^n \exps{2b\theta^2} 
        \label{eq:proof-exp-concentration-bound-mgf-bound}\\
        &= \exps{-\theta n \epsilon + 2nb\theta^2} \mcom
    \end{align}
    where \eqref{eq:proof-exp-concentration-bound-chernoff-bound} follows from
    Markov's inequality, \eqref{eq:proof-exp-concentration-bound-rvs-independence}
    follows from the independence of $X_1, \dots, X_n$, and
    \eqref{eq:proof-exp-concentration-bound-mgf-bound} follows from
    \cref{lemma:bound-on-pth-moment-of-log-wealth-increments,lemma:range of subgaussianity}.
    Recalling that $\theta \coloneqq \min\{ \eps / (4b), 1/2 \}$,
    we have that the bound takes the values
    \begin{align}
        \exps{-\theta n \epsilon + 2nb \theta^2} = \begin{cases}
            \exps{-\epsilon^2 n/(8b)} \quad&\text{if}~ \frac{\epsilon}{4b}
                \leq 1/2 \\
            \exps{-\epsilon n/4} \quad&\text{if}~ \frac{\epsilon}{4b} >
            1/2 \mper
        \end{cases}
    \end{align}
    Consequently, we have that for any $n \in \N$ and any $\epsilon > 0$,
    \begin{equation}
        \PP_{\Q}\Paren{\frac{1}{n}\sum_{i=1}^n \Paren{X_i - \mu} \geq \epsilon}
        \leq \exps{-\frac{\epsilon^2 n}{8b}} + \exps{-\frac{\epsilon n}{4}}
        \mcom
    \end{equation}
    completing the proof.
\end{proof}

\begin{lemma}[Root-$n$ confidence intervals under finite MGFs]\label{lemma:root-n confidence intervals of sub-exp}
  Let $X_1, \dots, X_n$ be \iidtext{} random variables with distribution $Q$ and mean $\mu$. Suppose that
  \begin{equation}
    \forall \theta \in [-1, 1], \quad\EE_\Q[ \exp \{ \theta (X_1 - \mu) \} ] \leq b.
  \end{equation}
  Then for any $\theta \in [0, 1/2]$,
  \begin{equation}\label{eq:proof subgaussianity}
   \quad\EE_\Q \left [ \exp \left \{ \theta (X_1 - \mu) \right \} \right ] \leq \exp \{ 2b\theta^2 \} .
  \end{equation}
  Consequently, we have that for any $\alpha \in (0, 1)$,
  \begin{equation}
    \PP_\Q \left ( \frac{1}{n} \sum_{i=1}^n (X_i - \mu) \geq \sqrt{\frac{8b \log (1/\alpha)}{n}} + \frac{4\log(1/\alpha)}{n} \right ) \leq \alpha.
  \end{equation}
\end{lemma}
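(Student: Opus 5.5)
The plan has two parts: first derive the sub-Gaussian bound \eqref{eq:proof subgaussianity} on the restricted range $\theta \in [0,1/2]$, then apply the Chernoff method with a two-regime choice of $\theta$ to get the confidence bound.

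For the first part, set $Y \coloneqq X_1 - \mu$, which has mean zero. Since $\exp\{|Y|\} \leq \exp\{Y\} + \exp\{-Y\}$, applying the hypothesis at $\theta = \pm 1$ gives $\EE_\Q[\exp\{|Y|\}] \leq 2b$. Invoking \cref{lemma:bound-on-pth-moment-of-log-wealth-increments} then bounds the moments by $2b\,p!$. Feeding this directly into \cref{lemma:range of subgaussianity} would yield only $\exp\{4b\theta^2\}$, so I will redo the Taylor expansion myself to recover the constant $2b$.

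The expansion gives $\EE_\Q[e^{\theta Y}] \leq 1 + \sum_{p \geq 2} \theta^p \EE_\Q[|Y|^p]/p!$. To avoid the factor $2$, I will bound the moments separately:
\begin{equation}
  \EE_\Q[|Y|^p] = \EE_\Q[(Y^+)^p] + \EE_\Q[(Y^-)^p] \leq p!\,\EE_\Q[e^{Y^+} - 1] + p!\,\EE_\Q[e^{Y^-} - 1].
\end{equation}
Since $e^{Y^+} + e^{Y^-} - 2 = e^{|Y|} - 1$, this yields $\EE_\Q[|Y|^p] \leq p!\,(\EE_\Q[e^{|Y|}] - 1)$. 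Because $b > 1$ throughout the paper, I expect to sharpen this to roughly $p!\,b$.

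If this sharpening fails, the fallback is to use $\EE_\Q[e^{|Y|}] \leq 2b$ and absorb the loss. The target constant is then reached through the geometric series: $1 + b\theta^2/(1-\theta) \leq 1 + 2b\theta^2 \leq \exp\{2b\theta^2\}$ for $\theta \leq 1/2$. This constant bookkeeping is the step I expect to be the main obstacle. The rest of the argument only needs some bound of the form $\exp\{c\,b\,\theta^2\}$ on $[0,1/2]$, and the constants $8$ and $4$ in the conclusion should follow provided $c = 2$.

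For the second part, fix $\alpha$ and let $L = \log(1/\alpha)$. Markov's inequality and independence give
\begin{equation}
  \PP_\Q\Big(\sum_{i=1}^n (X_i - \mu) \geq n t\Big) \leq \exp\{-n\theta t + 2nb\theta^2\}, \qquad \theta \in [0,1/2].
\end{equation}
I will choose $\theta = \min\{\sqrt{L/(2bn)},\, 1/2\}$ and take $t = \sqrt{8bL/n} + 4L/n$, then split into two cases.

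\textbf{Case 1:} $\sqrt{L/(2bn)} \leq 1/2$. Using $t \geq \sqrt{8bL/n}$, the exponent satisfies $-n\theta t + 2nb\theta^2 \leq -2L + L = -L$.

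\textbf{Case 2:} $\theta = 1/2$, so $L/(2bn) > 1/4$, i.e.\ $2bn < 4L$. Using $t \geq 4L/n$, the exponent satisfies $-nt/2 + bn/2 \leq -2L + L = -L$.

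In both cases the probability is at most $e^{-L} = \alpha$, which completes the plan.
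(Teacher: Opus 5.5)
Your Chernoff step is correct and matches the paper's argument, including the same two-regime choice $\theta=\min\{\sqrt{L/(2bn)},1/2\}$. The gap is in the first claim, $\EE_\Q[e^{\theta Y}]\le e^{2b\theta^2}$. You rightly observe that the hypothesis only gives $\EE_\Q[e^{|Y|}]\le 2b$. So chaining \cref{lemma:bound-on-pth-moment-of-log-wealth-increments} and \cref{lemma:range of subgaussianity} loses a factor of two. That chaining is all the paper's proof does, implicitly reading the hypothesis as $\EE_\Q[e^{|Y|}]\le b$. You have found a real looseness, but neither of your remedies repairs it.

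The ``sharpening'' cannot go through $\EE_\Q[e^{|Y|}]$. Take $Y=\pm c$ with probability $1/2$ each: the hypothesis holds with $b=\cosh c$, yet $\EE_\Q[e^{|Y|}]-1=e^{c}-1>\cosh c$ once $c>\log(1+\sqrt{2})$, and $b>1$ does not help. The fallback contains an arithmetic slip. Moments bounded by $2b\,p!$ give $1+2b\theta^2/(1-\theta)\le 1+4b\theta^2$, not $1+b\theta^2/(1-\theta)$. That is exactly the $e^{4b\theta^2}$ you had already rejected. With $e^{4b\theta^2}$, your Case~1 exponent is only $-2L+2L=0$, and re-optimizing $\theta$ gives width $\sqrt{16bL/n}$ instead of $\sqrt{8bL/n}$. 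So the stated inequality does not follow from your plan as written.

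The missing idea is to compare $|y|^p$ with $\cosh y$ rather than with $e^{|y|}$, which builds in the factor $1/2$. For every $p\in\NN$ and $y\in\RR$, one has $|y|^p\le p!\cosh y$:
for even $p$, $y^p/p!$ is itself a term of the series for $\cosh y$;
for odd $p=2k+1$, AM--GM gives $y^{2k}/(2k)!+y^{2k+2}/(2k+2)!\ge 2|y|^{2k+1}/\sqrt{(2k)!\,(2k+2)!}\ge |y|^{2k+1}/(2k+1)!$, because $(2k)!\,(2k+2)!=((2k+1)!)^2\,\tfrac{2k+2}{2k+1}\le 4((2k+1)!)^2$.

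Hence $\EE_\Q[|Y|^p]\le p!\,\EE_\Q[\cosh Y]=p!\,\tfrac12\big(\EE_\Q[e^{Y}]+\EE_\Q[e^{-Y}]\big)\le b\,p!$. Then \cref{lemma:range of subgaussianity} applies verbatim and yields $e^{2b\theta^2}$ on $[0,1/2]$. A more direct alternative: for $\theta\in[0,1]$ one has pointwise $e^{\theta y}-1-\theta y\le\theta^2(e^{y}+e^{-y}-2)$, so $\EE_\Q[e^{\theta Y}]\le 1+2(b-1)\theta^2$. With this step supplied, the rest of your plan goes through unchanged.
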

\begin{proof}
  \cref{lemma:bound-on-pth-moment-of-log-wealth-increments,lemma:range of subgaussianity}
  imply that for any $\eps > 0$ and any $\theta \in [0, 1/2]$,
  \begin{align}
    \PP_\Q \left ( \frac{1}{n} \sum_{i=1}^n (X_i - \mu)  \geq \eps \right ) &= \PP_\Q \left ( \exp \left \{ \theta \sum_{i=1}^n (X_i - \mu) \right \} \geq \exp \{n\theta \eps \} \right )  \\
                                                                &\leq \exp \left \{ -n\theta \eps \right \}  \EE_\Q \left [\prod_{i=1}^n \exp \left \{ \theta (X_i - \mu) \right \} \right ] \\
                                                                &= \exp \left \{ -n\theta \eps \right \} \prod_{i=1}^n \EE_\Q \left [ \exp \left \{ \theta (X_i - \mu) \right \} \right ] \\
    &\leq \exp \{-n\theta \eps  + 2nb\theta^2\} ,
  \end{align}
  where the first two lines uses a Chernoff bound, the third line uses independence of $X_1, \dots, X_n$, and the last line uses \eqref{eq:proof subgaussianity}. Setting the right-hand side to $\alpha$, we have that for any $\theta \in (0, 1/2]$,
  \begin{equation}
    \PP_\Q \left ( \frac{1}{n} \sum_{i=1}^n (X_i - \mu) \geq \frac{\log(1/\alpha) + 2nb\theta^2}{n \theta} \right ) \leq \alpha.
  \end{equation}
  Setting $\theta := \sqrt{ \log (1/\alpha) / (2nb) } \land 1/2$, we have that the margin takes the values
  \begin{equation}
    \frac{\log(1/\alpha) + 2nb\theta^2}{n\theta} = \begin{cases}
      \sqrt{8b\log(1/\alpha)/n} & \text{if}~\sqrt{\log (1/\alpha) / (2nb)} \leq 1/2\\
      4 \log(1/\alpha) / n & \text{if}~\sqrt{\log (1/\alpha) / (2nb)} > 1/2.
    \end{cases}
  \end{equation}
  Consequently, it holds that for any $n \in \NN$ and any $\alpha \in (0,1)$,
  \begin{equation}
    \PP_\Q \left ( \frac{1}{n} \sum_{i=1}^n (X_i - \mu) \geq \sqrt{\frac{8b\log(1/\alpha)}{n}} + \frac{4 \log (1/\alpha)}{n} \right ) \leq \alpha,
  \end{equation}
  completing the proof.
\end{proof}

\subsection{Single-arm concentration inequalities}

\begin{lemma}[Sub-exponentiality of log-increments under numeraire portfolios]
\label{lemma:numeraire portfolio log wealths are sub exponential}
   Fix $\Qin$. Let $\infseqn{\bE_n} \equiv \infseqn{\bE_n(1)}$ be
   $(d+1)$-vectors of $\Pe$-values satisfying
   \cref{assumption:class of eprocesses} with the constant $b > 1$. Then $\log\Paren{\optbbet^\trans
   \bE_1}$ is $(2\sqrt{b}, 2)$-sub-exponential, meaning that
  \begin{equation}
    \forall \theta \in [-1/2, 1/2], \quad \EE_\Q \left [ \exp \left \{ \theta \Paren{\log\paren{\optbbet^\trans \bE_1} - \EE_\Q[\log\paren{\optbbet^\trans \bE_1}]} \right \} \right ] \leq \exp \left \{ 2 b \theta^2 \right \}.
  \end{equation}
\end{lemma}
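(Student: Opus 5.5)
This lemma follows by chaining three results already established above. Write $X \coloneqq \log(\optbbet^\trans \bE_1) - \EE_\Q[\log(\optbbet^\trans \bE_1)]$, which is mean zero. This requires $\EE_\Q[\log(\optbbet^\trans \bE_1)]$ to be finite; that holds because $\log(\optbbet^\trans \bE_1) \le \log b$ almost surely, and the optimal value is at least $\EE_\Q[\log(\tilde\bbet^\trans \bE_1)] = 0$ by \eqref{eq:assump-wealth-increment-equal-to-one}.

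The plan has three steps.

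\textbf{Step 1 (bounded MGF).} By \cref{lemma:sub-exponential-log-wealth-increments}, $\EE_\Q[\exp\{\theta X\}] \le b$ for every $\theta \in [-1,1]$.

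\textbf{Step 2 (absolute exponential moment).} \cref{lemma:bound-on-pth-moment-of-log-wealth-increments} needs a bound on $\EE_\Q[\exp\{|X|\}]$. Using $e^{|x|} \le e^{x} + e^{-x}$ and Step 1 at $\theta = \pm 1$ gives $\EE_\Q[e^{|X|}] \le 2b$. That lemma then yields $\EE_\Q[|X|^p] \le 2b\,p!$, and \cref{lemma:range of subgaussianity} would only give the constant $4b$ instead of $2b$.

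To keep the constant $2b$, I would bypass the absolute value. In the proof of \cref{lemma:bound-on-pth-moment-of-log-wealth-increments}, the Chernoff step can be applied to each tail separately: $\PP_\Q(X > x^{1/p}) \le b\, e^{-x^{1/p}}$ and $\PP_\Q(-X > x^{1/p}) \le b\, e^{-x^{1/p}}$.

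Better still, go directly to the Taylor series. For $\theta \in [0,1/2]$,
\[
\EE_\Q[e^{\theta X}] = 1 + \sum_{p \ge 2} \frac{\theta^p\, \EE_\Q[X^p]}{p!},
\]
and the needed moment control comes from the one-sided bound $\EE_\Q[X_+^p] \le b\,p!$ together with an analogous estimate for $X_-$. This is exactly how the paper's own proof of \cref{lemma:root-n confidence intervals of sub-exp} invokes \cref{lemma:bound-on-pth-moment-of-log-wealth-increments,lemma:range of subgaussianity}. So the intended route is to accept $\EE_\Q[|X|^p] \le b\,p!$ as following from Step 1, with the one-sided tail bounds as justification.

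\textbf{Step 3 (sub-exponential bound).} Apply \cref{lemma:range of subgaussianity} with the moment bound from Step 2 to get $\EE_\Q[e^{\theta X}] \le \exp\{2b\theta^2\}$ for $|\theta| \le 1/2$. Its geometric-series argument needs $\theta < 1$, and $1/(1-\theta) \le 2$ holds at $\theta = 1/2$, so the endpoint is included.

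The main obstacle is the constant in Step 2. Deriving $\EE_\Q[|X|^p] \le b\,p!$ cleanly from a two-sided MGF bound of $b$ (rather than $2b$) is the delicate point. If it cannot be made rigorous, I would fall back to the constant $2b$ in the moment bound, which yields the slightly weaker conclusion $\exp\{4b\theta^2\}$. Everything else is direct substitution.
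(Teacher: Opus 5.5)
\paragraph{Gap in Step 2.} Your Steps 1 and 3 match the paper, whose proof is exactly the chain \cref{lemma:sub-exponential-log-wealth-increments} $\Rightarrow$ \cref{lemma:bound-on-pth-moment-of-log-wealth-increments} $\Rightarrow$ \cref{lemma:range of subgaussianity}. You are right that the middle link is not literally justified: \cref{lemma:bound-on-pth-moment-of-log-wealth-increments} assumes $\EE_\Q[e^{|X|}]\le b$, while Step 1 only gives $\EE_\Q[e^{X}]\le b$ and $\EE_\Q[e^{-X}]\le b$. The paper passes over this silently.

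However, your repair does not close the gap. As written, your argument either rests on an unjustified claim or proves only $\exp\{4b\theta^2\}$, which is weaker than the statement. The one-sided tail bounds give $\EE_\Q[X_+^p]\le b\,p!$ and $\EE_\Q[X_-^p]\le b\,p!$. But $|X|^p=X_+^p+X_-^p$, so these add back up to $2b\,p!$. The Taylor route gains nothing either, because its even-order terms are $\EE_\Q[X^p]=\EE_\Q[|X|^p]$. With your estimates the series $\sum_{p\ge 2}\theta^p\EE_\Q[X^p]/p!$ is bounded only by $b\theta^2(2+\theta)/(1-\theta^2)$, which exceeds $2b\theta^2$ for every $\theta\in(0,1/2]$.

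\paragraph{The missing idea.} Use the pointwise inequality $|x|^p/p!\le\cosh x$, valid for all real $x$ and integers $p\ge 0$. For even $p$, the left side is one of the nonnegative terms of $\cosh x=\sum_{k}x^{2k}/(2k)!$. For odd $p$, it is a term of $\sinh|x|$, and $\sinh|x|\le\cosh|x|=\cosh x$. Taking expectations and using Step 1 only at $\theta=\pm1$,
\[
\EE_\Q\bigl[|X|^p\bigr]\le p!\,\EE_\Q[\cosh X]=\frac{p!}{2}\bigl(\EE_\Q[e^{X}]+\EE_\Q[e^{-X}]\bigr)\le b\,p! .
\]
With this moment bound, \cref{lemma:range of subgaussianity} yields $\exp\{2b\theta^2\}$ with the stated constant, including the endpoints $\theta=\pm1/2$ as you note. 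The rest of your argument then goes through unchanged.
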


\begin{proposition}[Confidence intervals for the optimal log-wealth]\label{proposition:root-n logwealth}
  Consider the same setup as \cref{lemma:sub-exponential-log-wealth-increments}. 
  If $\infseqn{\bbet_n}$ is a predictable sequence with portfolio regret $R_n$, then for any $\alpha \in (0, 1)$,
  \begin{equation}
      \PP_\Q \left ( \frac{1}{n} \sum_{i=1}^n \logp{\bbet_i^\trans \bE_i} 
      - \EE_\Q [\logp{\optbbet^\trans \bE_1}] \geq 
      \sqrt{\frac{8b\log(2/\alpha)}{n}} + \frac{5 \log(2/\alpha)}{n} \right ) \leq \alpha
  \end{equation}
  and
  \begin{equation}
      \PP_\Q \left ( \frac{1}{n} \sum_{i=1}^n \logp{\bbet_i^\trans \bE_i} 
          - \EE_\Q [\logp{\optbbet^\trans \bE_1}] \leq 
          - \left [ \sqrt{\frac{8b\log(1/\alpha)}{n}} + \frac{4
    \log(1/\alpha)}{n} + \frac{R_n}{n} \right ] \right ) \leq \alpha
  \end{equation}
\end{proposition}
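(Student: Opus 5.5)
Both inequalities will come from splitting the regret-algorithm average around the numeraire average $\frac1n\sum_{i=1}^n \log(\optbbet^\trans\bE_i)$. The upper deviation uses the numeraire supermartingale; the lower deviation uses the portfolio regret bound. Write
\begin{equation}
  \frac{1}{n}\sum_{i=1}^n \log(\bbet_i^\trans\bE_i) - \EE_\Q[\log(\optbbet^\trans\bE_1)]
  = \frac{1}{n}\sum_{i=1}^n \log\Paren{\frac{\bbet_i^\trans\bE_i}{\optbbet^\trans\bE_i}}
  + \frac{1}{n}\sum_{i=1}^n \Paren{\log(\optbbet^\trans\bE_i) - \EE_\Q[\log(\optbbet^\trans\bE_1)]}.
\end{equation}
The second term is an i.i.d.\ centered average. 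By \cref{lemma:sub-exponential-log-wealth-increments}, its summands satisfy the MGF condition of \cref{lemma:root-n confidence intervals of sub-exp} with constant $b$ on $\theta\in[-1,1]$. That lemma gives one-sided deviations of $\sqrt{8b\log(1/\alpha)/n}+4\log(1/\alpha)/n$ at level $\alpha$ in either direction, applying it to $-X_i$ for the lower tail.

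\textbf{Upper tail.} The first term is handled by \cref{lemma:arm-wise-numeraire-portfolio} with $K=1$ (or \cref{lemma:allocation-wise numeraire portfolio}). The product $\prod_{i\le n}\bbet_i^\trans\bE_i/\optbbet^\trans\bE_i$ is a nonnegative $\Q$-supermartingale with mean at most one. Markov's inequality then gives
\begin{equation}
  \PP_\Q\Paren{\frac1n\sum_{i=1}^n\log\Paren{\frac{\bbet_i^\trans\bE_i}{\optbbet^\trans\bE_i}}\ge \frac{\log(2/\alpha)}{n}}\le \frac{\alpha}{2}.
\end{equation}
Apply \cref{lemma:root-n confidence intervals of sub-exp} at level $\alpha/2$ and take a union bound. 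The total margin is then $\sqrt{8b\log(2/\alpha)/n}+4\log(2/\alpha)/n+\log(2/\alpha)/n$, which is exactly the stated $\sqrt{8b\log(2/\alpha)/n}+5\log(2/\alpha)/n$.

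\textbf{Lower tail.} No randomness is needed for the first term. Since $\optbbet\in\simplex$, the pathwise regret bound gives
\begin{equation}
  \sum_{i=1}^n\log(\bbet_i^\trans\bE_i)\ \ge\ \max_{\bbet\in\simplex}\sum_{i=1}^n\log(\bbet^\trans\bE_i)-R_n\ \ge\ \sum_{i=1}^n\log(\optbbet^\trans\bE_i)-R_n.
\end{equation}
So the first term is at least $-R_n/n$ deterministically. The event in the second display is therefore contained in the event that the centered numeraire average falls below $-[\sqrt{8b\log(1/\alpha)/n}+4\log(1/\alpha)/n]$. That event has probability at most $\alpha$ by the lower-tail version of \cref{lemma:root-n confidence intervals of sub-exp}, so no union bound is needed here, consistent with the $\log(1/\alpha)$ in the statement.

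\textbf{Main obstacle.} The delicate step is confirming that the two-sided MGF bound of \cref{lemma:sub-exponential-log-wealth-increments} really covers $\theta\in[-1,0)$, so that the lower tail of $\log(\optbbet^\trans\bE_1)$ is controlled. This rests on the numeraire property of $\optbbet$ together with $\tilde\bbet^\trans\bE=1$ from \cref{assumption:class of eprocesses}, which give $\EE_\Q[1/\optbbet^\trans\bE_1]\le1$. Taking that lemma as given, everything else is a union bound and bookkeeping of constants. The one check is that \cref{lemma:root-n confidence intervals of sub-exp}, via \cref{lemma:bound-on-pth-moment-of-log-wealth-increments,lemma:range of subgaussianity}, applies symmetrically to $-X_i$: this holds because the hypothesis is on all of $\theta\in[-1,1]$.
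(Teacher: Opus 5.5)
Your proposal is correct and takes the same approach as the paper's proof. The upper tail combines Markov's inequality on the numeraire supermartingale $\prod_{i\le n}\bbet_i^\trans\bE_i/\optbbet^\trans\bE_i$ with the sub-exponential concentration of the centered numeraire average, and the lower tail uses the pathwise regret bound together with $\max_{\bbet\in\simplex}\sum_{i=1}^n\log(\bbet^\trans\bE_i)\ge\sum_{i=1}^n\log(\optbbet^\trans\bE_i)$. The only cosmetic difference is that the paper spends $\alpha$ on each piece and gets $2\alpha$ with $\log(1/\alpha)$, whereas you split at $\alpha/2$ so the bound lands directly in the stated $\log(2/\alpha)$ form.
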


\begin{proof}[Proof of \cref{lemma:numeraire portfolio log wealths are sub exponential}]
    Our aim is to show that 
    \begin{equation}
        \forall \theta \in [-1/2, 1/2], \quad \EE_\Q \left [ \exp \left \{ \theta \Paren{\log\paren{\optbbet^\trans \bE_1} - \EE_\Q[\log\paren{\optbbet^\trans \bE_1}]} \right \} \right ] \leq \exp \left \{ 2 b \theta^2 \right \}.
    \end{equation}
   First, using \cref{lemma:sub-exponential-log-wealth-increments} we have that the log-increments under the optimal (i.e., numeraire) portfolios have finite MGFs:
   \begin{equation}
       \forall \theta \in [-1, 1], \quad \E_{\Q}\Brac{\exps{\theta \Paren{\logp{\optbbet^\trans \bE_1} - \E_{\Q}\Brac{\logp{\optbbet^\trans \bE_1}}}}} \leq b \mper
   \end{equation}
   We now appeal to \cref{lemma:bound-on-pth-moment-of-log-wealth-increments} which gives us the following upper bound on the $p^\tth$ moments of the log-increments:
   \begin{equation}
       \E_{\Q}\Brac{\Abs{\logp{\optbbet^\trans \bE_1} - \E_{\Q}\Brac{\logp{\optbbet^\trans \bE_1}}}^p} \leq bp! \mper
   \end{equation}
   We use this result and appeal to \cref{lemma:range of subgaussianity} to conclude that
   \begin{equation}
       \forall \theta \in [-1/2, 1/2], \quad \EE_\Q \left [ \exp \left \{ \theta \Paren{\log\paren{\optbbet^\trans \bE_1} - \EE_\Q[\log\paren{\optbbet^\trans \bE_1}]} \right \} \right ] \leq \exp \left \{ 2 b \theta^2 \right \} \mper
   \end{equation}
   This completes the proof.
\end{proof}

\begin{proof}[Proof of \cref{proposition:root-n logwealth}]
    We start by appealing
    to the arm-wise numeraire property, 
    \cref{lemma:arm-wise-numeraire-portfolio}, which allows us to show that 
    for any $\eps > 0$, $\PP_\Q ( \sum_{i=1}^n (\ell_{i} - \ell_{i, \Q})
    \geq n\eps ) \leq \exp\{-n\eps\}\EE[W_n / W_n(\lambda_\Q)] \leq
    \exp\{-n\eps\}$. Therefore, it holds that
    \begin{equation}
        \PP_\Q \left ( \frac{1}{n} \sum_{i=1}^n (\ell_i - \ell_{i, \Q}) 
        \geq \frac{\log(1/\alpha)}{n} \right ) \leq \alpha.
    \end{equation}
    Hence we have that for any $\eps > 0$,
    \begin{align}
      \PP_\Q \left ( \frac{1}{n} \sum_{i=1}^n \ell_i -
      \EE_\Q[\ell_{1, \Q}] \geq \eps \right ) 
      &\leq \PP_\Q \left (\frac{1}{n} \sum_{i=1}^n \ell_{i, \Q} - \EE_\Q [\ell_{1, \Q}] +
      \frac{\log(1/\alpha)}{n} \geq \eps \right ) + \alpha.
      \label{eq:ci proof upper concentration}
    \end{align}
    Therefore, letting $\eps \coloneqq \sqrt{8 b \log(1/\alpha) / n} + 5 \log(1/\alpha) / n $, we have
    \begin{align}
        &\PP_\Q \left ( \frac{1}{n} \sum_{i=1}^n \ell_i -
        \EE_\Q[\ell_{1, \Q}] \geq \sqrt{\frac{8 b \log(1/\alpha)}{n}} +
        \frac{5 \log(1/\alpha)}{n} \right ) \\
        &\quad\leq \PP_\Q \left ( \frac{1}{n} \sum_{i=1}^n \ell_{i, \Q} -
        \EE_\Q[\ell_{1, \Q}] \geq \sqrt{\frac{8 b \log(1/\alpha)}{n}} +
        \frac{4 \log(1/\alpha)}{ n} \right ) + \alpha\\
        &\quad\leq 2\alpha,
    \end{align}
    where we used \eqref{eq:ci proof upper concentration} in the first
    inequality, and the combination of
    \cref{lemma:sub-exponential-log-wealth-increments,lemma:root-n confidence intervals of sub-exp} in the second.

    Now, analyzing concentration of $\frac{1}{n} \sum_{i=1}^n \ell_i -
    \EE_\Q[\ell_{1, \Q}]$ from below, we have that for any $\eps > 0$,
    \begin{align}
        &\PP_\Q \left ( \EE_\Q[\ell_{1, \Q}] - \frac{1}{n} \sum_{i=1}^n
        \ell_i \geq \eps \right )\\
        &\quad= \PP_\Q \left ( \EE_\Q[\ell_{1, \Q}] - \max_{\bbet \in
            \simplex}
        \frac{1}{n} \sum_{i=1}^n \logp{\bbet^\trans \bE_i} + \max_{\bbet \in \simplex}
        \frac{1}{n} \sum_{i=1}^n \logp{\bbet^\trans \bE_i} - \frac{1}{n} \sum_{i=1}^n
        \ell_i \geq \eps \right ) \\
        &\quad\leq \PP_\Q \left ( \EE_\Q[\ell_{1, \Q}] - \frac{1}{n}
        \sum_{i=1}^n \ell_{i, \Q} + \frac{R_n}{n} \geq \eps \right ),
    \end{align}
    where in the inequality, we used both the regret bound and the trivial bound\\
    $\max_{\bbet \in \simplex}\sum_{i=1}^n \logp{\bbet^\trans \bE_i} \geq \sum_{i=1}^n
    \ell_{i, \Q}$. Setting $\eps \coloneqq \sqrt{8b \log(1/\alpha) / n} + 4
    \log(1/\alpha) / n + R_n / n$, we have
    \begin{align}
        &\PP_\Q \left ( \EE_\Q[\ell_{1, \Q}] - \frac{1}{n} \sum_{i=1}^n
        \ell_i \geq \sqrt{\frac{8b\log(1/\alpha)}{n}} +
        \frac{4\log(1/\alpha)}{n} + \frac{R_n}{n} \right ) \\
        &\quad\leq \PP_\Q \left ( \EE_\Q[\ell_{1,\Q}] - \frac{1}{n}
        \sum_{i=1}^n \ell_{i, \Q} \geq \sqrt{\frac{8b\log(1/\alpha)}{n}} +
        \frac{4\log(1/\alpha)}{n} \right ) \leq \alpha,
    \end{align}
    which completes the proof.
\end{proof}

\begin{lemma}\label{lemma:exponential concentration for bandit sequence - centered with independent copies}
  Let $\bE_1, \dots, \bE_n$ be $(d+1)$-vectors of $\Pe$-values satisfying \cref{assumption:class of eprocesses}. For any $\eps > 0$,
  \begin{equation}
  \PP_\Q \Paren{\frac{1}{n}\sum_{i=1}^n \Paren{\ell_{i,\Q}(A_i) - \EE_\Q [\ell_{i,\Q}(A_i)]} \geq \eps} \leq Kn \left ( \exp \left \{ - \frac{\eps^2n }{8bK^2} \right \} + \exp \left \{ -\frac{\eps n }{4K} \right \} \right ).
  \end{equation}
  Similarly, we have that for any $\epsilon > 0$,
  \begin{equation}
     \PP _{\Q}\Paren{\frac{1}{n}\sum_{i=1}^n \Paren{\E_{\Q}\Brac{\ell_{i, \Q}(A_i)} - \ell_{i, \Q}(A_i)} \geq \epsilon} \leq Kn \Paren{\exps{-\frac{\epsilon^2 n}{8bK^2}} + \exps{- \frac{\epsilon n}{4K}}}.
  \end{equation}
\end{lemma}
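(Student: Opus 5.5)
We decompose the centered sum arm by arm and reduce each arm to a concentration statement for an i.i.d.\ sequence, using \cref{lem:concentration-sub-exponential-rvs} together with \cref{lemma:sub-exponential-log-wealth-increments}. Write
\begin{equation}
\frac{1}{n}\sum_{i=1}^n \Paren{\ell_{i,\Q}(A_i) - \EE_\Q[\ell_{i,\Q}(A_i)]} = \frac{1}{n}\sum_{a=1}^K \sum_{i=1}^n \Ind{A_i = a}\Paren{\ell_{i,\Q}(a) - \mu_a},
\end{equation}
where $\mu_a \coloneqq \EE_\Q[\ell_{1,\Q}(a)]$. Here $\EE_\Q[\ell_{i,\Q}(A_i)]$ is read as $\mu_{A_i}$, i.e.\ the expectation with the arm frozen, as in the allocation regret. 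If the left-hand side is at least $\eps$, then for some arm $a$ the inner sum is at least $\eps n/K$. A union bound over $a\in\Acal$ produces the factor $K$.

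For a fixed arm, the partial-information structure matters. Since $A_i$ is $\Hcal_{i-1}$-measurable and $\bE_i(a)$ is independent of $\Hcal_{i-1}$, the contributions of arm $a$ behave like an i.i.d.\ sequence sampled in order. Formally, let $X_j(a)$ denote the centered log-increment at the $j$-th pull of arm $a$. Then $(X_j(a))_{j\ge1}$ has the same law as an i.i.d.\ sequence of copies of $\ell_{1,\Q}(a)-\mu_a$; this is the standard ``stack of rewards'' coupling of \citet[\S 4.6]{lattimore2020bandit}. The inner sum then equals $\sum_{j=1}^{N_a(n)} X_j(a)$, where $N_a(n)\in\{1,\dots,n\}$ is random. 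A second union bound over the possible values $s\in\{1,\dots,n\}$ of $N_a(n)$ gives
\begin{equation}
\PP_\Q\Paren{\sum_{i=1}^n \Ind{A_i=a}(\ell_{i,\Q}(a)-\mu_a)\ge \eps n/K} \le \sum_{s=1}^n \PP_\Q\Paren{\frac{1}{s}\sum_{j=1}^s X_j(a) \ge \frac{\eps n}{Ks}}.
\end{equation}
This step is where the factor $n$ comes from.

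By \cref{lemma:sub-exponential-log-wealth-increments}, each $X_j(a)$ has its MGF bounded by $b$ on $[-1,1]$, so \cref{lem:concentration-sub-exponential-rvs} applies with deviation $\eps' = \eps n/(Ks)$. The resulting bound is
\begin{equation}
\exps{-\frac{\eps^2 n^2}{8bK^2 s}} + \exps{-\frac{\eps n}{4K}}.
\end{equation}
Since $s\le n$, the first term is at most $\exps{-\eps^2 n/(8bK^2)}$. Summing over at most $n$ values of $s$ and $K$ arms yields the claimed bound $Kn(\cdots)$.

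The lower-tail statement follows identically by applying \cref{lem:concentration-sub-exponential-rvs} to $-X_j(a)$. This is legitimate because the MGF bound of \cref{lemma:sub-exponential-log-wealth-increments} holds for $\theta\in[-1,1]$, which is symmetric.

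The main obstacle is making the coupling rigorous: the event $\{N_a(n)=s\}$ depends on the other arms and on past data, so independence of the $X_j(a)$ from the stopping index cannot be used directly. The union bound over $s$ sidesteps this, because it only requires the marginal law of the first $s$ stacked rewards. I would justify that law by constructing the stacked sequence explicitly from the i.i.d.\ draws: the $j$-th pull of arm $a$ happens at an $\Hcal$-stopping time $\sigma_j$, and $\bE_{\sigma_j}(a)$ is independent of $\Hcal_{\sigma_j-1}$ with the common law, by the strong Markov property for i.i.d.\ sequences. By induction on $j$, the variables $X_1(a),\dots,X_s(a)$ are i.i.d.\ on the event that arm $a$ is pulled at least $s$ times. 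Off that event, we can extend the sequence with independent fresh copies.
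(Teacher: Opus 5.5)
Your proposal is correct and follows essentially the same route as the paper. Both arguments split the centered sum by arm with a union bound over $a\in\Acal$ at threshold $\eps n/K$, replace the arm-$a$ rewards by a stacked i.i.d.\ sequence, take a union bound over the values $s\in\{1,\dots,n\}$ of $N_a(n)$, and finish with a sub-exponential Chernoff bound, flipping the sign of $\theta$ for the lower tail. The differences are cosmetic: you apply \cref{lem:concentration-sub-exponential-rvs} at sample size $s$ and deviation $\eps n/(Ks)$ instead of redoing the Chernoff step with $\theta$ tuned to $n$ and $s\le n$. Your stopping-time construction of the stacked sequence is also a more careful justification than the paper's claim that the arm-$a$ sum has the same law as a sum of independent copies up to $N_a(n)$, which holds only in the coupled sense.
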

\begin{proof}
  Fix $\eps > 0$. For each $a \in \Acal$, define $\Ncal_a(n) := \{ i \in [n] : A_i = a \}$ to be the set of indices for which arm $a$ was pulled.
  First, note that $\sum_{i=1}^n \Paren{\ell_{i,\Q}(A_i) - \EE_\Q \Brac{\ell_{i,\Q}(A_i)}}$ can be re-written as
  \begin{equation}
    \sum_{i=1}^n \Paren{\ell_{i,\Q}(A_i) - \EE_\Q \Brac{\ell_{i,\Q}(A_i)}} = \sum_{a \in \Acal} \sum_{j \in \Ncal_a(n)} \Paren{\ell_{j,\Q}(a) - \EE_\Q[\ell_{1,\Q}(a)]}.
  \end{equation}
  For each $a \in \Acal$, let $\widetilde \ell_{1,\Q}(a), \dots, \widetilde \ell_{n, \Q}(a)$ be independent copies of $\ell_{1,\Q} (a)$.  Since $\sum_{j \in \Ncal_a(n)} \left (\ell_{\jQ}(a) - \EE_\Q[\ell_{\jQ}(a)] \right )$ has the same marginal distribution as $\sum_{j=1}^{N_a(n)} \left (\widetilde \ell_{\jQ}(a) - \EE_\Q[\widetilde \ell_{1,Q}(a)] \right )$,
  it holds that
  \begin{align}
    \PP_\Q \Paren{\frac{1}{n}\sum_{i=1}^n \Paren{\ell_{i,\Q}(A_i) - \EE_\Q [\ell_{i,\Q}(A_i)]} \geq \eps} &\leq \sum_{a \in \Acal} \PP_\Q \Paren{ \sum_{j \in \Ncal_a(n)} \Paren{\ell_{j,\Q}(a) - \EE_\Q [\ell_{j,\Q}(a)]} \geq \frac{\eps}{K}}\\
&= \sum_{a\in \Acal}\PP_\Q \Paren{\frac{1}{n}\sum_{j=1}^{N_a(n)} \Paren{\widetilde \ell_{j, \Q}(a) - \EE_\Q [\widetilde \ell_{1,\Q}(a)]} \geq \frac{\eps}{K}}\\
&\leq \sum_{a\in \Acal} \sum_{s=1}^n \PP_\Q \Paren{\frac{1}{n}\sum_{j=1}^{s} \Paren{\widetilde \ell_{j, \Q}(a) - \EE_\Q [\widetilde \ell_{1,\Q}(a)]} \geq \frac{\eps}{K}}
\end{align}
Now, using sub-exponentiality of the log-wealth increments (\cref{lemma:numeraire portfolio log wealths are sub exponential}), we have that for any $s \in [n]$ and $\theta \in [0,1/2]$,
\begin{align}
  \PP_\Q \left ( \frac{1}{n} \sum_{j=1}^s \Paren{\widetilde \ell_{j,\Q}(a) - \EE_\Q [\widetilde \ell_{1,\Q}(a)]} \geq \frac{\eps }{K} \right ) &= \PP_\Q \left ( \exp \left \{ \theta \sum_{j=1}^s \Paren{\widetilde \ell_{j,\Q}(a) - \EE_\Q [\widetilde \ell_{1,\Q}(a)]} \right \} \geq \exp \left \{ \frac{ \theta n\eps }{K} \right \} \right ) \\
    &\leq \exp \left \{ - \frac{\theta n \eps}{K} \right \} \prod_{j=1}^s \EE_\Q
    \left [ \exp \left \{ \theta \left ( \widetilde \ell_{j,\Q}(a) - \EE_\Q
    [\widetilde \ell_{1,\Q}(a)] \right ) \right \} \right ] \\
    &\leq \exps{2 b s \theta^2 - \frac{\theta n \epsilon}{K}} \\
  &\leq \exp \left \{ 2 b n \theta^2 - \frac{\theta n \eps }{K} \right \}.
\end{align}
Taking $\theta = \frac{\eps}{4bK} \land \frac{1}{2}$, we have that
\begin{equation}
  \exp \left \{ 2bn\theta^2 - \frac{\theta n \eps }{K} \right \} \leq \exp \left \{ - \frac{\eps^2n }{8bK^2} \right \} + \exp \left \{ -\frac{\eps n }{4K} \right \}.
\end{equation}
Putting all of the previous inequalities together, we obtain
\begin{equation}
  \PP_\Q \Paren{\frac{1}{n}\sum_{i=1}^n \Paren{\ell_{i,\Q}(A_i) - \EE_\Q [\ell_{i,\Q}(A_i)]} \geq \eps} \leq Kn \left ( \exp \left \{ - \frac{\eps^2n }{8bK^2} \right \} + \exp \left \{ -\frac{\eps n }{4K} \right \} \right ),
\end{equation}
which completes the proof of the first statement. The second follows similarly but with $\theta = - \left ( \frac{\eps}{4bK} \land \frac{1}{2} \right )$.
\end{proof}

\subsection{A variant of Doob's optional stopping theorem for supermartingales with bounded expected conditional increments}
\label{proof:optional_stopping-supermartingale}
The proof of \cref{proposition:stopping-timelower-bound} relies on an analogue of the optional stopping theorem due to Doob. However, the form of that theorem appearing in many standard textbooks~\citep{williams1991probability,durrett2019probability,grimmett2020probability} relies on certain assumptions that do not hold in the setting we consider, such as almost sure boundedness or uniformly bounded stopping times. As such, we require a variant of Doob's optional stopping theorem for supermartingales with conditionally bounded expected increments and with respect to potentially infinite stopping times. The result (\cref{lem:optional-stopping-supermartingale}) and its proof are routine but we provide them in full because we were not able to find them stated or proven in any standard probability reference.

In \cref{lem:bounded-expected-absolute-log-difference}), we demonstrate that  on the
boundedness of the expected absolute difference between a log-increment made up
of an arbitrary predictable portfolio and a log-increment with oracle access to
the log-optimal portfolio. These results are invoked in the proof of 
\cref{proposition:stopping-timelower-bound}.

\begin{lemma}
\label{lem:optional-stopping-supermartingale}
    Let $(\Omega, \Fcal, \P)$ be a filtered probability space with filtration $\Fcal$. Suppose that $(X_n)_{n \in \NN \cup \{0 \}}$ is a $\P$-supermartingale with respect to $\Fcal$ and that there exists $B > 0$ so that $\E_\P\Brac{\Abs{X_{n+1} - X_n} 
    \smvert \cF_n} \leq B$ $\P$-almost surely for every $n \in \NN \cup \{0 \}$. Furthermore, let $\tau$ be a
    stopping time such that $\E_{\P}\Brac{\tau} < \infty$. We then have that
    $X_\tau$ is integrable and that $\E_\P\Brac{X_\tau} \leq \E_\P\Brac{X_0}$.
\end{lemma}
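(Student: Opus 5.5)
The plan is the classical argument for Doob's optional stopping theorem with bounded expected increments (as in Williams, Theorem 10.10(c)), adapted to supermartingales and to conditional rather than almost-sure increment bounds. Write $D_n \coloneqq X_n - X_{n-1}$ for $n \in \NN$. Then
\begin{equation}
X_{\tau \wedge n} = X_0 + \sum_{k=1}^{n} D_k \1\{\tau \geq k\}.
\end{equation}
The stopped process $(X_{\tau\wedge n})_{n}$ is again a supermartingale, because $\{\tau \geq k\} \in \Fcal_{k-1}$ and
\begin{equation}
\E_\P[D_k \1\{\tau\geq k\} \mid \Fcal_{k-1}] = \1\{\tau \geq k\}\,\E_\P[D_k \mid \Fcal_{k-1}] \leq 0.
\end{equation}
Hence $\E_\P[X_{\tau\wedge n}] \leq \E_\P[X_0]$ for every $n$. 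Here I tacitly use $X_0 \in L^1$ and integrability of each $X_n$, which are part of the definition of a supermartingale.

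The key step is to build an integrable dominating variable. Set
\begin{equation}
Z \coloneqq |X_0| + \sum_{k=1}^\infty |D_k| \1\{\tau \geq k\},
\end{equation}
so that $|X_{\tau \wedge n}| \leq Z$ for all $n$. By monotone convergence and the tower property, using that $\{\tau\geq k\}$ is $\Fcal_{k-1}$-measurable,
\begin{equation}
\E_\P[Z] = \E_\P|X_0| + \sum_{k=1}^\infty \E_\P\bigl[\1\{\tau\geq k\}\,\E_\P[|D_k| \mid \Fcal_{k-1}]\bigr] \leq \E_\P|X_0| + B\sum_{k=1}^\infty \P(\tau \geq k) = \E_\P|X_0| + B\,\E_\P[\tau] < \infty.
\end{equation}
This is the only place where the conditional bound $\E_\P[|X_{n+1}-X_n| \mid \Fcal_n] \leq B$ enters. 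It is precisely what lets us avoid the almost-sure increment bound assumed in textbook versions.

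To conclude, note that $\E_\P[\tau]<\infty$ gives $\tau<\infty$ almost surely, so $X_{\tau\wedge n} \to X_\tau$ almost surely. Since $|X_\tau| \leq Z$, the limit $X_\tau$ is integrable. Dominated convergence then gives
\begin{equation}
\E_\P[X_\tau] = \lim_{n\to\infty} \E_\P[X_{\tau\wedge n}] \leq \E_\P[X_0].
\end{equation}

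I expect the main obstacle to be bookkeeping rather than any genuine difficulty. The points to check are measurability of $\{\tau\geq k\}$ with respect to $\Fcal_{k-1}$, and the justification for exchanging sum and expectation, which is monotone convergence since all summands are nonnegative. Both are routine.
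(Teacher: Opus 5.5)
Your proposal is correct and follows the paper's proof essentially step for step. Both arguments use the telescoping form of $X_{\tau\wedge n}$ and dominate it by $|X_0|$ plus the absolute increments weighted by $\{\tau\geq k\}$ (the paper writes $\{\tau>m\}$ with a shifted index), bound the expectation of that dominating variable by $B$ times the expected stopping time via the tower property, and conclude by dominated convergence. Your version also spells out two points the paper leaves implicit: that the stopped process is a supermartingale, and that $\tau<\infty$ almost surely so that $X_{\tau\wedge n}\to X_\tau$.
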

\begin{proof}[Proof of \cref{lem:optional-stopping-supermartingale}]
    We begin by writing $X_{\tau \land n}$ as a telescoping sum
    \begin{equation}
        X_{\tau \land n} = X_0 + \sum_{m=0}^{\tau \land (n-1)} (X_{m+1} - X_m) = X_0 + \sum_{m=0}^{n-1} (X_{m + 1} - X_m)\1 \{ \tau > m \}.
    \end{equation}
    Applying the triangle inequality, we have that $|X_{\tau \land n}|$ (and hence $X_{\tau \land n}$) is upper bounded by
    \begin{equation}
        X_{\tau \land n} \leq |X_0| + \sum_{m=0}^{n-1} |X_{m+1} - X_m| \1 \{ \tau > m \} \leq |X_0| + \sum_{m=0}^{\infty} |X_{m+1} - X_m| \1 \{ \tau > m \}.
    \end{equation}
    Call the right-most expression $Y := |X_0| + \sum_{m=0}^\infty |X_{m+1} - X_m| \1 \{\tau > m\}$.
    We now show that $Y$ is $\P$-integrable, after which we will apply the dominated convergence theorem with $n$ tending to $\infty$. Indeed, by linearity of expectations and an application of Tonelli's theorem,
    \begin{equation}
        \EE_\P \left [ Y \right ] = \EE_\P[|X_0|] + \sum_{m=0}^\infty \EE_\P \left [|X_{m+1} - X_m| \1 \{ \tau > m \} \right ].\label{eq:proof-optional-stopping-expectation-upper-bound}
    \end{equation}
    Notice that $\Set{\tau > m} \in \cF_m$. By the law of total expectation
    and from the assumption that $\E_\P\Brac{\Abs{X_{m+1} - X_m} \smvert
    \cF_{m}} \leq B$, we have that
    \begin{align}
        \E_\P \Brac{\Abs{X_{m + 1} - X_m} \Ind{\tau > m}} 
        &= \E_\P \Brac{\Ind{\tau > m} \E\Brac{\Abs{X_{m + 1} - X_m} \smvert
        \cF_m}} \\
        &\leq \E_\P\Brac{\Ind{\tau > m} B} \\
        &= B \PP_\P\Paren{\tau > m} \mper
    \end{align}
    Applying this upper bound to 
    \eqref{eq:proof-optional-stopping-expectation-upper-bound} we can see how
    \begin{align}
        \E_\P\Brac{Y} &\leq \E_\P\Brac{\Abs{X_0}} + B
        \sum_{m=0}^\infty \PP_\P(\tau > m) = \E_{\P}\Brac{\Abs{X_0}} + B\E_\P\Brac{\tau} < \infty \mcom
    \end{align}
    where the finiteness of the upper bound follows from the integrability of
    $X_0$ and our assumption that $\E_\P\Brac{\tau} < \infty$. Since $X_{\tau \land n} \leq Y$ with $Y$ integrable, we apply the dominated convergence theorem to obtain
    \begin{equation}
        \lim_{n \to \infty} \E_{\P}\Brac{X_{\tau \wedge n}} =
        \E_{\P}\Brac{X_\tau} \mper
    \end{equation}
    Since $\E_\P\Brac{X_{\tau \wedge n}} \leq
    \E_\P\Brac{X_0}$ for all $n \in \NN$, we conclude that $\E_{\P}\Brac{X_\tau} \leq
    \E_{\P}\Brac{X_0}$, completing the proof.
\end{proof}

\begin{lemma}
\label{lem:bounded-expected-absolute-log-difference}
    Let $\infseqn{\bE_n(a)}$ be $(d+1)$-vectors of $\cP$-e-values that satisfy
    \cref{assumption:class of eprocesses} with the constant $b > 1$ for each arm
    $a \in \cA$. Furthermore, define the log-optimal portfolio under arm $a \in
    \cA$ as
    $\optbbet(a) \coloneqq \argmax_{\bbet \in \simplex}
    \E_\Q\Brac{\logp{\bbet^\trans \bE_1}}$, and assume $A_n$ is $\cF_{n-1}$-measurable. Then, the following inequality
    holds
    \begin{equation}
        \E_\Q\Brac{\Abs{\logp{\optbbet(A_n)^\trans \bE_n(A_n)} -
        \logp{\optbbet(\optarm)^\trans \bE_n(\optarm)}} \smvert \cF_{n-1}} \leq \logp{b} \mper
    \end{equation}
    Moreover, the following inequality also holds
    \begin{equation}
        \E_\Q\Brac{\Abs{\logp{\optbbet(A_n)^\trans \bE_n(A_n)} -
        \logp{\optbbet(\optarm)^\trans \bE_n(\optarm)}}} \leq \logp{b} \mper
    \end{equation}
\end{lemma}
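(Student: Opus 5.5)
Throughout, fix $n$ and abbreviate $X \coloneqq \optbbet(A_n)^\trans \bE_n(A_n)$ and $Y \coloneqq \optbbet(\optarm)^\trans \bE_n(\optarm)$. The plan is to condition on $\cF_{n-1}$, so that $A_n$ is a fixed arm $a$. Since $\bE_n(\cdot)$ is independent of $\cF_{n-1}$, the conditional expectation then equals the unconditional quantity $\E_\Q\bigl[\bigl|\log(\optbbet(a)^\trans \bE_1(a)) - \log(\optbbet(\optarm)^\trans \bE_1(\optarm))\bigr|\bigr]$ evaluated at $a = A_n$. It therefore suffices to prove, for each deterministic $a \in \cA$, that $\E_\Q|\log X - \log Y| \le \log b$. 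The second (unconditional) inequality then follows by the tower property. The case $a = \optarm$ is trivial, since then $X = Y$.

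The structural inputs are the following. \cref{assumption:class of eprocesses} gives $0 < X, Y \le b$ almost surely; positivity holds $\Q$-a.s. because each optimal portfolio has finite expected log. The assumption also supplies a common $\tilde\bbet$ with $\tilde\bbet^\trans\bE(a) = 1$ for every $a$. Combined with the numeraire property used in \cref{lemma:arm-wise-numeraire-portfolio} (that is, \citet[Theorem 15.2.2]{cover1999elements}), this yields
\begin{equation}
  \E_\Q[1/X] \le 1, \qquad \E_\Q[1/Y] \le 1, \qquad \E_\Q[\log X] \ge 0, \qquad \E_\Q[\log Y] \ge 0 .
\end{equation}

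The first step is the pathwise bound
\begin{equation}
  |\log X - \log Y| = \log \max(X,Y) - \log\min(X,Y) \le \log b - \log \min(X,Y).
\end{equation}
This reduces the claim to showing $\E_\Q[-\log\min(X,Y)] \le 0$, i.e.\ $\E_\Q[\max(\log(1/X), \log(1/Y))] \le 0$.

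The second step controls this maximum, and I expect it to be the main obstacle. The naive routes lose a constant. Jensen's inequality with $1/\min \le 1/X + 1/Y$ gives only $\log 2$, and the inequality $\min(X,Y) \ge XY/b$ gives another $\log b$. To obtain exactly $\log b$, I would first refine the case split. On the event $\{X \ge Y\}$ the integrand equals $\log X - \log Y \le \log b - \log Y$. On $\{X < Y\}$ it equals $\log Y - \log X \le \log b - \log X$. Rather than bounding each piece crudely, I would then use the first-order optimality (numeraire) conditions of $\optbbet(a)$ and $\optbbet(\optarm)$ jointly with $\E_\Q[\log X] \ge 0$ and $\E_\Q[\log Y] \ge 0$. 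The aim is to show that $\E_\Q[\log Y \,\1\{X \ge Y\} + \log X\,\1\{X<Y\}] \ge 0$, the natural route being to compare $\min(X,Y)$ with a numeraire-dominated portfolio on the \emph{pair} of arms.

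If this sharpened comparison cannot be made to work in full generality, the fallback is the bound obtained from the same argument with $\log(2b)$ in place of $\log b$. For the use in \cref{proposition:stopping-timelower-bound}, only finiteness of the constant $B$ in \cref{lem:optional-stopping-supermartingale} is needed, so that fallback would still suffice there.
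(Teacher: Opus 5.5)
\textbf{The reduction step cannot be completed, because the lemma with constant $\log b$ is false.} Your target $\E_\Q[\log\min(X,Y)]\ge 0$ fails, and so does the stated bound.

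Take \cref{example:two-sided mean testing} with $\mu_0=1/2$, so that $b=2$ and $\tilde\bbet=(1/2,1/2)$. Use two arms with outcomes $Y_n(1)\sim\mathrm{Bern}(1/2)$ and $Y_n(2)\sim\mathrm{Bern}(0.95)$. The log-optimal bets are $\lambda=1/2$ and $\lambda=0.95$. So with $A_n=1$, in your notation $X\equiv 1$. Your increment $Y=\optbbet(2)^\trans\bE_n(2)$ equals $1.9$ with probability $0.95$ and $0.1$ with probability $0.05$. Since $\E_\Q[\log Y]\approx 0.495>0=\E_\Q[\log X]$, indeed $\optarm=2$. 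Then
\[
\E_\Q\Abs{\log X-\log Y}=0.95\log(1.9)+0.05\log(10)\approx 0.725>\log 2\approx 0.693,
\]
and $\E_\Q[\log\min(X,Y)]=0.05\log(0.1)<0$. Because $X$ is constant here, no combination of first-order conditions or two-arm numeraire comparison can restore $\E_\Q[\log\min(X,Y)]\ge 0$.

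The paper's own proof does not get around this. It only shows $-\log b\le \E_\Q[\log X-\log Y\mid\cF_{n-1}]\le\log b$, which bounds the absolute value of the conditional mean. It then reads this as a bound on the conditional mean of the absolute value, which does not follow; the example shows that conclusion is false.

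\textbf{Your fallback is correct and is the right fix.} The argument runs as follows:
\begin{enumerate}
\item The pathwise bound $\Abs{\log X-\log Y}\le \log b-\log\min(X,Y)$ holds.
\item Jensen's inequality with $1/\min(X,Y)\le 1/X+1/Y$ bounds $\E_\Q[-\log\min(X,Y)]$ by $\log 2$.
\item The inputs $\E_\Q[1/X]\le 1$ and $\E_\Q[1/Y]\le 1$ come from the numeraire property applied to $\tilde\bbet$.
\end{enumerate}
This gives $\log(2b)$ for every coupling of the arms. It holds conditionally on $\cF_{n-1}$ by freezing $A_n$ and using independence of $\bE_n$, and unconditionally by the tower property.

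As you note, \cref{lem:optional-stopping-supermartingale} needs only some finite $B$, so this constant is all that \cref{proposition:stopping-timelower-bound} requires. You should state and prove the $\log(2b)$ version as the lemma itself, rather than treat it as a fallback to a sharp $\log b$ bound that does not exist.
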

\begin{proof}[Proof of \cref{lem:bounded-expected-absolute-log-difference}]
    We first recall \cref{assumption:class of eprocesses} which states that 
    for any arm $a \in \cA$ the multiplicative increments are almost surely
    upper bounded, $\sup_{\bbet \in \simplex} \Set{\bbet^\trans \bE_n(a)} \leq
    b$, and that there exists a
    $\tbbet \in \simplex$ such that under all $\Q \in \cQ$ and for all $n \in \N$ 
    the following holds: 
    $\tbbet^\trans \bE_n(a) = 1$ almost surely. Notice how this implies that
    $\logp{\tbbet^\trans \bE_n(a)} = 0$, which in turn allows us to conclude
    that $\E_\Q\Brac{\logp{\tbbet^\trans \bE_n(a)} \smvert \cF_{n-1}} = 0$. By definition of 
    $\optbbet(a)$, for any $a \in \cA$, and independence of the vectors of $\Pe$-values  we have the following inequality
    \begin{equation}
        \E_\Q\Brac{\logp{\optbbet(a)^\trans \bE_n(a)} \smvert \cF_{n-1}} \geq
        \E_\Q\Brac{\logp{\tbbet^\trans \bE_n(a)} \smvert \cF_{n-1}} = 0 \mper
    \end{equation}

    With this inequality in hand, we upper bound the expectation
    $\E_\Q[\logp{\optbbet(A_n)^\trans \bE_n(A_n)} -
    \logp{\optbbet(\optarm)^\trans \bE_n(\optarm)} ~|~ \cF_{n-1}]$ as follows:
    \begin{align}
        \E_\Q\Brac{\logp{\optbbet(A_n)^\trans \bE_n(A_n)} -
        \logp{\optbbet(\optarm)^\trans \bE_n(\optarm)} \smvert \cF_{n-1}} 
        &\leq \logp{b} - \E_\Q\Brac{\logp{\tbbet^\trans
        \bE_n(\optarm)} \smvert \cF_{n-1}}\\
        &= \logp{b} \mper
    \end{align}
    We now proceed to lower bound the expectation as:
    \begin{align}
        \E_\Q\Brac{\logp{\optbbet(A_n)^\trans \bE_n(A_n)} -
        \logp{\optbbet(\optarm)^\trans \bE_n(\optarm)} \smvert \cF_{n-1}} 
        &\geq \E_\Q\Brac{\logp{\tbbet^\trans \bE_n(A_n)} \smvert \cF_{n-1}} - \logp{b} \\
        &\geq -\logp{b} \mper
    \end{align}
    Hence, putting the above steps together we can conclude that 
    \begin{equation}
        \E_\Q\Brac{\Abs{\logp{\optbbet(A_n)^\trans \bE_n(A_n)} -
        \logp{\optbbet(\optarm)^\trans \bE_n(\optarm)}} \smvert \cF_{n-1}} \leq \logp{b} \mper
    \label{eq:proof-bounded-conditional-expected-log-increments}
    \end{equation}
    Lastly, the law of iterated expectations in conjunction with \eqref{eq:proof-bounded-conditional-expected-log-increments} give us that 
    \begin{align}
        &\E_{\Q}\Brac{\Abs{\logp{\optbbet(A_n)^\trans \bE_n(A_n)} - \logp{\optbbet(\optarm)^\trans \bE_n(\optarm)}}}\\
        &= \E_{\Q}\Brac{\E\Brac{\Abs{\logp{\optbbet(A_n)^\trans \bE_n(A_n)} - \logp{\optbbet(\optarm)^\trans \bE_n(\optarm)}} \smvert \cF_{n-1}}} \\
        &\leq \E_{\Q}\Brac{\logp{b}}\\ 
        &= \logp{b} \mcom
    \end{align}
    completing the proof.
\end{proof}

\section{Familiar Sequential Testing Problems Satisfying \texorpdfstring{\cref{assumption:class of eprocesses}}{Assumption~\ref{assumption:class of eprocesses}}}

In \cref{section:prelim-test supermartingales and e-processes} we introduced test $\cP$-supermartingales under the multi-armed data collection protocol, and we showed how the two-sided bounded mean testing problem can be mapped into this multi-armed setting. In this section we show how two other testing problems---that have been of interest in the literature---can be instantiated under the multi-armed data collection protocol, and we show how these testing problems satisfy \cref{assumption:class of eprocesses}.

\label{section:multi-armed-testing-examples}
\begin{example}[One-sided bounded mean testing]\label{example:one-sided mean testing}
    Suppose that $\infseqn{(Y_n(1), \dots, Y_n(K))}$ are sampled $\iid$ and are
    supported on $[0, 1]^K$. The statistician is 
    interested in testing the following one-sided global null $\cP^{\leq}$ versus alternative
    $\cQ^{>}$: 
    \begin{equation}
        \cP^{\leq} = \Set{\P \svert \forall a \in \cA,~ \E_{\P}\Brac{Y(a)} \leq \mu_0}
        \quad\mathrm{versus}\quad \cQ^{>} = \Set{\P \svert \exists a \in
        \cA,~ \E_{\P}\Brac{Y(a)} > \mu_0} \mcom
    \end{equation}
    for some $\mu_0 \in [0,1]$. In each round $n \in \N$ the statistician selects which arm
    $A_n$ to pull (i.e., which random variable to observe) in such a way that $\infseqn{A_n}$
    is an $\cH$-predictable sequence. The statistician constructs the 
    following test $\cP^{\leq}$-supermartingale:
    \begin{equation}
        W_n^{\leq} \coloneqq \prod_{i=1}^n \Brac{1 + \bet_i \Paren{\frac{
        Y_i(A_i)}{\mu_0} - 1}} \mcom
    \end{equation}
    where $\infseqn{\bet_n}$ is any $[0,1]$-valued predictable sequence. Under
    this setting the bound on the multiplicative increments is $b = 1/\mu_0$ and
    $\tilde{\bet} = 0$.
\end{example}

\begin{example}[Testing equality of bounded tuples]\label{example:bounded tuple equality testing}
    Suppose that $\infseqn{((X_n(1), Y_n(1)), \dots, (X_n(K), Y_n(K)))}$ is a
    sequence of $\iid$ tuples and that all the random variables are supported in
    $[0,1]$. Define $D_n(a) \coloneqq X_n(a) - Y_n(a)$ for each $a \in \cA$ and
    $n \in \N$. The equality global null $\cP^{(D=)}$ versus alternative
    $\cQ^{(D\neq)}$ are given by:
    \begin{equation}
        \cP^{(D=)} \coloneqq \Set{\P \svert \forall a \in \cA,~
        \E_{\P}\Brac{D(a)} = 0}
        \quad\mathrm{versus}\quad \cQ^{(D \neq)} \coloneqq \Set{\P \svert \exists
            a \in \cA,~ \E_{\P}\Brac{D(a)} \neq 0} \mper
    \end{equation}
    After applying the transformation
    $Z_n(A_n) \coloneqq (D_n(A_n) + 1) / 2 \in [0,1]$ on the observed random
    tuples, the statistician constructs the following test
    $\cP^{(D=)}$-supermartingale:
    \begin{equation}
        \prod_{i=1}^n \Brac{(1 - \bet_i) \frac{(1 - Z_i(A_i))}{1/2} + \bet_i
        \frac{Z_i(A_i)}{1/2}} \mcom
    \end{equation}
    where $\infseqn{A_n}$ is a predictable sequence corresponding to the arm
    pulls and $\infseqn{\bet_n}$ is a $[0,1]$-valued predictable sequence. In
    this scenario we can see how $\tilde{\lambda} = 1/2$ and that the bound on
    the multiplicative increments is $b = 2$. Even though this problem can be reduced to a special case of the bounded mean testing problem, we still highlight it as the literature contains results on asymptotic growth rates and bounds on expected rejection times~\citep{chugg2023auditing,chen2025optimistic,shekhar2023nonparametric,podkopaev2023sequentialKernelized,podkopaev2023sequentialTwoSample}.
\end{example}

\end{document}